\def\bfI{{\mathbf I}}
\def\bfV{{\mathbf V}}
\newcommand{\lsem}{[\![} \newcommand{\rsem}{]\!]}
\def\V{{\bf V}}
\def\G{{\mathfrak{G}}}
\def\P{{\mathcal{P}}}
\def\I{{\mathcal{I}}}
\def\J{{\mathcal{J}}}
\newcommand{\C}{\mathbb{C}}
\newcommand{\R}{\mathbb{R}}
\title{Identification of Successive ``Unobservable''  Cyber Data Attacks in Power Systems Through Matrix Decomposition} 
\author{
\IEEEauthorblockN{Pengzhi Gao, \textit{Student Member, IEEE}, Meng Wang, \textit{Member, IEEE,}   Joe H. Chow,  \textit{Fellow, IEEE,}  \\   Scott G. Ghiocel, \textit{Member, IEEE,} Bruce Fardanesh, \textit{Fellow, IEEE,} \\ George Stefopoulos, \textit{Member, IEEE}, and Michael P. Razanousky
}
 \thanks{P. Gao, M. Wang and  J. H. Chow are with the Dept. of Electrical, Computer, and Systems Engineering,   Rensselaer Polytechnic Institute, Troy, NY.  Email:   \{gaop, wangm7, chowj\}@rpi.edu.
 }
 \thanks{S. G. Ghiocel is with Exponent, New York, NY. Email: sghiocel@exponent.com.
 }
 \thanks{B. Fardanesh, and G. Stefopoulos are with New York Power Authority, White Plains, NY. Email: \{Bruce.Fardanesh, George.Stefopoulos\}@nypa.gov.
 }
 \thanks{M. P. Razanousky is with New York State Energy Research and Development Authority, Albany, NY. Email: mpr@nyserda.org.
 }
 \thanks{Partial and preliminary results have appeared in \cite{WGGCFSR14}.}
}
\newtheorem{theorem}{Theorem}
\newtheorem{lemma}{Lemma}
\begin{document} 

\maketitle \thispagestyle{empty} \pagestyle{empty}

\begin{abstract}
This paper presents a new framework of identifying a series of cyber data attacks on power system synchrophasor measurements. We focus on detecting ``unobservable'' cyber data attacks that cannot be detected by any existing method that purely relies on measurements received at one time instant. Leveraging the approximate low-rank property of phasor measurement unit (PMU) data, we formulate the identification problem of successive unobservable cyber attacks as a matrix decomposition problem of a low-rank matrix plus a transformed column-sparse matrix. We propose a convex-optimization-based method and provide its theoretical guarantee in the data identification. Numerical experiments on actual PMU data from the Central New York power system and synthetic data are conducted to verify the effectiveness of the proposed method.    
\end{abstract}
\noindent\begin{IEEEkeywords}
cyber data attacks, low-rank matrix,  matrix decomposition, synchrophasor measurements.
\end{IEEEkeywords}

\section{Introduction}\label{sec:intro}

\IEEEPARstart{T}{he} integration of cyber infrastructures into future smart grids greatly enhances the monitoring, dispatch, and scheduling of power systems. Such integration, however, makes the power systems more susceptible to cyber attacks. It is reported that cyber spies have penetrated U.S. electrical grid \cite{meserve07}. Researchers  have also launched an experimental cyber attack that caused a generator to self-destruct \cite{HLEO09}. 
 
State estimation \cite{AE04} is a critical component of power system monitoring. System state is estimated based on the obtained measurements across the system. Bad data can affect the state estimation and mislead the system operator. Many efforts have been devoted to develop methods that can identify   bad data, see e.g., \cite{CA06,HSKF75,MG83,MS71,XWCT13}. 
  
Cyber data attacks (firstly studied in \cite{LNR11}) can be viewed as ``the worst interacting bad data injected by an adversary''\cite{KJTT10}. Malicious intruders with system configuration information can simultaneously manipulate   multiple  measurements so that these attacks cannot be detected by any bad data detector. Because the removal of affected measurements would make the system unobservable, these attacks are termed as ``unobservable attacks''\footnote{The term ``unobservable'' is used in this sense throughout the paper.} in \cite{KJTT10}.  

State estimation in the presence of cyber data  attacks has attracted much research attention recently  \cite{BRWKNO10,DS10,KJTT10,LEDEH14,LNR11,SJ13,STJ10,TKPC11}. Existing approaches include protecting a small number of key measurement units such that the intruders cannot inject unobservable attacks without hacking protected units  \cite{BRWKNO10,DS10,KP11}, as well as detectors designed for attacks in the observable regime \cite{KJTT10}. The research on the detection of unobservable attacks is still limited. Refs. \cite{SJ13,LEDEH14} proposed different methods to detect unobservable attacks in Supervisory Control and Data Acquisition (SCADA) system. The method in \cite{SJ13} relies critically on the assumption that the measurements at different time instants are i.i.d. samples of random variables. This assumption might not hold when the system is under disturbance. Ref. \cite{LEDEH14} focused on the scenarios that an intruder attacks a different set of measurements at each time instant, and no theoretical analysis of the detection performance is provided in  \cite{LEDEH14}.

This paper considers cyber data attacks to PMU measurements. It focuses on the case when an intruder injects unobservable data attacks to the same set of PMUs constantly. Because PMUs under attack do not provide any accurate measurement at any time instant to the operator, the attack identification in this case is very challenging and has not been addressed before. We propose a method that can identify the successive unobservable cyber data attacks and provide the theoretical guarantee even when the system is under disturbance. The intuition is that even though an intruder can constantly inject data attacks that are consistent with each other at each time instant, as long  as   the intruder does not know the system dynamics, one can identify the  attacks by comparing time series of different PMUs and locating the PMUs that exhibit abnormal dynamics.

Because PMU measurements are synchronized and correlated, the high-dimensional PMU data matrix exhibits low-rank property \cite{CXK13,DKM12,GWGC14,WGGCFSR14}. We formulate the identification problem as a matrix decomposition problem of a low-rank matrix plus a transformed column-sparse matrix. The matrix decomposition problem has attracted much research attention recently, see e.g., \cite{CLMW11,CSPW11,RFP10,XCS12}, and have wide applications in areas like Internet monitoring \cite{LCD04,MMG13,TJ03}, medical imaging \cite{FNDRL06,GCSZ11}, and image processing \cite{BJ03}. The situation that one component is a transformed column-sparse matrix, however, has not been  addressed before.

The contributions of this paper are threefold. (1) We  propose the idea of  exploiting   spatial-temporal correlations in PMU measurements to identify unobservable data attacks. (2) We formulate the identification problem into a matrix decomposition problem and propose a computationally efficient method that does not require the modeling of power system dynamics. (3) We provide theoretical guarantees of attack detection, as well as the general matrix decomposition problem. 

The rest of the paper is organized as follows. We formulate our problem and point out its connection to other applications in Section \ref{sec:model}. We describe our detection method and analyze its theoretical guarantee with both noiseless  (Section \ref{sec:method}) and noisy measurements (Section \ref{sec:method1}). Section \ref{sec:simu} records our numerical experiments. We conclude the paper in Section \ref{sec:con}. 
\section{Problem Formulation and Related Work} \label{sec:model}

\subsection{Low-rankness of PMU measurements}\label{sec:pmu}

Consider a $n$-bus power grid with PMUs installed on some buses. Let $p$ denote the total number of PMU channels that measure bus voltage   and line current phasors\footnote{In  three phase AC  systems, a phasor is defined as a complex number that represents both the magnitude and phase angle of the sinusoidal waveforms.}. Phasors are expressed in Cartesian coordinates throughout the paper. Matrix $M \in \C^{t\times p}$ contains the collected phasor measurements in $t$  synchronized time instants. $\bar{\J} \in \lsem p \rsem$ denotes the set of PMU channels that are under data attacks. The observed measurement matrix can be presented as 
\begin{equation}\label{eqn:M}
M=\bar{L}+\bar{D}+N, 
\end{equation}
where $\bar{L} \in \C^{t\times p}$ represents the actual phasors without data attacks, $\bar{D}\in \C^{t\times p}$ represents the additive errors introduced by an intruder, and $N$ represents the measurement noise. 

High-dimensional PMU data matrices exhibit low-rank property \cite{CXK13,DKM12,GWGC14,WGGCFSR14}. We analyzed actual PMU data from six multi-channel PMUs deployed in the Central New York (NY) Power System (Fig.~\ref{fig:NY}). Six PMUs measure twenty-three voltage and current phasors, and the data rate is  thirty samples per second per channel. 
Fig.~\ref{fig:current} shows the current magnitudes of PMU data in twenty seconds. An event occurs around $2.5$s.   The obtained data are collected into a $600 \times 23$ matrix.  Fig.~\ref{fig:svd} plots the singular values of the  
matrix  with the ten largest ones being 832.8, 194.8, 35.1, 18.1, 4.3, 2.5, 2.1, 1.3, 1.2, 0.5. Therefore, we can approximate the  $600 \times 23$ matrix by a low-rank matrix with little approximation error.  

\begin{figure}[h] 
\begin{center}
\includegraphics[height=0.43 \linewidth]{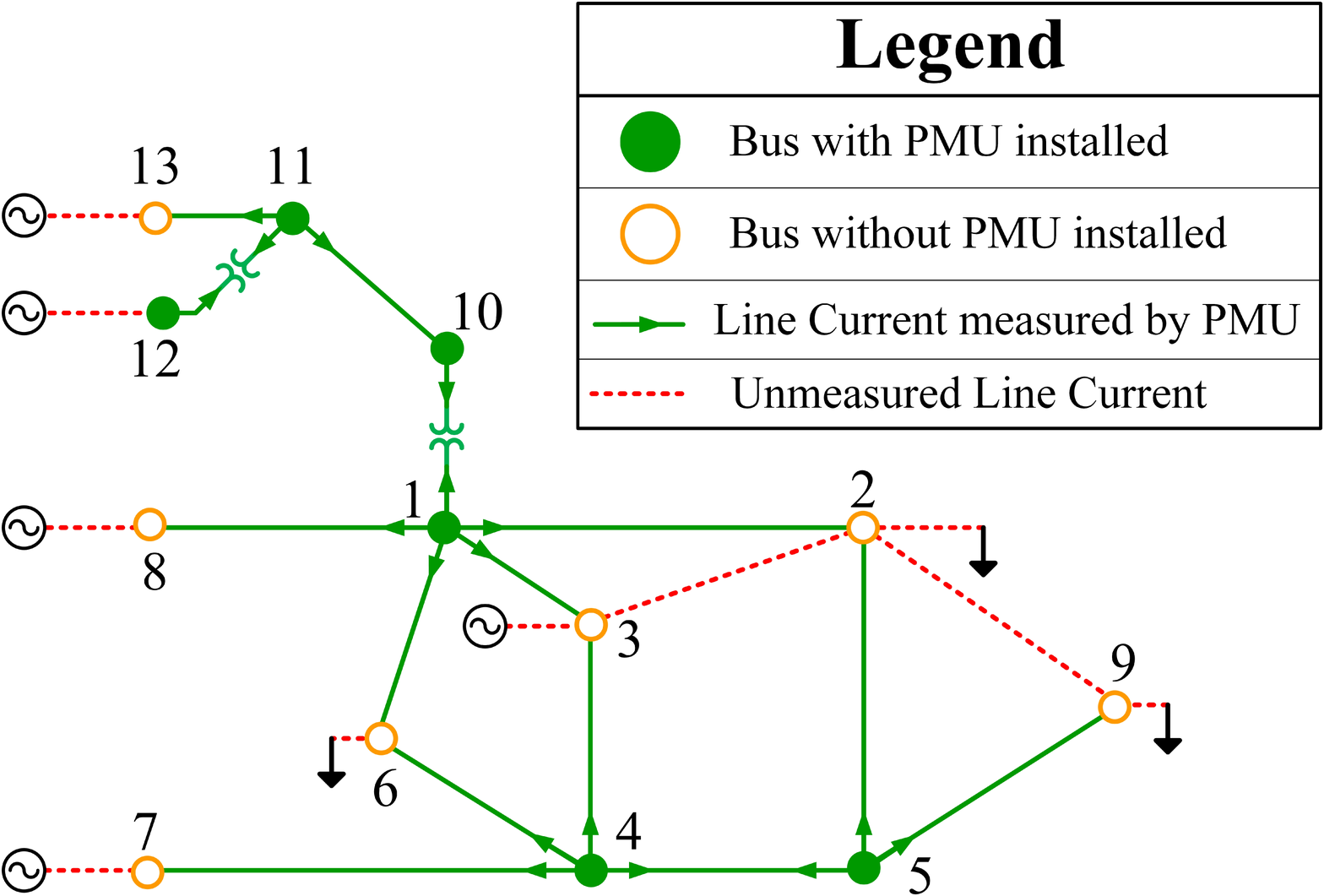}
\caption{PMUs in the Central NY Power System. (Circles and lines represent buses and transmission lines. A PMU measures the   voltage phasor and the incident   current phasors of the bus where it is located.)} \label{fig:NY}
\end{center}
\end{figure} 

\begin{figure}[h]
\begin{center}
\psfrag{Current Magnitude}[][][0.7]{Current magnitude}
\psfrag{Time(s)}[][][0.7]{Time(s)}
\vspace{-5mm} 
\includegraphics[height=0.28\linewidth]{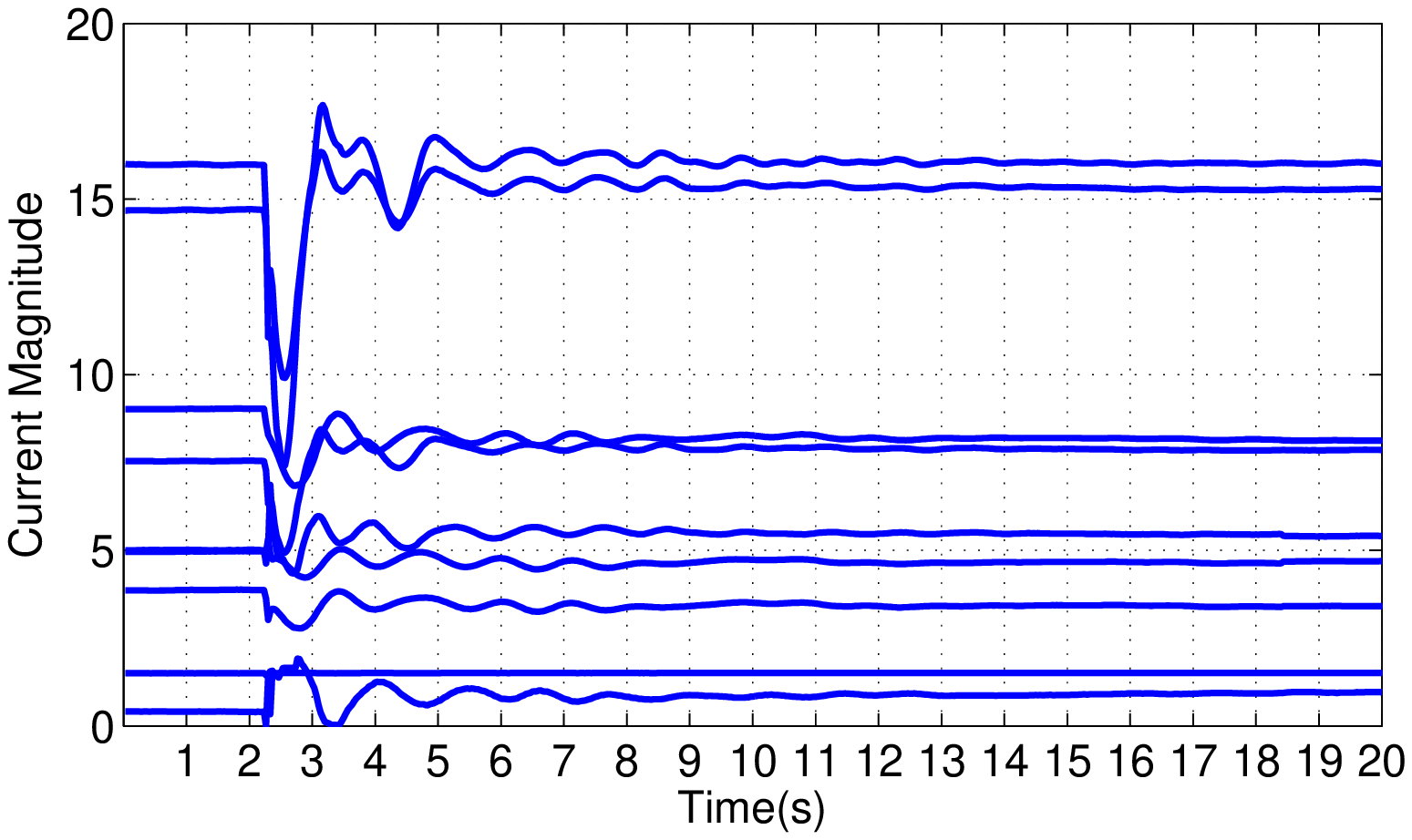} 
\caption{Visualization of Partial PMU data (Magnitude of nine current phasors)} \label{fig:current}\vspace{-5mm} 
\end{center}
\end{figure}

\begin{figure}[t] 
\begin{center}
\psfrag{Singular value index}[][][0.7]{Singular value index}
\psfrag{Singular values}[][][0.7]{Singular values}
\includegraphics[height=0.28\linewidth]{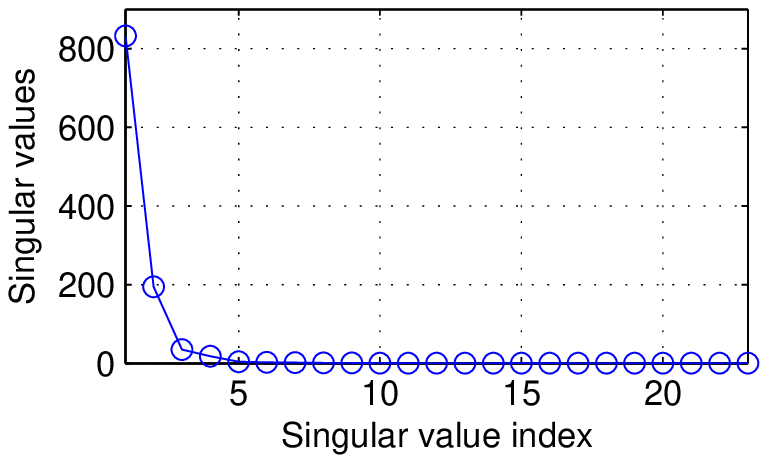}
\caption{Singular values of PMU data matrix in decreasing order} \label{fig:svd} 
\end{center}
\end{figure}

The Singular Value Decomposition (SVD) of $\bar{L}$ is
\begin{equation} 
\bar{L}=\bar{U}\bar{\Sigma}\bar{V}^ {\dagger},
\end{equation}
where $\bar{U}\in \C^{t\times r}$,   $\bar{\Sigma}\in \C^{r\times r}$,  $\bar{V}\in \C^{p \times r}$  ($r \ll t, p$). We assume throughout the paper that   nonzero columns of $\bar{D}$ do  not lie in the column space of $\bar{L}$ ($\bar{D} \neq \bar{U} \bar{U}^\dag \bar{D}$). It is a legitimate assumption when the intruders do not have full information about the system dynamics. 
The notations are summarized in Table \ref{tbl:notation}. Matrix $A$ is \textit{column-sparse} if it contains a small fraction of non-zero columns. We call the set of indices of nonzero columns the \textit{column support} of $A$.

\begin{table}[h]
\caption{Notations}
\begin{tabular}{|p{0.55in} p{2.71in}|}
\hline $A_i$, $A_{i,:}$ & the $i$th column  and the $i$th row of matrix $A$, respectively. \\
\hline $A_{\I}$ & the submatrix of $A$ with column indices in set $\I$. \\
\hline $A^{\ddagger}$, $A^{\dagger}$ & the conjugate and conjugate transpose matrix of $A$.\\
\hline $\P_{\I}(A)$ & matrix obtained from $A$ by setting $A_i$ to zero for all $i \notin \I$.  \\ 
\hline  $A \in \P_{\I}$ &  if and only if $\P_{\I}(A)=A$.\\
\hline  $\|A\|$, $\|A\|_{F}$ & the spectral and Frobenius norm of $A$, respectively. \\
\hline  $\|A\|_{*}$  &  the nuclear norm of $A$, which is the sum of singular values. \\ 
\hline $\|A\|_{1,2}$ &  the sum of $\ell_2$ norms of the columns of $A$. \\
\hline  $\|A\|_{\infty, 2}$ & the largest $\ell_2$ norm of the columns. \\
\hline $\P_{U} (A)$ & $:= U U^\dag A$, the projection of  $A$ onto the column space of $L$. \\
\hline  $\P_{V} (A)$ & $:= A V V^\dag$, the projection of  $A$ onto the row space.  \\
\hline  $\P_{T} (\cdot)$ & $:=\P_{U} (\cdot)+\P_{V} (\cdot) - \P_{U}\P_{V}(\cdot)$. \\
\hline   $\P_{U^\perp} (A)$ & $:= (I-U U^\dag)A$. \\
\hline $\P_{V^\perp} (A)$ & $:= A(I- V V^\dag)$. \\
\hline  $\P_{T^\perp} (A)$ & $:= \P_{U^\perp}\P_{V^\perp} (A)$.\\
\hline $A \in \P_{T}$ & if and only if $\P_{T}(A)=A$.\\
\hline $\I^c$ &   the complimentary set of set $\I$. \\
\hline
\end{tabular} 
\label{tbl:notation}
\end{table}

\subsection{Unobservable cyber data attacks and problem formulation}

We use bus voltage phasors as state variables, and let $X \in \C^{t\times n}$ contain the state variables at $t$ instants. We use the $\pi$ equivalent model to represent a transmission line (Fig.~\ref{fig:pi}). $Z^{ij}$ and $Y^{ij}$ denote the impedance and admittance of the transmission line between bus $i$ and bus $j$. Current $I^{ij}$ from bus $i$ to bus $j$ is related to bus voltage $V^i$ and $V^j$ by
\begin{equation} 
 I^{ij}=\frac{V^i- V^j}{Z^{ij}}+V^i \frac{Y^{ij}}{2}.
\end{equation}

\begin{figure}[h] 
\begin{center}
\includegraphics[height=0.2\linewidth]{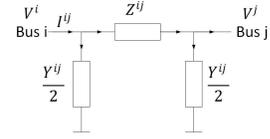}
\caption{$\pi$ model of a transmission line} \label{fig:pi} \vspace{-3 mm} 
\end{center}
\end{figure} 

We define $\bar{W} \in \C^{p \times n}$ as follows. If the $k$th PMU channel measures the voltage phasor of bus $j$,  $\bar{W}_{kj}=1$; if it measures the current phasor from bus $i$ to bus $j$, then $\bar{W}_{ki}=1/Z^{ij}+Y^{ij}/2$, $\bar{W}_{kj}=-1/Z^{ij}$; $\bar{W}_{kj}=0$ otherwise. The PMU measurements and the state variables are related by 
\begin{equation}\label{eqn:linear}
\bar{L}= X \bar{W}^T.
\end{equation}
The attack at time $t$, denoted by data injection $\bar{D}_{t,:}$, is called \textit{unobservable}\footnote{\cite{LNR11}  focuses on DC model where power measurements and state variables are approximately related by linear equations. Here PMU measurements and state variables are accurately related by linear equation (\ref{eqn:linear}).} if and only if
\begin{align}
& \bar{D}_{t,:}=c^t\bar{W}^T
\end{align}
holds for some nonzero row vector $c^t \in \C^{1\times n}$. In this case, 
\begin{align}
&\bar{L}_{t,:}+\bar{D}_{t,:}=(X_{t,:}+ c^t) \bar{W}^T,
\end{align}
and the operator would have the wrong impression that the state is $X_{t,:}+ c^t$. We focus on the cases that the attacks from time 1 to $t$ are all unobservable\footnote{ Our detection method can be extended to cases that both unobservable  and observable attacks exist. See the beginning of Section \ref{sec:analysis}}, then we have
\begin{equation} \label{eqn:state}
\bar{D}=\left[\begin{array}{c}
c^1\\
\vdots\\
c^t
\end{array}\right]\bar{W}^T:= \bar{C} W^T,
\end{equation}
where $W_j=\bar{W}_j/\|\bar{W}_j\|$. $\bar{C}$ represents the additive error (up to a scaling factor) to bus voltages due to data attacks, i.e., $\|\bar{W}_j\|\bar{C}_j$ is the error to bus voltage $V^j$. Let $\bar{\I} \in \lsem n \rsem$ denote the column support of $\bar{C}$. We assume $\bar{C}$ is column-sparse because intruders might only alter some of the state variables due to resource constraints. With increasing installation of PMUs, we anticipate that the total number of PMU channels $p$ will be  larger than the number of buses $n$. The transform in (\ref{eqn:state}) reduces the degree of freedom in $\bar{D}$. Combining (\ref{eqn:M}) and (\ref{eqn:state}), the obtained measurements under attack can be written as
\begin{equation}\label{eqn:problem}
M=\bar{L}+\bar{C}W^T+N.
\end{equation}

The attack identification problem is formulated as follows. Given $M$ and $W$, is it possible to separate $\bar{L}$ and $\bar{C}$? We assume noise level is bounded and given, i.e., $\|N\|_{F}\leq \eta$. We say a method can \textit{identify} an unobservable attack if it successfully determines the set of PMU channels that are under attack and recovers measurements that are not attacked. 

Although cannot be detected at a given time instant, the unobservable attacks can be detected if the time series in the affected PMU channels exhibit dynamics different from those of unaffected PMUs. Mathematically, the matrix decomposition is possible if columns in $\bar{D}$ do  not belong to the column space of $\bar{L}$. 

\begin{figure}[h] 
\begin{center}
\includegraphics[height=0.45\linewidth]{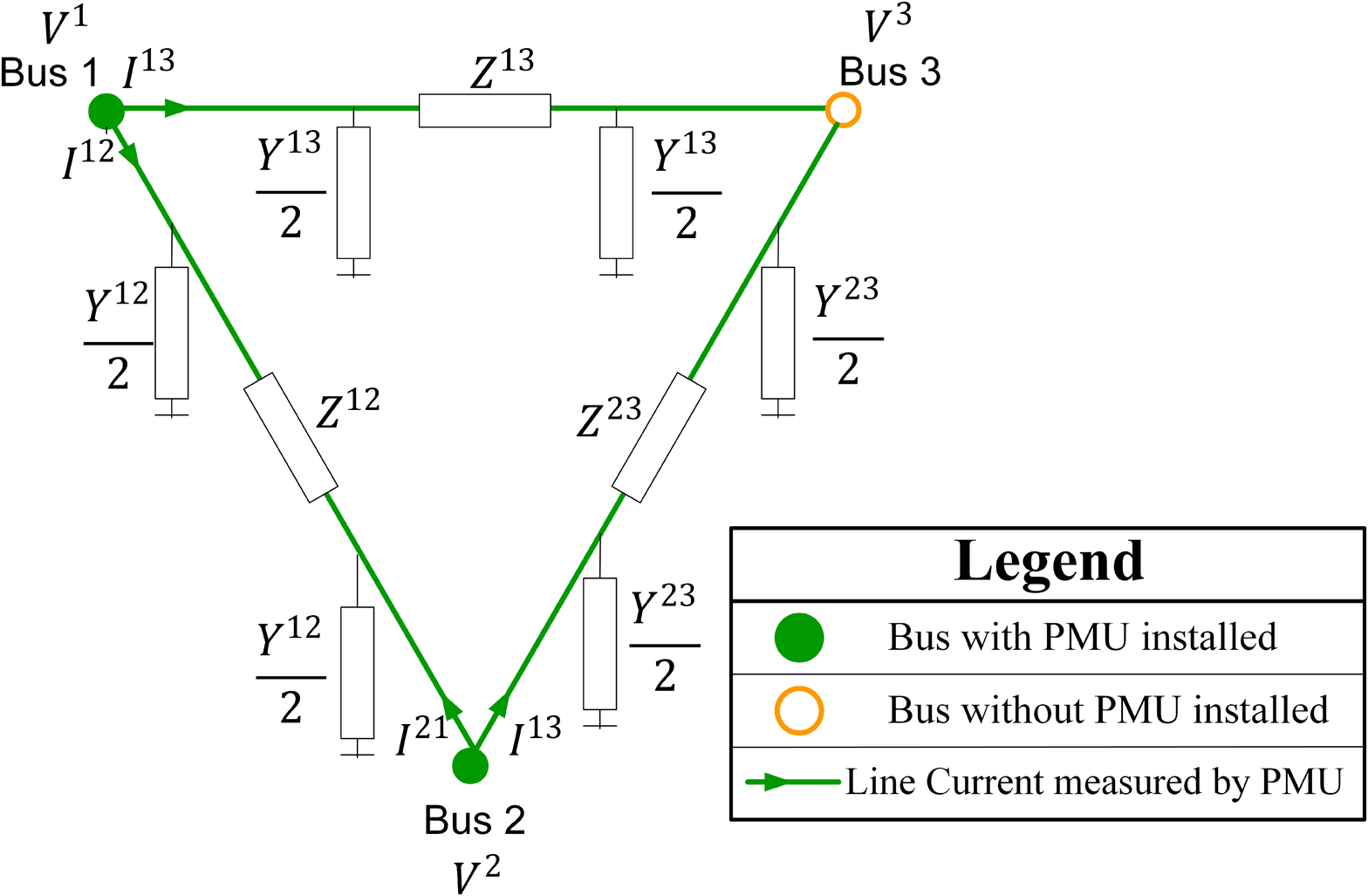}
\caption{Three-bus example. PMUs are installed at bus 1 and bus 2 measuring the corresponding voltage phasors and incident line current phasors. } \label{fig:example} \vspace{-5mm} 
\end{center}
\end{figure} 

We use a three-bus network (Fig.~\ref{fig:example}) to illustrate the notations. Let  $\V^i$ and $\bfI^{ij}$ ($i,j\in\{1,2,3\}$) in $\C^{t \times 1}$ denote the bus voltages and line currents in $t$ instants. Then
\begin{align}
\bar{L} &= [\bfV^1 \  \bfI^{12} \  \bfI^{13} \ \bfV^2 \  \bfI^{21} \ \bfI^{23}]= [\V^1 \  \V^2 \  \V^3]    \bar{W}^T
\end{align}
where $\bar{W}^T$ is
\begin{equation}\nonumber
\notag \left[  \begin{array}{ccccccc}
1 & \frac{1}{Z^{12}}+\frac{Y^{12}}{2} & \frac{1}{Z^{13}}+\frac{Y^{13}}{2} & 0 & -\frac{1}{Z^{12}}  & 0\\
0 & -\frac{1}{Z^{12}} & 0 & 1 & \frac{1}{Z^{12}}+\frac{Y^{12}}{2}&  \frac{1}{Z^{23}}+\frac{Y^{23}}{2}\\
0 &  0 & -\frac{1}{Z^{13}} & 0 & 0 & -\frac{1}{Z^{23}}
\end{array}\right].
\end{equation}

Suppose the intruder manipulates measurements in all channels of PMU 1 and the channel of PMU 2 that measures $\bfI^{21}$ and $\bfI^{23}$  so that the system operator would have the wrong impression that the system states are  $[\V^1  +\bm\beta^1 \  \V^2  \  \V^3+\bm\beta^2]$ for any nonzero $\bm\beta^1$, $\bm\beta^2 \in \C^{t\times 1}$. In this case, the observed measurements under attacks when there is no noise  are
\begin{align}
&M = [\V^1  +\bm\beta^1 \ \  \V^2  \ \  \V^3+\bm\beta^2]    \bar{W}^T \nonumber \\
& = [\V^1 +\bm\beta^1 \ \ \bfI^{12}+\frac{\bm\beta^1}{Z^{12}}+\frac{\bm\beta^1Y^{12}}{2} \ \ \bfI^{13}+\frac{\bm\beta^1-\bm\beta^2}{Z^{13}}+\frac{\bm\beta^1Y^{13}}{2}\nonumber \\   & \quad \quad \V^2 \ \ \bfI^{21}-\bm\beta^1/Z^{12} \ \ \bfI^{23}-\bm\beta^2/Z^{23}].
\end{align}
The additive errors due to attacks are 
\begin{align} 
\bar{D}
=[ \bm\beta^1 \ \ \bm 0 \ \ \bm\beta^2]\bar{W}^T= [\|\bar{W}_1\|\bm\beta^1 \ \ \bm 0 \ \ \|\bar{W}_3\|\bm\beta^2]W^T.
\end{align}

\subsection{Connections to existing work}\label{sec:app}

The detection of unobservable cyber data attacks has not been much addressed. \cite{SJ13} and \cite{LEDEH14} considered the detection of unobservable attacks to SCADA data and provided numerical results. \cite{SJ13}  assumes the measurements across time are i.i.d. distributed and detects the attacks based on statistical learning.   \cite{LEDEH14}  assumes the SCADA measurements under DC power flow model are low-rank and proposes to detect the attacks by decomposing  a low-rank matrix and a sparse matrix from their sum. Our work differs from \cite{LEDEH14} in that we assume the intruder constantly injects data attacks to the same set of PMUs, while \cite{LEDEH14} assumes the intruder attacks different PMUs at different time instants. Furthermore, we provide the theoretical guarantee of our detection method.   

Our problem formulation of matrix decomposition is closely related to those in \cite{XCS12} and  \cite{MMG13}. When $W$ is an identity matrix, our problem reduces to the one in \cite{XCS12}. The difference between our model and the one in \cite{MMG13} is that  the sparse matrix $\bar{C}$ in \cite{MMG13} has nonzero entries located independent of each other, while $\bar{C}$ here is a column-sparse matrix. Our method and analysis are built up those in  \cite{XCS12}, but we consider a more general framework of matrix decomposition through the introduction of the transform matrix $W$. 

The significance of our work is twofold. First, we for the first time consider the case that the additive error matrix $\bar{D}$ can be dense (i.e., $W$ is a dense matrix), while the error matrices in  \cite{XCS12} and  \cite{MMG13} are sparse. We   show through both theoretical analysis and numerical experiments that  it is possible to achieve matrix decomposition with dense $\bar{D}$. Second, when $\bar{D}$ is a column-sparse matrix itself (i.e., $W$ is sparse), our decomposition method outperforms those in \cite{XCS12} and  \cite{MMG13} (see Section \ref{sec:compare} and \ref{sec:PMU}) in the sense that our recovery method can tolerate  a higher level of corruption (i.e., large support size of $\bar{D}$). This advance results from exploiting (\ref{eqn:state}), which reduces the degree of freedom of $\bar{D}$.

Note that our method and analysis hold for an arbitrary   $W$ and can be applied to other domains that involve decomposing a matrix as in (\ref{eqn:problem}). As discussed in \cite{MMG13},   applications include unveiling network traffic anomalies \cite{LCD04,TJ03},  dynamic magnetic resonance imaging \cite{FNDRL06,GCSZ11}, face recognition \cite{BJ03}, and music analysis \cite{LGWWCM09,LW07}. 

\section{Attack Identification without  Noise}\label{sec:method}
\subsection{Identification method and guarantee}\label{sec:analysis}
\floatname{algorithm}{Method}
\begin{algorithm}
\begin{algorithmic}
\REQUIRE PMU measurements $M$ in $t$ instants; coefficient $\eta$; the set $\Omega$ of the locations $(i,j)$ of the observed entries.
\STATE Find   ($L^*$, $C^*$), the optimum solution to the following optimization problem   
\begin{equation} \label{eqn:opt1} 
{\hspace{-0.2in}\min\limits_{L \in \C^{t\times  p}, C\in \C^{t \times n}} \| L\|_{*} + \lambda \|C\|_{1,2} } 
\end{equation}
\begin{equation}\label{eqn:opt} 
\textrm{s.t.} \quad \sum_{i,j \in \Omega}|M_{ij}-L_{ij}-(CW^T)_{ij}|^2 \le \frac{|\Omega|}{tp}\eta^2
\end{equation}
\STATE Compute the SVD of $L^*=U^* \Sigma^* V^{* \dagger}$. 
\STATE Find column support of $D^*=C^*W^T$, denoted by  $\J^*$. 
\RETURN $L^*$, $C^*$, $L^*_{\J^{*c}}$, $U^*$ and $\J^*$.
\end{algorithmic}
\caption{Unobservable cyber attack identification method} 
\end{algorithm}

We first consider noiseless measurements ($\eta=0$). We assume a complete set of measurements for  analysis, but  our   method can be extended to cases when   measurements are partially lost. Moreover, although we consider attack patterns in (\ref{eqn:state}), our method can be generalized to detect combined attacks. In this case,  $\bar{D}$ is generalized to
\begin{equation}\label{eqn:Dg}
\bar{D}=\bar{C}W^T+\bar{S},
\end{equation}
where  a sparse matrix $\bar{S}$ represents attacks (observable and/or unobservable) that have different locations across time. Then (\ref{eqn:opt1})-(\ref{eqn:opt}) are generalized to 
\begin{equation} \label{eqn:opt3} 
{\hspace{-0.2in}\min\limits_{L \in \C^{t\times  p}, C\in \C^{t \times n}, S \in \C^{t\times  p}}  \| L\|_{*} + \lambda_1 \|C\|_{1,2} } + \lambda_2 \sum_{ij} |S_{ij} |
\end{equation}
\begin{equation}\label{eqn:opt4} 
\textrm{s.t.} \quad \sum_{i,j \in \Omega}|M_{ij}-L_{ij}-(CW^T)_{ij}-S_{ij}|^2 \le \frac{|\Omega|}{tp}\eta^2,
\end{equation}
with given  positive constants $\lambda_1$, $\lambda_2$. We study this extension numerically in Section \ref{sec:compare}.

To formally present the theoretical result, we need the following definitions.  
Given $\bar{L}=\bar{U}\bar{\Sigma}\bar{V}^\dagger$ and $W$, we define
\begin{equation} 
\epsilon:=\|\bar{V}^\dag W^{\ddagger}\|_{\infty,2}, \text{ } \mu:=\max_{i \neq j}\|W_i^{\dagger} W_j\|,
\end{equation}
\begin{equation} \label{eqn:sigmak}
\text{ and } \sigma_k:=\max_{\I: |\I| \leq k}\| (W^{\dagger}_\I W_\I)^{-1}\|.
\end{equation}
Note that $\sigma_1=1$ as $W$ has unit-norm columns, and $\epsilon$ depends on the rank $r$ of $\bar{L}$, since $\|\bar{V}\|_{F}^2=r$.

Pick any constants $\tilde{\psi}$ and $c$  in $(0,1)$ such that 
\begin{equation}\label{eqn:psistar}
(2-\tilde{\psi})\sqrt{\tilde{\psi}}/(1-\tilde{\psi}) \leq \sqrt{(1+c)/(1-c)}.
\end{equation}
For any integer $k$, define
\begin{equation}\label{eqn:lambdamin}
\lambda_{\min,k}=\frac{(1+(2-\tilde{\psi})^{-1})\epsilon}{1-(1+(2-\tilde{\psi})^{-1})k\sigma_k\mu}
\end{equation}
\begin{equation}\label{eqn:lambdamax}
\text{ and } \lambda_{\max,k}=\sqrt{\tilde{\psi}/(k\sigma_k)}.
\end{equation}
Our detection method is summarized in Method 1. (\ref{eqn:opt}) is a convex program and can be solved efficiently by generic solvers such as CVX\cite{CVX}. Its recovery guarantee is as follows. 
\begin{theorem}\label{thm:recover}
Suppose there exists nonzero $\tilde{k}$ such that 
\begin{equation}
\tilde{k} \mu \leq c,  \textrm{ and } \lambda_{\min,\tilde{k}} \leq \lambda_{\max,\tilde{k}},
\end{equation}
with $c$, $\lambda_{\min,\tilde{k}}$, and $\lambda_{\max,\tilde{k}}$ defined in (\ref{eqn:psistar})-(\ref{eqn:lambdamax}).
Then as long as the column support of $\bar{C}$ has size at most $\tilde{k}$, for any $\lambda \in [\lambda_{\min,\tilde{k}}, \lambda_{\max,\tilde{k}}]$, the output of Method 1 satisfies 
\begin{equation} \label{eqn:space}
U^*U^{*\dag}=\bar{U}  \bar{U}^\dag, 
\end{equation}
\begin{equation} 
\J^*=\bar{\J}\label{eqn:index} \text{ and } L^*_{\J^{*c}}=\bar{L}_{\bar{\J}^{c}}. \nonumber
\end{equation}
\end{theorem}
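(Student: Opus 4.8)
The plan is to read (\ref{eqn:opt1})--(\ref{eqn:opt}) as a convex program whose optimality can be certified by a single dual certificate, following the Lagrangian/subgradient route of \cite{XCS12} but adapted to the transform $W$. Since $\eta=0$ and the measurements are complete, the constraint (\ref{eqn:opt}) collapses to the linear equality $L+CW^{T}=M=\bar{L}+\bar{C}W^{T}$, so every feasible perturbation $(\Delta_L,\Delta_C)$ obeys $\Delta_L=-\Delta_C W^{T}$ and the program is equivalent to the unconstrained problem $\min_{C}\|M-CW^{T}\|_{*}+\lambda\|C\|_{1,2}$. First I would write the first-order optimality condition: the candidate $(\bar{L},\bar{C})$ is optimal if there is a matrix $Q\in\C^{t\times p}$ that is simultaneously a subgradient of $\|\cdot\|_{*}$ at $\bar{L}$ and such that $QW^{\ddagger}$ is a $\lambda$-scaled subgradient of $\|\cdot\|_{1,2}$ at $\bar{C}$. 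Indeed, for $G_C\in\partial\|\bar{C}\|_{1,2}$ one has $\langle Q,\Delta_L\rangle+\lambda\langle G_C,\Delta_C\rangle=\langle -QW^{\ddagger}+\lambda G_C,\Delta_C\rangle$, and the freedom of $\Delta_C$ turns optimality into exactly the requirement $QW^{\ddagger}\in\lambda\,\partial\|\bar{C}\|_{1,2}$.

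Concretely, I would seek $Q$ satisfying four conditions: (i) $\P_{T}(Q)=\bar{U}\bar{V}^{\dagger}$; (ii) $\|\P_{T^{\perp}}(Q)\|<1$; (iii) for every $i\in\bar{\I}$, the $i$th column of $QW^{\ddagger}$ equals $\lambda\,\bar{C}_i/\|\bar{C}_i\|$; and (iv) $\|\P_{\bar{\I}^{c}}(QW^{\ddagger})\|_{\infty,2}<\lambda$. Conditions (i)--(ii) place $Q$ in the nuclear-norm subdifferential, while (iii)--(iv) force $QW^{\ddagger}$ into the $\ell_{1,2}$ subdifferential on and off the support $\bar{\I}$ of $\bar{C}$. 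The new feature relative to \cite{XCS12} is the factor $W^{\ddagger}$, which mixes the $p$ measurement channels into the $n$ bus coordinates, so the entire construction must be carried out through this dense coupling.

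For the construction I would start from the guess $Q=\bar{U}\bar{V}^{\dagger}+\Lambda$ with $\Lambda\in T^{\perp}$, so that (i) holds automatically, and choose $\Lambda$ to repair the support conditions (iii). Because $\bar{U}\bar{V}^{\dagger}W^{\ddagger}$ has column norms bounded by $\epsilon=\|\bar{V}^{\dagger}W^{\ddagger}\|_{\infty,2}$, the baseline leakage of the low-rank part onto the inactive channels is controlled by $\epsilon$. Prescribing the correct directions on the $|\bar{\I}|\le\tilde{k}$ active columns requires inverting the Gram matrix $W_{\bar{\I}}^{\dagger}W_{\bar{\I}}$, whose inverse is bounded by $\sigma_{\tilde{k}}$, while the coherence $\mu=\max_{i\neq j}\|W_i^{\dagger}W_j\|$ bounds the spill of this correction onto the inactive columns. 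Propagating these estimates, condition (iv) becomes the requirement that $\lambda$ exceed the quantity defining $\lambda_{\min,\tilde{k}}$ in (\ref{eqn:lambdamin}), and the spectral bound $\|\Lambda\|<1$ in (ii) becomes the requirement that $\lambda$ not exceed $\lambda_{\max,\tilde{k}}$ in (\ref{eqn:lambdamax}); the hypothesis $\tilde{k}\mu\le c$ guarantees $W_{\bar{\I}}^{\dagger}W_{\bar{\I}}$ is well conditioned (so $\sigma_{\tilde{k}}$ is finite), and the calibration inequality (\ref{eqn:psistar}) is precisely what makes the window $[\lambda_{\min,\tilde{k}},\lambda_{\max,\tilde{k}}]$ nonempty so that an admissible $\lambda$, and hence a valid $Q$, exists.

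Finally, to upgrade optimality into the exact recovery statements (\ref{eqn:space})--(\ref{eqn:index}) I would exploit the strict inequalities in (ii) and (iv) together with the linear independence of the active columns of $W$ (again guaranteed by $\tilde{k}\mu\le c$). The subgradient inequality then forces any optimal $(L^{*},C^{*})$ to share the row/column space of $\bar{L}$ and to keep the attack supported inside $\bar{\I}$, which transfers to $\J^{*}=\bar{\J}$ for $D^{*}=C^{*}W^{T}$ and, on the unattacked channels $\bar{\J}^{c}$ where no ambiguity survives, to $L^{*}_{\bar{\J}^{c}}=\bar{L}_{\bar{\J}^{c}}$ and to the column-space identity $U^{*}U^{*\dagger}=\bar{U}\bar{U}^{\dagger}$. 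I expect the main obstacle to be the certificate construction of the third paragraph: balancing the two strict requirements (ii) and (iv) through the dense transform $W$, since tightening the off-support margin (which lowers $\lambda$) fights the spectral margin (which raises $\lambda$), and it is the joint behavior of $\epsilon$, $\mu$, and $\sigma_{\tilde{k}}$ that must be kept in the admissible regime.
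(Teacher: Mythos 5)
Your overall strategy---certify optimality of the convex program by a dual certificate $Q$ whose projection onto $T$ is $UV^{\dagger}$ and for which $QW^{\ddagger}$ lies in $\lambda\,\partial\|\cdot\|_{1,2}$---is the right skeleton and matches the paper in spirit. But there is a genuine gap in where you anchor the certificate. You construct $Q$ at the pair $(\bar{L},\bar{C})$ and demand in your condition (iii) that $(QW^{\ddagger})_i=\lambda\,\bar{C}_i/\|\bar{C}_i\|$ for every $i\in\bar{\I}$. Combined with your condition (i), which forces $\P_{\bar{U}}(Q)=\bar{U}\bar{V}^{\dagger}$ and hence $\bar{U}\bar{V}^{\dagger}W^{\ddagger}_{\bar{\I}}=\lambda\,\bar{U}\bar{U}^{\dagger}\bigl[\bar{C}_i/\|\bar{C}_i\|\bigr]_{i\in\bar{\I}}$, this is a rigid compatibility constraint tying the column directions of $\bar{C}$ to the row space of $\bar{L}$; for a generic attack $\bar{C}$ it simply has no solution $Q$. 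This is not an artifact: $(\bar{L},\bar{C})$ is in general \emph{not} the minimizer of (\ref{eqn:opt1})--(\ref{eqn:opt}), because any component of the attacked columns lying in the column space of $\bar{U}$ can be shuffled between $L$ and $CW^{T}$ without changing feasibility, and the objective will exploit that. Indeed Theorem~\ref{thm:recover} deliberately does not claim $L^{*}=\bar{L}$ or $C^{*}=\bar{C}$, only recovery of the column space, the support, and the unattacked columns.

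The paper resolves this by interposing the Oracle Problem (\ref{eqn:oracle}): the certificate is built at an \emph{optimal} solution $(L',C')$ of the oracle, and it is precisely the KKT conditions of the oracle (Lemma~\ref{lem:convex}, yielding $\hat{H}\in\G(C')$ with $\hat{V}^{\dagger}W^{\ddagger}_{\bar{\I}}=\lambda\bar{U}^{\dagger}\hat{H}$ as in (\ref{eqn:hath})) that supply exactly the compatibility your construction is missing; note also that $\G(C')$ only pins down directions on the support of $C'$ and merely bounds the remaining columns of $\bar{\I}$, which is weaker than your (iii). Your proposal would need this intermediate object to go through. Two further points you gloss over: the uniqueness step needs the extra hypothesis $\P_{\bar{\J}}\cap\P_{V'}=\{0\}$ (as in Lemma~\ref{lem:cond}) to kill feasible perturbations that are simultaneously column-supported on $\bar{\J}$ and in the row space; and the conclusion $\J^{*}=\bar{\J}$ (no attacked column silently absorbed into $L^{*}$) relies on the standing assumption that nonzero columns of $\bar{D}$ do not lie in the column space of $\bar{L}$, which your argument never invokes. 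Finally, the actual correction term $\Delta_2$ in (\ref{eqn:delta2}) requires a convergent Neumann series controlled by $\psi=\|\P_{\hat{V}}\P_{W_{\bar{\I}}}\P_{\hat{V}}\|\le\tilde{\psi}<1$; your qualitative description of ``repairing the support conditions'' does not confront this convergence issue, which is where (\ref{eqn:psistar}) actually enters.
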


Theorem \ref{thm:recover} guarantees that the affected PMUs can be correctly located and thus,  the ``clean'' PMU measurements could be identified. Furthermore, the subspace spanned by the actual phasors can be recovered. Since we do not obtain any actual measurements from PMUs that are under attack, it is impossible to recover the exact measurements in the affected PMUs without further regularization. Under the conditions of Theorem  \ref{thm:recover},   the recovery is also successful when the column support of $\bar{C}$ is zero. Thus, the false alarm rate is zero.

Method 1 is  motivated by \cite{XCS12}. In fact, after post-multiplying $W^{\ddagger} (W^TW^{\ddagger})^{-1}$ to both sides of (\ref{eqn:M}), we have
\begin{equation} \nonumber
MW^{\ddagger} (W^TW^{\ddagger})^{-1}=\bar{L}W^{\ddagger} (W^TW^{\ddagger})^{-1} + \bar{C} + NW^{\ddagger} (W^TW^{\ddagger})^{-1}
\end{equation}
where the right-hand side is the sum of a low-rank matrix plus a column-sparse matrix and noise. 
Then, the results of \cite{XCS12} can be directly applied to our problem. We do not follow this path due to  two reasons. First, $MW^{\ddagger}(W^TW^{\ddagger})^{-1}$ cannot be computed if some entries of $M$ are missing, while Method 1 can be easily extended to scenarios with missing data  by restricting the constraints in (\ref{eqn:opt}) to the observed measurements. Second, $(W^TW^{\ddagger})^{-1}$ does not exist when $W$ is a flat matrix, i.e., $p < n$, while Method 1 and Theorem \ref{thm:recover} can be applied to an arbitrary $W$.

\subsection{Discussion of $\lambda$ and $\tilde{k}$}\label{sec:lambda}

We remark that due to the slackness in the proof, $\lambda \in [\lambda_{\min,\tilde{k}}, \lambda_{\max,\tilde{k}}]$ in Theorem \ref{thm:recover} is sufficient but not necessary\footnote{Specially, the requirements on dual certificate in Lemma \ref{lem:certificate} are sufficient but not necessary. Furthermore, we use loose bounds in the proofs to simplify analysis. $\epsilon$, $\mu$, and $\sigma_k$ are in turn defined based on worst-case scenarios.}. There may exist $\lambda$ outside $[\lambda_{\min,\tilde{k}}, \lambda_{\max,\tilde{k}}]$ that can still lead to correct recovery. We observe from numerical experiments that recovery performance is generally much better than the bound in  Theorem \ref{thm:recover}. Furthermore, when $L$ is fixed, as $\tilde{k}$ decreases, $\lambda_{\min,\tilde{k}}$ decreases, and $\lambda_{\max,\tilde{k}}$ increases. Thus, intuitively, if the number of affected PMUs decreases, a wider range of $\lambda$ is proper for Method 1. For a detailed  discussion, we state the following lemma and defer its proof to the Appendix. 
\begin{lemma}\label{lem:sigmak}
Suppose $k\mu < 1$, then $\sigma_k \leq (1-(k-1)\mu)^{-1}$.
\end{lemma}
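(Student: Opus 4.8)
The plan is to bound the spectral norm of $(W^\dagger_\I W_\I)^{-1}$ uniformly over all index sets $\I$ with $|\I| \le k$, and then take the maximum as required by the definition of $\sigma_k$ in (\ref{eqn:sigmak}). Fix any such set $\I$ and write $G := W^\dagger_\I W_\I$, the Gram matrix of the unit-norm columns indexed by $\I$. Since the columns of $W$ have unit $\ell_2$ norm, every diagonal entry of $G$ equals $1$, and by the definition of $\mu$ in the statement preceding the theorem, every off-diagonal entry satisfies $|G_{ab}| = \|W_a^\dagger W_b\| \le \mu$. Thus $G$ is a Hermitian matrix with unit diagonal whose off-diagonal entries are bounded in magnitude by $\mu$.

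The key step is to lower-bound the smallest eigenvalue of $G$, since $\|(W^\dagger_\I W_\I)^{-1}\| = \|G^{-1}\| = 1/\lambda_{\min}(G)$ once we know $G$ is positive definite. I would split $G = I + E$, where $E := G - I$ has zero diagonal and off-diagonal entries bounded by $\mu$ in magnitude, and then control $\|E\|$. The natural tool here is a Gershgorin-type / row-sum bound: each row of $E$ has at most $k-1$ nonzero entries (the off-diagonal entries in a $|\I| \times |\I|$ matrix with $|\I| \le k$), each of magnitude at most $\mu$, so the maximum absolute row sum of $E$ is at most $(k-1)\mu$. For a Hermitian matrix this absolute-row-sum bound dominates the spectral norm, giving $\|E\| \le (k-1)\mu$. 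Consequently every eigenvalue of $G = I + E$ lies in the interval $[1-(k-1)\mu,\, 1+(k-1)\mu]$, so in particular $\lambda_{\min}(G) \ge 1-(k-1)\mu$.

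The hypothesis $k\mu < 1$ guarantees $(k-1)\mu < 1$, hence $1-(k-1)\mu > 0$, so $G$ is invertible and positive definite and the bound on $\lambda_{\min}(G)$ is meaningful. Inverting yields
\begin{equation} \nonumber
\|(W^\dagger_\I W_\I)^{-1}\| = \frac{1}{\lambda_{\min}(G)} \le \frac{1}{1-(k-1)\mu}.
\end{equation}
Since this bound is uniform in $\I$ and depends only on $k$ and $\mu$, taking the maximum over all $\I$ with $|\I| \le k$ gives $\sigma_k \le (1-(k-1)\mu)^{-1}$, as claimed.

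The step I expect to require the most care is the passage from the entrywise off-diagonal bound to the operator-norm bound $\|E\| \le (k-1)\mu$. The cleanest justification uses that for a Hermitian matrix the spectral radius is at most the maximum absolute row sum (a consequence of Gershgorin's theorem applied to $E$, whose eigenvalues are real); I would state this explicitly rather than invoking a general matrix-norm inequality, since the convenient submultiplicative norm here is the $\ell_\infty$-induced norm and one must note that for Hermitian $E$ this coincides as an upper bound with the spectral norm. An alternative, if one prefers to avoid Gershgorin, is to bound $\|E\| \le \|E\|_F \le \sqrt{(k)(k-1)}\,\mu$, but that is weaker and would not yield the stated constant, so the row-sum route is the one to take.
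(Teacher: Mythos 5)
Your proof is correct and follows essentially the same route as the paper: both arguments apply the Ger\v{s}hgorin circle theorem to the Gram matrix $W_\I^\dagger W_\I$ (unit diagonal, off-diagonal entries bounded by $\mu$, at most $k-1$ of them per row) to obtain $\|I - W_\I^\dagger W_\I\| \le (k-1)\mu < 1$. The only cosmetic difference is the final step, where the paper bounds $\|(W_\I^\dagger W_\I)^{-1}\|$ via the Neumann series $\sum_{i\ge 0}(I - W_\I^\dagger W_\I)^i$ while you use $\|G^{-1}\| = 1/\lambda_{\min}(G)$ directly; these yield the identical bound.
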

Since $\sigma_k$ increases in $k$,   $\sigma_1 \geq 1$, and $k \mu \leq c<1$, 
together  with Lemma \ref{lem:sigmak},  we know $\sigma_{\tilde{k}} = \Theta(1)$\footnote{We use the notations  $g(n)\in O(h(n))$, $g(n) \in \Omega(h(n))$, or $g(n)=\Theta(h(n))$ if as $n$ goes to infinity, $g(n) \leq c \cdot h(n)$, $g(n) \geq c  \cdot h(n)$ or $c_1 \cdot h(n) \leq g(n) \leq c_2 \cdot h(n)$ eventually holds for some positive constants $c$, $c_1$ and $c_2$ respectively.}. Since $\tilde{\psi}$ is a constant, one can check that $\lambda_{\min,\tilde{k}}= \Theta(\epsilon)$, and $\lambda_{\max,\tilde{k}}= \Theta(\sqrt{1/\tilde{k}})$. 

Note that $\|\bar{V}^ {\dagger}\|_F^2 =r$. We assume that $\|\bar{V}^ {\dagger}\|$ is column-incoherent \cite{XCS12} with some positive constant $\rho>1$, i.e., 
\begin{equation}\label{eqn:rho}
\|\bar{V}^ {\dagger}\|_{\infty, 2} \le \sqrt{\rho r/p}.
\end{equation}
We assume the number of PMU channels incident to each bus is in the range of $[d, Cd]$ for some $d>0$ and some constant $C$. This is also the number of nonzero entries in each column of $W$ with unit column-norm. Then $p=\Theta(dn)$, and we have
\begin{align}
\epsilon =  \|\bar{V}^\dag W^{\ddagger}\|_{\infty,2}  
\leq &   \sqrt{\frac{\rho r}{p}} \max_{i} \sum_{j} |W_{ij}|  = O(\sqrt{\frac{r}{n}}). \label{eqn:ep}
\end{align} 
Therefore, as long as $\tilde{k} =O(n/r)$, when $n$ is sufficiently large, $\lambda_{\min,\tilde{k}}  \leq \lambda_{\max,\tilde{k}}$. $\tilde{k} \mu \leq c$ requires that $\tilde{k} =O(1/\mu)$. Note that $\mu= \Theta(\frac{1}{d})$. Thus, if both $\tilde{k} =O(n/r)$ and $\tilde{k}=O(d)$ hold, then a proper $\lambda$ exists, and Theorem \ref{thm:recover} holds. 
 
In the  case that $d= \Theta(n)$, $\tilde{k}$ could be  $\Theta(n/r)$. If $r$ is a constant, our method succeeds even when  a constant fraction of bus voltages are corrupted. Also consider the   case that $\tilde{k}=1$. We pick $\tilde{\psi}$ and $c$ in (\ref{eqn:psistar}) arbitrarily close to one, then $\lambda=1$ is a proper choice (see Fig. \ref{fig:case2} for results on actual PMU data) provided that $\epsilon +\mu \leq 0.5$. Since $\epsilon$ scales as $1/\sqrt{n}$ and $\mu$ scales as $1/d$,  the condition will be met in large systems that are tightly connected. 
Intuitively, $\mu$ is small if the bus degree is high, and the line impedances are in the same range. 

\begin{figure}[h] 
\begin{center}
\includegraphics[height=0.4\linewidth]{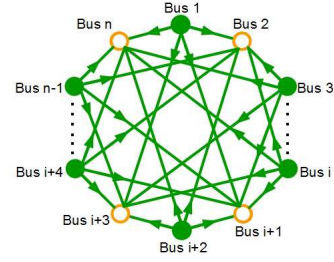}
\caption{$n$-bus ring network} \label{fig:mesh}
\end{center}
\end{figure}
We next use an example to illustrate the existence of proper $\lambda$. Consider an $n$-bus ($n$ is even) ring network  in Fig.~\ref{fig:mesh}.  Each odd-numbered bus is connected to all even-numbered buses. There is no connection among odd buses and no connection among even-numbered buses. A PMU is installed on each odd bus and measures the corresponding voltage phasor and all incident line current phasors. For the simplicity of analysis, we assume $Z^{ij}=1$ and $Y^{ij}=0$ in this ring network. $W$ is a $(\frac{n^2}{4}+\frac{n}{2}) \times n$ matrix with unit norm columns. Specifically, for every integer $k$,  

$W_{ij} =
\begin{cases} 
\sqrt{2/(n+2)},  & \mbox{if }i\in \mathcal{I}_{k1}\mbox{ and }j=2k-1 \\
-\sqrt{2/n}, & \mbox{if }i\in \mathcal{I}_{k2}\mbox{ and }j=2k \\
0, & \mbox{otherwise}
\end{cases}$,

\noindent where 
\begin{equation*}
\mathcal{I}_{k1}:=\left\{ k+(k-1)\frac{n}{2}+k' ~|~ \textrm{interger } k'=0,1,2,...,\frac{n}{2} \right\},
\end{equation*}
\begin{equation*}
\mathcal{I}_{k2}:=\left\{ k+1+(\frac{n}{2}+1)k' ~|~ \textrm{interger }k'=0,1,2,...,\frac{n}{2}-1 \right\}.
\end{equation*} 
Note that $|\I_{k1}|=\frac{n}{2}+1$, $|\I_{k2}|=\frac{n}{2}$ for all $k$. Here, $\mu=2/\sqrt{n^2+2n}$. Then we have
\begin{equation}\label{eqn:column}
(V^{\dagger}W^{\ddagger})_j =
\begin{cases} 
\sqrt{2/(n+2)}\sum_{i\in \mathcal{I}_{k1}}(V^{\dagger})_i,  & \mbox{if }j=2k-1 \\
-\sqrt{2/n}\sum_{i\in \mathcal{I}_{k2}}(V^{\dagger})_i, & \mbox{if }j=2k 
\end{cases},
\end{equation}
where $V \in \C^{(\frac{n^2}{4}+\frac{n}{2}) \times r}$ contains the right singular vectors of the rank-$r$ measurement matrix $\bar{L} \in \C^{t \times (\frac{n^2}{4}+\frac{n}{2})}$.  If $\|\bar{V}^ {\dagger}\|$ is column-incoherent \cite{XCS12} with some positive constant $\rho$, then 
\begin{align}
\epsilon =  \|\bar{V}^\dag W^{\ddagger}\|_{\infty,2} \leq & \max \big(\sqrt{\frac{2}{n+2}} |\I_{k1}|, \sqrt{\frac{2}{n}} |\I_{k2}|\big) \cdot\|\bar{V}^ {\dagger}\|_{\infty, 2}  \nonumber \\
\leq & \sqrt{\frac{n+2}{2} }\cdot \sqrt{\frac{\rho r}{(\frac{n}{2}+1)\frac{n}{2}}} \leq \sqrt{\frac{2\rho r}{n}}, \label{eqn:ep1}
\end{align} 
where the first inequality follows from  (\ref{eqn:column}), and the second inequality follows from (\ref{eqn:rho}). 

To find $\lambda$, we pick $c=1/4$ and $\tilde{\psi}=1/8$. We choose $\tilde{k} = \frac{n}{48\rho r}$. One can check that (\ref{eqn:psistar}) follows. Then
\begin{equation}
\tilde{k} \mu = \frac{n}{48\rho r} \times \frac{2}{\sqrt{n^2+2n}} \leq \frac{1}{24 \rho r} \leq \frac{1}{24} \leq \frac{1}{4}=c,
\end{equation}
where the last inequality follows since $\rho>1$ and $r \geq 1$.
Then from Lemma \ref{lem:sigmak}, we have 
\begin{equation}
\sigma_{\tilde{k}} \leq (1-(\tilde{k}-1)\mu)^{-1} \leq (1-\tilde{k} \mu)^{-1} \leq 24/23.
\end{equation}
From  (\ref{eqn:lambdamin}) and (\ref{eqn:lambdamax}),
\begin{equation}
\lambda_{\min,\tilde{k}}\leq \frac{(1+(2-\tilde{\psi})^{-1})\epsilon}{1-(1+(2-\tilde{\psi})^{-1})\tilde{k}\mu \sigma_{\tilde{k}}}
\leq \frac{23\epsilon}{14} \leq \frac{23}{14}\sqrt{\frac{2\rho r}{n}} .
\end{equation} 
\begin{equation}
\lambda_{\max,\tilde{k}}= \sqrt{\frac{1/8}{\frac{n}{48\rho r} \sigma_{\tilde{k}}}} \geq \frac{1}{2} \sqrt{\frac{23\rho r}{n}}.
\end{equation} 
Since $\frac{23}{14}\sqrt{\frac{2\rho r}{n}} <\frac{1}{2}\sqrt{\frac{23\rho r}{n}}$, then $\lambda_{\min,\tilde{k}}<\lambda_{\max,\tilde{k}}$. Then there exists $\lambda$ such that Method 1 correctly identifies the corruptions in up to $\tilde{k} = \frac{n}{48\rho r}$ bus voltages. In fact, any $\lambda \in [\frac{23}{14}\sqrt{\frac{2\rho r}{n}},\frac{1}{2}\sqrt{\frac{23\rho r}{n}}]$ suffices. Note that for a constant $r$, $\tilde{k}$ is linear in $n$, the total number of buses.  

\subsection{Proof sketch of Theorem \ref{thm:recover}}
The proof of Theorem \ref{thm:recover} follows the same line as the proof of Theorem 1 in \cite{XCS12}. With the additional projection matrix $W$, our proof is more involved than the one in \cite{XCS12}. 
 
Like \cite{XCS12}, we design the following Oracle Problem (\ref{eqn:oracle}) by adding explicit constraints that the solution pair should have the correct column space of $\bar{L}$ and the correct column support of $\bar{C}$. The major step is to show that an optimal solution ($L^*$,$C^*$) to (\ref{eqn:opt}) is also an solution to the Oracle problem (\ref{eqn:oracle}). Note that Oracle problem is only designed for analysis, since $\bar{U}$ and $\bar{\I}$ are unknown to the operator.  
\begin{equation}\label{eqn:oracle}
\begin{tabular}{p{0.9in}p{0.1 in}p{1.6 in}} 
 Oracle Problem & $\min\limits_{L, C}$ 
 &  $\| L\|_{*} + \lambda \|C\|_{1,2}$ \\ 
    & s.t. & $M = L+ C W^T $\\
        && $\P_{\bar{U}}(L)=L$, $\P_{\bar{\I}}(C)=C$. 
\end{tabular}
\end{equation}

Let $(L', C')$ be an optimal solution to the Oracle problem (\ref{eqn:oracle}). We define $\P_{T'} := \P_{U'} +\P_{V'}-\P_{U'}\P_{V'}$, where the SVD of $L'= U' \Sigma' V'^{\dag}$. Define
\begin{align}\nonumber
\G(C'):=\{  H \in \C^{t \times k} ~|~  \forall i\in \I' : H_i= C_i'/\|C_i'\| ; \\  \forall i \in \bar{\I} \cap (\I')^c: \|H_i\|_2 \leq 1\},\nonumber
\end{align}
where $\I'$ is the column support of $C'$. We have
\begin{lemma}[Lemma 4 and Lemma 5 of \cite{XCS12}]
\begin{equation} \nonumber 
U'U'^\dag=\bar{U}\bar{U}^\dag.
\end{equation} 
There exists an orthonormal matrix $\hat{V} \in \C ^{t \times p}$ such that 
\begin{equation}\label{eqn:UU'}
U' V'^\dag =\bar{U} \hat{V}^\dag. 
\end{equation} 
Also, we have
\begin{equation}\nonumber
\P_{T'}:= \P_{U'}+\P_{V'}-\P_{U'}\P_{V'}=\P_{\bar{U}}+\P_{\hat{V}}-\P_{\bar{U}}\P_{\hat{V}}.
\end{equation}
\end{lemma}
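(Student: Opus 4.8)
The plan is to unwind the two Oracle constraints $\mathcal{P}_{\bar{U}}(L')=L'$ and $\mathcal{P}_{\bar{\I}}(C')=C'$ together with the exact feasibility relation $\bar{L}+\bar{C}W^T=L'+C'W^T$, and then read off all three conclusions from the unitary freedom in the SVD. Since $\mathcal{P}_{\bar{U}}(L')=L'$ means $L'=\bar{U}\bar{U}^\dagger L'$, I would first write $L'=\bar{U}B$ with $B:=\bar{U}^\dagger L'\in\C^{r\times p}$, take a thin SVD $B=U_B\Sigma_B V_B^\dagger$, and observe that $L'=(\bar{U}U_B)\Sigma_B V_B^\dagger$ is a valid SVD of $L'$ because $\bar{U}U_B$ has orthonormal columns ($U_B^\dagger\bar{U}^\dagger\bar{U}U_B=U_B^\dagger U_B=I$). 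Hence one may take $U'=\bar{U}U_B$ and $V'=V_B$. This already yields $\mathrm{range}(L')\subseteq\mathrm{range}(\bar{U})$ and $U'U'^\dagger=\bar{U}U_BU_B^\dagger\bar{U}^\dagger\preceq\bar{U}\bar{U}^\dagger$.

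The crux, and the step I expect to be the main obstacle, is the reverse inclusion, i.e. that $B$ has full row rank $r$, equivalently $U_BU_B^\dagger=I_r$. Here I would exploit the support structure of the transform. Writing $D':=C'W^T$, the $j$th column $D'_j=\sum_{i\in\bar{\I}}C'_iW_{ji}$ can be nonzero only when channel $j$ is incident to an attacked bus in $\bar{\I}$; barring degenerate cancellations this incidence set is exactly $\bar{\J}$, the column support of $\bar{D}=\bar{C}W^T$. Consequently, for every unattacked channel $j\in\bar{\J}^{c}$ we have $W_{ji}=0$ for all $i\in\bar{\I}$, so $\bar{D}_j=D'_j=0$, and feasibility forces $L'_j=M_j=\bar{L}_j$. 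Thus $L'$ and $\bar{L}$ agree on all uncorrupted columns. Since $\bar{\J}$ is small (it contains only the channels incident to the at most $\tilde{k}$ attacked buses) and $\bar{V}$ is column-incoherent, the remaining columns $\bar{L}_{\bar{\J}^{c}}=\bar{U}\bar{\Sigma}(\bar{V}_{\bar{\J}^{c}})^\dagger$ still have rank $r$, so $\mathrm{range}(L')\supseteq\mathrm{span}(\bar{L}_{\bar{\J}^{c}})=\mathrm{range}(\bar{U})$. Combined with the containment above, this gives $\mathrm{range}(L')=\mathrm{range}(\bar{U})$, hence $U_BU_B^\dagger=I_r$ and $U'U'^\dagger=\bar{U}\bar{U}^\dagger$, proving the first claim. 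The delicate part is guaranteeing that the uncorrupted columns retain full rank $r$; this is where the smallness of $\tilde{k}$ and the incoherence of $\bar{V}$ enter, and it is precisely the point at which $W$ (which spreads a single corrupted bus across all incident channels) must be handled more carefully than the $W=I$ case of \cite{XCS12}.

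Given the column-space equality, the remaining two claims are essentially bookkeeping with the unitary freedom of the SVD. Because $U'$ and $\bar{U}$ have the same range and both have orthonormal columns, there is a unitary $Q\in\C^{r\times r}$ with $U'=\bar{U}Q$. Setting $\hat{V}:=V'Q^\dagger$, I obtain $U'V'^\dagger=\bar{U}QV'^\dagger=\bar{U}\hat{V}^\dagger$, and $\hat{V}$ has orthonormal columns since $V'$ does and $Q$ is unitary; this is the second claim (with $\hat{V}\in\C^{p\times r}$). For the third claim I note that $\mathcal{P}_{U'}(\cdot)=U'U'^\dagger(\cdot)=\bar{U}\bar{U}^\dagger(\cdot)=\mathcal{P}_{\bar{U}}(\cdot)$, while $\hat{V}\hat{V}^\dagger=V'Q^\dagger QV'^\dagger=V'V'^\dagger$ gives $\mathcal{P}_{V'}(\cdot)=(\cdot)V'V'^\dagger=(\cdot)\hat{V}\hat{V}^\dagger=\mathcal{P}_{\hat{V}}(\cdot)$. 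Substituting these two identities into $\mathcal{P}_{T'}=\mathcal{P}_{U'}+\mathcal{P}_{V'}-\mathcal{P}_{U'}\mathcal{P}_{V'}$ immediately produces $\mathcal{P}_{T'}=\mathcal{P}_{\bar{U}}+\mathcal{P}_{\hat{V}}-\mathcal{P}_{\bar{U}}\mathcal{P}_{\hat{V}}$, completing the proof.
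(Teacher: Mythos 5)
The paper never proves this lemma: it is imported wholesale as Lemmas~4 and~5 of \cite{XCS12}, and no argument for it appears anywhere in the appendix. Your proposal is therefore not an alternative to the paper's proof so much as a reconstruction of the argument the paper outsources, and it follows the same route as \cite{XCS12}: feasibility of the Oracle problem (\ref{eqn:oracle}) forces $L'$ to agree with $\bar{L}$ on the clean columns, the clean columns of $\bar{L}$ still span $\mathrm{range}(\bar{U})$, hence the two column spaces coincide; the remaining two identities are SVD bookkeeping via a unitary $Q$ with $U'=\bar{U}Q$. That last part of your write-up (setting $\hat{V}=V'Q^\dagger$, deducing $\P_{U'}=\P_{\bar{U}}$ and $\P_{V'}=\P_{\hat{V}}$, and substituting into $\P_{T'}$) is complete and correct, and you are right that $\hat{V}$ should live in $\C^{p\times r}$, not $\C^{t\times p}$ as the statement has it.

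The one substantive issue is the step you yourself flag as delicate. First, a small repair: the set of channels $j$ for which $D'_j=(C'W^T)_j$ is guaranteed to vanish is the set of channels not incident to any bus in $\bar{\I}$; this is contained in $\bar{\J}^c$ but need not equal it, since cancellations can make $\bar{D}_j=0$ for an incident channel without forcing $D'_j=0$. The argument must be run with this (possibly smaller) clean set, which only makes the rank requirement harder. Second, and more importantly, the claim that $\bar{L}$ restricted to the clean channels retains rank $r$ does not follow from the hypotheses of Theorem~\ref{thm:recover} as literally stated ($\tilde{k}\mu\le c$ and $\lambda_{\min,\tilde{k}}\le\lambda_{\max,\tilde{k}}$); it needs a quantitative incoherence condition of the type (\ref{eqn:rho}) together with a bound on the number of channels incident to $\bar{\I}$, exactly as in the proof of Lemma~4 of \cite{XCS12}. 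The paper never makes this dependence explicit --- (\ref{eqn:rho}) appears only in the informal discussion of Section~\ref{sec:lambda} --- so your proof inherits a hypothesis that lives in the citation rather than in the theorem statement. Once it is made explicit the step closes in one line: a unit vector $x$ with $\bar{V}_{j,:}\,x=0$ for every clean $j$ would give $1=\|\bar{V}x\|^2\le m\,\|\bar{V}^\dagger\|_{\infty,2}^2\le m\rho r/p$, where $m$ is the number of non-clean channels, which is impossible when $m\rho r/p<1$; this is consistent with the regime $\tilde{k}=O(n/r)$ discussed in the paper. With that assumption stated, your argument is a correct proof of the lemma, and in fact proves it for every feasible (not only optimal) point of the Oracle problem.
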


The following lemma establishes that the solution to  the Oracle problem (\ref{eqn:oracle}) is also a solution to (\ref{eqn:opt}),
\begin{lemma}\label{lem:cond}
An optimal solution $(L', C')$ to (\ref{eqn:oracle}) is an optimal solution to (\ref{eqn:opt}) if there exists $Q \in \C^{t\times p}$ that satisfies
\begin{equation}\label{eqn:cond}
\begin{aligned}
&(a) \P_{T'} (Q) = U' V'^{\dagger}, 
\quad\quad (b)  \| \P_{T'^{\perp}} (Q)\| \leq 1 ,\\
&(c)   (QW^{\ddagger})_{\bar{\I}}/\lambda \in \G(C'),  
\quad \textrm{and }(d) \|(QW^{\ddagger})_{\bar{\I}^c} \|_{\infty,2} \leq \lambda.
\end{aligned}
\end{equation}
If both (b) and (d) are strict, and $\P_{\bar{\J}} \cap \P_{V'}=\{0\}$, then any optimal solution $(L^*,C^*)$ to (\ref{eqn:opt}) satisfies $\P_{\bar{U}}(L^*)=L^*$, $\P_{\bar{\I}}(C^*)=C^*$. 
\end{lemma}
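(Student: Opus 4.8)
The plan is a dual-certificate (KKT-type) argument adapted to the two nonsmooth penalties and to the transform $W$. I work throughout with the real inner product $\langle A,B\rangle:=\mathrm{Re}\,\mathrm{tr}(A^\dagger B)$, under which $\|\cdot\|_*$ and $\|\cdot\|_{1,2}$ are convex. In the noiseless case the constraint in (\ref{eqn:opt}) forces $L+CW^T=M$, which coincides with the Oracle constraint, so every feasible point of (\ref{eqn:opt}) can be written $(L'+\Delta_L,\,C'+\Delta_C)$ with $\Delta_L+\Delta_C W^T=0$. First I would record the two sharpened subgradient inequalities $\|L'+\Delta_L\|_*\ge\|L'\|_*+\langle U'V'^\dagger,\Delta_L\rangle+\|\P_{T'^\perp}(\Delta_L)\|_*$ and $\|C'+\Delta_C\|_{1,2}\ge\|C'\|_{1,2}+\langle G,\Delta_C\rangle+\|\P_{(\I')^c}(\Delta_C)\|_{1,2}$, where $G$ is supported on $\I'$ with $G_i=C_i'/\|C_i'\|_2$. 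Both follow from the explicit subdifferentials of the two norms by choosing the free part of the subgradient (the $\P_{T'^\perp}$-component for the nuclear norm, the off-support columns for $\|\cdot\|_{1,2}$) to align with the corresponding part of the perturbation.

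The certificate $Q$ enters through the identity $\langle Q,\Delta_L\rangle=-\langle QW^{\ddagger},\Delta_C\rangle$, obtained from $\Delta_L=-\Delta_C W^T$ and $W^{\ddagger\dagger}=W^T$. Splitting $Q=\P_{T'}(Q)+\P_{T'^\perp}(Q)$ and using (a) to replace $\P_{T'}(Q)$ by $U'V'^\dagger$, I would rewrite the cross term $\langle U'V'^\dagger,\Delta_L\rangle+\lambda\langle G,\Delta_C\rangle$ as $\langle\lambda G-QW^{\ddagger},\Delta_C\rangle-\langle\P_{T'^\perp}(Q),\Delta_L\rangle$. Condition (c) cancels the support part columnwise on $\I'$; off $\I'$ the matrix $G$ vanishes, so $\ell_{1,2}/\ell_{\infty,2}$ duality together with (c)--(d) bounds the remaining piece by $\lambda\|\P_{(\I')^c}(\Delta_C)\|_{1,2}$, while spectral/nuclear duality together with (b) bounds $|\langle\P_{T'^\perp}(Q),\Delta_L\rangle|$ by $\|\P_{T'^\perp}(\Delta_L)\|_*$. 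Feeding these into the two subgradient inequalities collapses the objective gap into a sum of nonnegative terms, $(1-\|\P_{T'^\perp}(Q)\|)\|\P_{T'^\perp}(\Delta_L)\|_*+\sum_{i\notin\I'}(\lambda-\|(QW^{\ddagger})_i\|_2)\|(\Delta_C)_i\|_2\ge0$, which proves $(L',C')$ is optimal for (\ref{eqn:opt}).

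For the recovery statement I would reuse this chain under the assumption that some $(L^*,C^*)$ is also optimal, so the gap is exactly $0$ and each nonnegative summand must vanish. Strict (b) forces $\P_{T'^\perp}(\Delta_L)=0$, i.e. $\Delta_L\in\P_{T'}$; strict (d) forces $(\Delta_C)_i=0$ for all $i\in\bar{\I}^c$, and since $C'$ is supported on $\I'\subseteq\bar{\I}$ this yields $\P_{\bar{\I}}(C^*)=C^*$ (only strictness on $\bar{\I}^c$ is needed, which is consistent with recovering $\bar{\I}$ rather than $\I'$). To upgrade $\Delta_L\in\P_{T'}$ to $\Delta_L\in\P_{\bar{U}}$, observe that $\Delta_C$ now being supported on $\bar{\I}$ makes $\Delta_L=-\Delta_C W^T$ column-supported in $\bar{\J}$, so $\Delta_L\in\P_{\bar{\J}}$. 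Setting $B:=\P_{U'^\perp}(\Delta_L)$ and using $\P_{U'^\perp}\P_{V'^\perp}(\Delta_L)=0$ gives $B=\P_{U'^\perp}\P_{V'}(\Delta_L)=(I-U'U'^\dagger)\Delta_L V'V'^\dagger$; this matrix obeys $BV'V'^\dagger=B$, so $B\in\P_{V'}$, while left multiplication by $(I-U'U'^\dagger)$ preserves column support, so $B\in\P_{\bar{\J}}$. The hypothesis $\P_{\bar{\J}}\cap\P_{V'}=\{0\}$ then forces $B=0$, i.e. $\Delta_L\in\P_{U'}$; invoking the preceding lemma ($U'U'^\dagger=\bar{U}\bar{U}^\dagger$) together with the Oracle constraint $\P_{\bar{U}}(L')=L'$ gives $L^*=L'+\Delta_L\in\P_{\bar{U}}$, that is $\P_{\bar{U}}(L^*)=L^*$.

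I expect the last step---turning $\Delta_L\in\P_{T'}$ into $\Delta_L\in\P_{\bar{U}}$---to be the main obstacle, since this is exactly where the transform $W$ makes the argument heavier than in \cite{XCS12}: one must show that the single matrix $B=\P_{U'^\perp}(\Delta_L)$ lands simultaneously in $\P_{V'}$ and in $\P_{\bar{\J}}$, and the transversality hypothesis $\P_{\bar{\J}}\cap\P_{V'}=\{0\}$ is precisely what is needed to annihilate it. Minor but necessary care is required in treating the complex inner product through its real part (so the subdifferential descriptions remain valid) and in checking that post-multiplication by $W^T$ only enlarges the column support to $\bar{\J}$, which is what ties the algebraic condition to the geometry of the attack.
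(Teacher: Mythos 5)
Your proposal is correct and follows essentially the same route as the paper's proof: parametrize the feasible perturbations by $\Delta$ with $\Delta_L=-\Delta_C W^T$, use sharpened subgradients of $\|\cdot\|_*$ and $\|\cdot\|_{1,2}$ with $Q$ supplying the support components via (a) and (c), bound the cross terms by (b) and (d) to get a nonnegative objective gap, and then use strictness plus the transversality condition $\P_{\bar{\J}}\cap\P_{V'}=\{0\}$ to annihilate $\P_{U'^\perp}(\Delta W^T)$ and conclude $\Delta W^T\in\P_{\bar U}$, $\Delta_C\in\P_{\bar{\I}}$. The only differences are cosmetic (per-column bookkeeping over $(\I')^c$ versus the paper's aggregate $\ell_{1,2}/\ell_{\infty,2}$ bound restricted to $\bar{\I}^c$, and a slightly cleaner isolation of $B=\P_{U'^\perp}(\Delta_L)$).
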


The major technical step is to construct $Q$, called the \textit{dual certificate},  that satisfies (\ref{eqn:cond}).  Our construction method is as follows. Pick $\hat{H} \in \G(C')$ that satisfies 
\begin{equation}\label{eqn:hath} 
\hat{V}^\dag W^{\ddagger}_{\bar{\I}} = \lambda\bar{U}^\dag \hat{H}.
\end{equation}
Define
\begin{equation}\label{eqn:phi}
\Phi := \lambda \hat{H} (W_{\bar{\I}}^{T}W^{\ddagger}_{\bar{\I}})^{-1}W_{\bar{\I}}^{T}, \text{ } \Delta_1:=\P_{\bar{U}}(\Phi),
\end{equation}
\begin{equation} \label{eqn:delta2}
\Delta_2:=\P_{\bar{U}^\perp}(I-\P_{W_{\bar{\I}}})\P_{\hat{V}}(I+\sum_{i=1}^\infty (\P_{\hat{V}}\P_{W_{\bar{\I}}}\P_{\hat{V}})^i)\P_{\hat{V}}(\Phi), 
\end{equation}
\begin{equation} \label{eqn:PWI}
\text{ where } \P_{W_{\bar{\I}}}(X):= X W^{\ddagger}_{\bar{\I}}(W_{\bar{\I}}^{T}W^{\ddagger}_{\bar{\I}})^{-1}W_{\bar{\I}}^{T}.
\end{equation}
\begin{equation}\label{eqn:Q}
  \quad \quad Q:= \bar{U} \hat{V}^\dag +\Phi-\Delta_1-\Delta_2.
\end{equation}
We show that $Q$ in (\ref{eqn:Q}) is well defined in Appendix-B. Lemma \ref{lem:certificate} shows  that $Q$  in (\ref{eqn:Q}) is the desired dual certificate. 
\begin{lemma}\label{lem:certificate}
Suppose there exists nonzero $\tilde{k}$ such that $\tilde{k} \mu \leq c$ for $c$ in (\ref{eqn:psistar}), and $\lambda_{\min,\tilde{k}} \leq \lambda_{\max,\tilde{k}}$ with $\lambda_{\min,\tilde{k}}$ and $\lambda_{\max,\tilde{k}}$ defined in (\ref{eqn:lambdamin}) and (\ref{eqn:lambdamax}). Then as long as the column support of $\bar{C}$ has size at most $\tilde{k}$, for any $\lambda \in [\lambda_{\min,\tilde{k}}, \lambda_{\max,\tilde{k}}]$, $Q$ defined in (\ref{eqn:Q}) satisfies (\ref{eqn:cond}).
\end{lemma}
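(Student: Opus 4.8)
The plan is to verify directly that the explicit $Q$ of (\ref{eqn:Q}) meets all four requirements (a)--(d) of (\ref{eqn:cond}), treating (a) and (c) as exact identities forced by the construction and (b) and (d) as the two norm inequalities whose simultaneous feasibility carves out the interval $[\lambda_{\min,\tilde{k}},\lambda_{\max,\tilde{k}}]$. Two structural facts organize every computation: the column projector $\P_{\bar U}$ multiplies on the left whereas $\P_{\hat V}$ and $\P_{W_{\bar{\I}}}$ multiply on the right, so these two families commute; and $W^T_{\bar{\I}}W^{\ddagger}_{\bar{\I}}$ is Hermitian with $\|(W^T_{\bar{\I}}W^{\ddagger}_{\bar{\I}})^{-1}\|\le\sigma_{\tilde{k}}$, which (since $\tilde{k}\mu\le c$ and Lemma \ref{lem:sigmak}) confines its eigenvalues to $[1-c,1+c]$ and makes $\P_{W_{\bar{\I}}}$ an orthogonal projection.

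First I would collect the spectral estimates that make $Q$ well defined and feed the inequalities. By the defining relation (\ref{eqn:hath}) each column obeys $\|\hat V^\dag W^{\ddagger}_i\|=\lambda\|\bar U^\dag\hat H_i\|\le\lambda$, so $\|\hat V^\dag W^{\ddagger}_{\bar{\I}}\|\le\sqrt{\tilde{k}}\,\lambda$ and the right-acting operator $\P_{\hat V}\P_{W_{\bar{\I}}}\P_{\hat V}$ has norm at most $\tilde{k}\sigma_{\tilde{k}}\lambda^2$, which is $\le\tilde{\psi}<1$ as soon as $\lambda\le\lambda_{\max,\tilde{k}}$. This guarantees convergence of the Neumann series in (\ref{eqn:delta2}) with $\|(I-\P_{\hat V}\P_{W_{\bar{\I}}}\P_{\hat V})^{-1}\|\le(1-\tilde{\psi})^{-1}$, so $Q$ is well defined, and the same estimates give $\|\Phi\|\le\sqrt{\tilde{\psi}}\,\|W_{\bar{\I}}\|\sqrt{\sigma_{\tilde{k}}}$ with $\|W_{\bar{\I}}\|\sqrt{\sigma_{\tilde{k}}}\le\sqrt{(1+c)/(1-c)}$, together with $\|\Delta_2\|\le\|\Phi\|/(1-\tilde{\psi})$.

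I would then dispatch the two identities. For (c), right-multiplication by $W^{\ddagger}_{\bar{\I}}$ collapses the construction: $\Phi W^{\ddagger}_{\bar{\I}}=\lambda\hat H$, relation (\ref{eqn:hath}) gives $\bar U\hat V^\dag W^{\ddagger}_{\bar{\I}}=\Delta_1 W^{\ddagger}_{\bar{\I}}$ so these cancel, and $\Delta_2 W^{\ddagger}_{\bar{\I}}=0$ because the outermost factor $(I-\P_{W_{\bar{\I}}})$ annihilates $W^{\ddagger}_{\bar{\I}}$; hence $(QW^{\ddagger})_{\bar{\I}}/\lambda=\hat H\in\G(C')$ exactly. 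For (a), the left part vanishes since $\P_{\bar U}(\Phi-\Delta_1)=0$ and $\Delta_2\in\bar U^\perp$, while the entire purpose of the Neumann-series definition of $\Delta_2$ is that $\P_{\bar U^\perp}\P_{\hat V}(\Phi-\Delta_2)=0$; combined with $\P_{T'}=\P_{\bar U}+\P_{\bar U^\perp}\P_{\hat V}$ this yields $\P_{T'}(Q)=\bar U\hat V^\dag=U'V'^\dag$.

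The inequalities are where the admissible range of $\lambda$ is born. For (b), the identity $\P_{T'}(\Phi-\Delta_1-\Delta_2)=0$ lets me write $\P_{T'^\perp}(Q)=\P_{\bar U^\perp}(\Phi)-\Delta_2$, whose norm is at most $\|\Phi\|\,(1+(1-\tilde{\psi})^{-1})=\|\Phi\|\,(2-\tilde{\psi})/(1-\tilde{\psi})$; inserting the $\|\Phi\|$ estimate reduces $\|\P_{T'^\perp}(Q)\|\le1$ to a scalar inequality relating $\tilde{\psi}$ and $c$, which the admissibility condition (\ref{eqn:psistar}) is chosen to guarantee. For (d), I would expand $(QW^{\ddagger})_j$ for $j\in\bar{\I}^c$ into the four pieces of $Q$: the leading piece $\bar U\hat V^\dag W^{\ddagger}_j$ has norm at most the incoherence $\epsilon$ (inherited from $\bar V$ through the structural identity $\bar U\hat V^\dag=U'V'^\dag$), while each of $\Phi,\Delta_1,\Delta_2$ contributes a factor $\lambda$ together with $\|W^T_{\bar{\I}}W^{\ddagger}_j\|\le\sqrt{\tilde{k}}\,\mu$ and a $\sigma_{\tilde{k}}$ from the Gram inverse, so the column norm is bounded by $(1+(2-\tilde{\psi})^{-1})(\epsilon+\tilde{k}\sigma_{\tilde{k}}\mu\,\lambda)$; imposing $\le\lambda$ and solving (the denominator is positive because $\tilde{k}\mu\le c$) returns exactly $\lambda\ge\lambda_{\min,\tilde{k}}$. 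I expect the main obstacle to be two-fold: verifying the cancellation in (a) --- that the Neumann construction of $\Delta_2$ reproduces $\P_{\bar U^\perp}\P_{\hat V}(\Phi)$ exactly, which requires careful operator algebra backed by the convergence bound --- and the constant bookkeeping in (b) and (d), where one must propagate the conditioning factors $\|W_{\bar{\I}}\|^2\le1+c$ and $\sigma_{\tilde{k}}\le(1-c)^{-1}$ so that (\ref{eqn:psistar}) and $\tilde{k}\mu\le c$ become precisely the inequalities that close both bounds uniformly over the whole interval $[\lambda_{\min,\tilde{k}},\lambda_{\max,\tilde{k}}]$.
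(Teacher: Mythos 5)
Your proposal is correct and follows essentially the same route as the paper: the identities (a) and (c) are verified by exploiting that $\P_{\bar U}$ acts by left multiplication while $\P_{\hat V}$ and $\P_{W_{\bar{\I}}}$ act by right multiplication (so the Neumann-series inverse collapses exactly as you describe), and (b) and (d) reduce to the same scalar inequalities $\frac{2-\tilde{\psi}}{1-\tilde{\psi}}\sqrt{\tilde{\psi}}\sqrt{(1+c)/(1-c)}\le 1$ and $(1+(2-\tilde{\psi})^{-1})(\epsilon+\lambda\tilde{k}\sigma_{\tilde{k}}\mu)\le\lambda$ that define $\lambda_{\max,\tilde{k}}$ and $\lambda_{\min,\tilde{k}}$ in the paper. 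The only cosmetic difference is that you verify (a) via the decomposition $\P_{T'}=\P_{\bar U}+\P_{\bar U^\perp}\P_{\hat V}$ in one pass, whereas the paper checks $\P_{U'}(Q)$ and $\P_{V'}(Q)$ separately.
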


Theorem \ref{thm:recover} follows when we combine Lemmas \ref{lem:cond} and \ref{lem:certificate}. Please refer to the Appendix for the proofs. 
\section{Attack Identification with Noise}\label{sec:method1}

We now analyze the detection performance when $M$ contains noise ($N \neq 0$) with $\|N\|_F \le \eta$. Given $k$, define
\begin{equation}\nonumber
\lambda'_{\min,k}=\frac{(1+(2-\tilde{\psi})^{-1})\epsilon}{1/2-(1+(2-\tilde{\psi})^{-1})k\sigma\mu}, \text{ and } \lambda'_{\max,k}= \frac{1}{2}\sqrt{\frac{\tilde{\psi}}{k\sigma_k}}. 
\end{equation}
\begin{theorem}\label{thm:recover1}
Suppose there exists nonzero $\tilde{k}$ such that $\tilde{k} \mu \leq c$ for $c$ in (\ref{eqn:psistar}), and $\lambda'_{\min,\tilde{k}} \leq \lambda'_{\max,\tilde{k}}$. 
Then if column support size of $\bar{C}$ is at most $\tilde{k}$, for any $\lambda \in [\lambda'_{\min,\tilde{k}}, \lambda'_{\max,\tilde{k}}]$, there exists a pair $(\tilde{L}$, $\tilde{C})$, where $\tilde{L}+\tilde{C}W^T = \bar{L}+\bar{C}W^T$, $\P_{\bar{U}}(\tilde{L})=\tilde{L}$ and $\P_{\bar{\I}}(\tilde{C})=\tilde{C}$, such that the output of Method 1 satisfies
\begin{align}\label{eqn:L}\nonumber
& \|L^*-\tilde{L}\|_F \\ 
\le & (2-\tilde{\psi}+\frac{\lambda+(2-\tilde{\psi})\sqrt{1+(n-1)\mu}}{\lambda}\sqrt{\theta+3r})\frac{2\eta}{1-\tilde{\psi}},
\end{align}
\vspace{-5mm}
\begin{align}\label{eqn:C}\nonumber
&\text{and }\|C^*-\tilde{C}\|_F \\ \nonumber
\le & (1+(\frac{\lambda+\sqrt{1+(n-1)\mu}}{\lambda}+\frac{1-\tilde{\psi}}{\lambda\sigma_k\sqrt{1+(k-1)\mu}})\sqrt{\theta+3r})\\
& \frac{2\eta\sigma_k\sqrt{1+(k-1)\mu}}{1-\tilde{\psi}},
\end{align}
where $\theta := \min(t, p)$.
\end{theorem}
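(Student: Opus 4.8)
The plan is to mirror the noiseless argument behind Theorem~\ref{thm:recover}, replacing the exact feasibility constraint $M=L+CW^T$ by its relaxed quadratic form and then tracking the resulting $\eta$-slack through every inequality. First I would identify the reference pair $(\tilde{L},\tilde{C})$ as the noiseless signal carrying the correct structure, i.e. the structured decomposition with $\tilde{L}+\tilde{C}W^T=\bar{L}+\bar{C}W^T$, $\P_{\bar{U}}(\tilde{L})=\tilde{L}$ and $\P_{\bar{\I}}(\tilde{C})=\tilde{C}$. Since then $M-\tilde{L}-\tilde{C}W^T=N$, we have $\|M-\tilde{L}-\tilde{C}W^T\|_F=\|N\|_F\le\eta$, so $(\tilde{L},\tilde{C})$ is feasible for (\ref{eqn:opt}). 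Writing $\Delta_L:=L^*-\tilde{L}$ and $\Delta_C:=C^*-\tilde{C}$, feasibility of both pairs and the triangle inequality give the single fundamental residual bound $\|\Delta_L+\Delta_C W^T\|_F\le 2\eta$; this is the only place the noise enters.

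Second, I would exploit optimality of the recovered pair, $\|L^*\|_*+\lambda\|C^*\|_{1,2}\le\|\tilde{L}\|_*+\lambda\|\tilde{C}\|_{1,2}$. Expanding both objectives around $(\tilde{L},\tilde{C})$ with the subgradient inequalities for the nuclear norm and for the $\ell_{1,2}$ norm, and inserting the dual certificate $Q$ of Lemma~\ref{lem:certificate} (whose properties $(a)$--$(d)$ in (\ref{eqn:cond}) hold verbatim, being independent of the noise), yields a one-sided inequality in which the transverse components $\|\P_{T'^{\perp}}(\Delta_L)\|_*$ and $\|\P_{\bar{\I}^c}(\Delta_C)\|_{1,2}$ are controlled by the inner product $\langle Q,\Delta_L+\Delta_C W^T\rangle$. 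By Cauchy--Schwarz this inner product is at most $\|Q\|\cdot\|\Delta_L+\Delta_C W^T\|_F$, hence a constant multiple of $2\eta$ by the residual bound. Because $(b)$ and $(d)$ of (\ref{eqn:cond}) hold \emph{strictly}, with gap quantified by $\tilde{\psi}$, the transverse parts are each bounded by an explicit multiple of $\eta$, producing the $1-\tilde{\psi}$ denominators in (\ref{eqn:L}) and (\ref{eqn:C}).

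Third, I would recover the tangent components. Decomposing $\Delta_L=\P_{T'}(\Delta_L)+\P_{T'^{\perp}}(\Delta_L)$ and $\Delta_C=\P_{\bar{\I}}(\Delta_C)+\P_{\bar{\I}^c}(\Delta_C)$, the residual bound together with the operator-norm estimates $\|W\|\le\sqrt{1+(n-1)\mu}$ and $\|(W_{\bar{\I}}^\dagger W_{\bar{\I}})^{-1}\|\le\sigma_k$, $\|W_{\bar{\I}}\|\le\sqrt{1+(k-1)\mu}$ (all obtained by applying Gershgorin to the Gram matrix of the unit-norm columns of $W$, as in Lemma~\ref{lem:sigmak}) let me pass between $\|\Delta_C\|_F$ and $\|\Delta_C W^T\|_F$ and isolate the on-support, on-subspace pieces. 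Solving the resulting coupled inequalities for $\|\Delta_L\|_F$ and $\|\Delta_C\|_F$ and collecting constants reproduces exactly (\ref{eqn:L}) and (\ref{eqn:C}); the dimension factor $\theta+3r$ with $\theta=\min(t,p)$ enters when the nuclear- and $\ell_{1,2}$-norm bounds on the transverse parts are converted to Frobenius norm through the rank-$r$ tangent space.

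The main obstacle is the dense transform $W$. Unlike the sparse/identity setting of \cite{XCS12}, the perturbation $\Delta_C$ and its image $\Delta_C W^T$ live in different geometries, so every step that in the noiseless proof relied on $M=L+CW^T$ must now be balanced against $\|W\|$, $\sigma_k$, and the restricted conditioning $\sqrt{1+(k-1)\mu}$. Keeping these factors tight enough that the transverse and tangent estimates close into the stated coefficients---in particular propagating the strict slack in $(b)$ and $(d)$ so that the harmless denominator $1-\tilde{\psi}$ appears rather than a vanishing one---is the delicate part; the noise itself contributes only the clean factor $2\eta$.
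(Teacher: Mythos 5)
Your overall strategy does track the paper's: feasibility of both pairs gives the residual bound $\|(L^*-\tilde{L})+(C^*-\tilde{C})W^T\|_F\le 2\eta$, optimality plus subgradients plus a dual certificate controls the transverse components, and Ger\v{s}hgorin-type bounds on $\|W\|$, $\|W_{\bar{\I}}\|$, and $\sigma_k$ convert between $\Delta_C$ and $\Delta_C W^T$. But there is a genuine gap in how you choose the reference pair. You fix $(\tilde{L},\tilde{C})$ up front as ``the noiseless signal with the correct structure,'' which in effect is $(\bar{L},\bar{C})$. With that choice your third step cannot close: the residual bound and the subgradient inequality only control $\P_{\bar{T}^{\perp}}(L^*-\bar{L})$ and $\P_{\bar{\I}^c}(C^*-\bar{C})$, and neither controls the component of $C^*-\bar{C}$ that is supported on $\bar{\I}$ \emph{and} whose image under $W^T$ lies in the column space of $\bar{U}$. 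A perturbation $\Delta\in\P_{\bar{\I}}$ with $\Delta W^T\in\P_{\bar{U}}$ can be traded between $L$ and $C$ with zero residual and no change in either transverse norm, so $\|C^*-\bar{C}\|_F$ is simply not bounded by any multiple of $\eta$, and the ``coupled inequalities'' you propose to solve are degenerate in that direction. The paper resolves this by letting the reference pair depend on the solution itself: $\tilde{C}=\bar{C}+\P_{\bar{\I}}\P_{\bar{U}}(C^*-\bar{C})$ and $\tilde{L}=\bar{L}-\P_{\bar{\I}}\P_{\bar{U}}(C^*-\bar{C})W^T$, which absorbs exactly the unidentifiable component; this is the reason the theorem only asserts existence of \emph{some} structured pair. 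Your proposal never introduces this device, and it is the key idea of the noisy proof.

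A smaller but real issue: you assert that the certificate conditions of (\ref{eqn:cond}) carry over ``verbatim.'' They do not; the noisy argument needs the strengthened conditions (\ref{eqn:cond1}), namely $\|\P_{\bar{T}^{\perp}}(Q)\|\le 1/2$ and $\|(QW^{\ddagger})_{\bar{\I}^c}\|_{\infty,2}\le\lambda/2$, and it is precisely this factor-of-two slack that forces the narrower interval $[\lambda'_{\min,\tilde{k}},\lambda'_{\max,\tilde{k}}]$ (Lemma \ref{lem:certificate1}) and supplies the coefficients $1/2$ and $\lambda/2$ multiplying the transverse norms. You also misattribute two constants: the factor $\sqrt{\theta+3r}$ enters through $\|Q\|_F^2\le\|\bar{U}\bar{V}^{\dagger}\|_F^2+\|\P_{\bar{T}^{\perp}}(Q)\|_F^2\le r+(\theta-r)/4$ in the term $\langle Q,E\rangle$, not through converting the transverse norms to Frobenius norm; and the denominator $1-\tilde{\psi}$ comes from the self-bounding step $\|\P_{W_{\bar{\I}}}\P_{\bar{V}}(N_C^{+}\P_{\bar{\I}}(W)^T)\|_F\le\tilde{\psi}\,\|\P_{\bar{\I}}(N_C^{+})W^T\|_F$ (i.e., from $\|\P_{\bar{V}}\P_{W_{\bar{\I}}}\P_{\bar{V}}\|\le\tilde{\psi}$), not from the strict slack in (b) and (d).
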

The discussion of the existence of $\lambda$ is very similar to the discussion for Theorem \ref{thm:recover}, so we skip it. If $\tilde{k} \mu \leq c$ and $\tilde{k}=O(n/r)$ hold, then a proper $\lambda$ exists. Theorem \ref{thm:recover1} guarantees that ($L^*$, $C^*$) returned by Method 1 is ``close'' to a pair that has the correct column space and column support, and the distance measured by Frobenius norm is proportional to the noise level $\eta$. The proof of Theorem \ref{thm:recover1} follows the same line as the proof of Theorem 2 in \cite{XCS12} mostly with modifications to address the projection matrix $W$. We establish Lemma \ref{lem:cond1}, a counterpart in the noisy case of Lemma \ref{lem:cond}, that demonstrates that Method 1 succeeds if there exists a dual certificate $Q$ with tighter requirements than that in the noiseless case.

\begin{lemma}\label{lem:cond1}
There exists $(\tilde{L}$, $\tilde{C})$ where $\tilde{L}+\tilde{C}W^T = \bar{L}+\bar{C}W^T$, $\P_{\bar{U}}(\tilde{L})=\tilde{L}$, $\P_{\bar{\I}}(\tilde{C})=\tilde{C}$, such that the output of Method 1 satisfies (\ref{eqn:L}) and (\ref{eqn:C}),
if there exists $Q \in \C^{t\times p}$ that satisfies
\begin{equation}\label{eqn:cond1}
\begin{aligned}
&(a) \P_{\bar{T}} (Q) = \bar{U} \bar{V}^{\dagger}, 
\quad\quad (b)  \| \P_{\bar{T}^{\perp}} (Q)\| \leq 1/2 ,\\
&(c)   (QW^{\ddagger})_{\bar{\I}}/\lambda \in \G(\bar{C}),  
\quad \textrm{and }(d) \|(QW^{\ddagger})_{\bar{\I}^c} \|_{\infty,2} \leq \lambda/2.
\end{aligned}
\end{equation}
\end{lemma}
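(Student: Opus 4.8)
The plan is to run the standard ``dual certificate implies approximate recovery'' argument, adapted to the transformed column-sparse setting, taking care to track the extra factors introduced by $W$. First I would fix the reference pair. Since $(\bar{L},\bar{C})$ already has column space $\bar{U}$, column support $\bar{\I}$, and trivially satisfies $\bar{L}+\bar{C}W^T=\bar{L}+\bar{C}W^T$, I would take $(\tilde{L},\tilde{C})$ to be a decomposition of the clean data $\bar{L}+\bar{C}W^T$ meeting the three stated structural constraints (e.g. $(\tilde{L},\tilde{C})=(\bar{L},\bar{C})$). Writing $M=\bar{L}+\bar{C}W^T+N$ with $\|N\|_F\le\eta$, the reference pair is feasible for (\ref{eqn:opt}), so the optimal value is at most $\|\tilde{L}\|_*+\lambda\|\tilde{C}\|_{1,2}$. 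Setting the error matrices $H:=L^*-\tilde{L}$ and $G:=C^*-\tilde{C}$, feasibility of both $(L^*,C^*)$ and $(\tilde{L},\tilde{C})$ together with the triangle inequality gives the key residual bound $\|H+G W^T\|_F\le 2\eta$.

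Second, I would convert optimality into control of the out-of-model components using the certificate $Q$. Convexity of $\|\cdot\|_*$ and $\|\cdot\|_{1,2}$ yields the subgradient lower bounds $\|L^*\|_*\ge\|\tilde{L}\|_*+\langle \bar{U}\bar{V}^\dagger,H\rangle+\|\P_{\bar{T}^{\perp}}(H)\|_*$ and $\|C^*\|_{1,2}\ge\|\tilde{C}\|_{1,2}+\langle \mathrm{Sgn}(\tilde{C}),G\rangle+\|\P_{\bar{\I}^c}(G)\|_{1,2}$, where $\mathrm{Sgn}(\tilde{C})$ collects the normalized columns on $\bar{\I}$. I would substitute (a) $\P_{\bar{T}}(Q)=\bar{U}\bar{V}^\dagger$ and (c) $\P_{\bar{\I}}(QW^{\ddagger})=\lambda\,\mathrm{Sgn}(\tilde{C})$, together with the identity $\langle QW^{\ddagger},G\rangle=\langle Q,GW^T\rangle$, to rewrite $\langle \bar{U}\bar{V}^\dagger,H\rangle=\langle Q,H\rangle-\langle \P_{\bar{T}^{\perp}}(Q),\P_{\bar{T}^{\perp}}(H)\rangle$ and $\lambda\langle \mathrm{Sgn}(\tilde{C}),G\rangle=\langle Q,G W^T\rangle-\langle (QW^{\ddagger})_{\bar{\I}^c},\P_{\bar{\I}^c}(G)\rangle$, bounding the two correction terms by (b) and (d) via the dual pairings $\|\cdot\|/\|\cdot\|_*$ and $\|\cdot\|_{\infty,2}/\|\cdot\|_{1,2}$. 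Because the certificate carries slack $1/2$ in (b) and (d), the non-negative out-of-model terms survive with coefficient $1/2$, giving $0\ge\langle Q,H+G W^T\rangle+\tfrac{1}{2}\|\P_{\bar{T}^{\perp}}(H)\|_*+\tfrac{\lambda}{2}\|\P_{\bar{\I}^c}(G)\|_{1,2}$. Cauchy--Schwarz with the residual bound and $\|Q\|_F^2=\|\bar{U}\bar{V}^\dagger\|_F^2+\|\P_{\bar{T}^{\perp}}(Q)\|_F^2\le r+\tfrac{1}{4}\theta\le\theta+3r$ then yields $\|\P_{\bar{T}^{\perp}}(H)\|_*+\lambda\|\P_{\bar{\I}^c}(G)\|_{1,2}\le 4\eta\sqrt{\theta+3r}$.

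Third --- and this is where I expect the real difficulty --- I would recover the in-model components $\P_{\bar{T}}(H)$ and $\P_{\bar{\I}}(G)$ from the residual bound. Projecting $H+G W^T$ and using $\P_{\bar{T}^{\perp}}(\P_{\bar{T}}(H))=0$ gives $\|\P_{\bar{T}^{\perp}}(\P_{\bar{\I}}(G)W^T)\|_F\le 2\eta+\|\P_{\bar{T}^{\perp}}(H)\|_F+\|\P_{\bar{\I}^c}(G)W^T\|_F$, where the last term is at most $\|\P_{\bar{\I}^c}(G)\|_F\sqrt{1+(n-1)\mu}$ since the Gram matrix of $W$ has unit diagonal and off-diagonals bounded by $\mu$. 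The crux is a restricted-injectivity (angle) estimate showing that a transformed matrix with column support in $\bar{\I}$ keeps most of its energy outside $\bar{T}$: under $\tilde{k}\mu\le c$ and $\lambda\le\lambda_{\max,\tilde{k}}=\sqrt{\tilde{\psi}/(\tilde{k}\sigma_{\tilde{k}})}$ one should obtain $\|\P_{\bar{T}}(\P_{\bar{\I}}(G)W^T)\|_F\le\sqrt{\tilde{\psi}}\,\|\P_{\bar{\I}}(G)W^T\|_F$, which bounds $\|\P_{\bar{\I}}(G)W^T\|_F$ by a $(1-\tilde{\psi})^{-1}$-type multiple of the right-hand side above. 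Converting from $\P_{\bar{\I}}(G)W^T$ back to $\P_{\bar{\I}}(G)$ costs the conditioning of $W_{\bar{\I}}$, whose extreme singular values are governed by $\sigma_{\tilde{k}}=\|(W_{\bar{\I}}^{\dagger}W_{\bar{\I}})^{-1}\|$ and $\sqrt{1+(\tilde{k}-1)\mu}$; this is exactly how the factor $\sigma_{\tilde{k}}\sqrt{1+(\tilde{k}-1)\mu}$ enters. Establishing this angle estimate cleanly --- i.e. quantifying the separation between $\bar{T}$ and the range of $\P_{\bar{\I}}(\cdot)W^T$, the noisy analogue of the condition $\P_{\bar{\J}}\cap\P_{V'}=\{0\}$ in Lemma \ref{lem:cond} --- is the main obstacle, as it is what both forces the upper bound on $\lambda$ and produces the $(1-\tilde{\psi})^{-1}$ loss.

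Finally, I would assemble the pieces. Writing $\|H\|_F\le\|\P_{\bar{T}}(H)\|_F+\|\P_{\bar{T}^{\perp}}(H)\|_F$ and $\|G\|_F\le\|\P_{\bar{\I}}(G)\|_F+\|\P_{\bar{\I}^c}(G)\|_F$, I would substitute the in-model bounds from the previous step, the out-of-model bound $\|\P_{\bar{T}^{\perp}}(H)\|_*+\lambda\|\P_{\bar{\I}^c}(G)\|_{1,2}\le 4\eta\sqrt{\theta+3r}$ (using $\|\cdot\|_F\le\|\cdot\|_*$ and $\|\cdot\|_F\le\|\cdot\|_{1,2}$), together with $\|H+G W^T\|_F\le2\eta$. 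Collecting terms and tracking the constants $2-\tilde{\psi}$, $1-\tilde{\psi}$, $\sqrt{1+(n-1)\mu}$, $\sigma_{\tilde{k}}\sqrt{1+(\tilde{k}-1)\mu}$ and $\sqrt{\theta+3r}$ should reproduce (\ref{eqn:L}) and (\ref{eqn:C}). These last manipulations are routine once the angle estimate of the third step is in hand.
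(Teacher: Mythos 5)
Your overall architecture matches the paper's: feasibility of the reference pair gives $\|N_L+N_CW^T\|_F\le 2\eta$, the subgradient inequality with the certificate $Q$ and the slack $1/2$ in (b) and (d) gives $\tfrac12\|\P_{\bar{T}^{\perp}}(N_L)\|_*+\tfrac{\lambda}{2}\|\P_{\bar{\I}^c}(N_C)\|_{1,2}\le 2\eta\|Q\|_F\le\eta\sqrt{\theta+3r}$, and an injectivity estimate with constant $(1-\tilde{\psi})^{-1}$ recovers the in-model components. But there is a genuine gap at your very first step, and it propagates into your third. You take $(\tilde{L},\tilde{C})=(\bar{L},\bar{C})$. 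The paper instead sets $\tilde{C}=\bar{C}+\P_{\bar{\I}}\P_{\bar{U}}(C^*-\bar{C})$ and $\tilde{L}=\bar{L}-\P_{\bar{\I}}\P_{\bar{U}}(C^*-\bar{C})W^T$, i.e., the reference pair is chosen \emph{after seeing the solution} so as to absorb the component of the error lying in the unidentifiable direction $\{\Delta\in\P_{\bar{\I}}:\ \Delta W^T\in\P_{\bar{U}}\}$. This is precisely why the lemma is phrased as ``there exists a pair'': along that direction mass can be shuttled between $L$ and $CW^T$ without changing the sum or violating the structural constraints, so $\|C^*-\bar{C}\|_F$ itself is not controlled by $\eta$.

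Concretely, the angle estimate you lean on in step three, $\|\P_{\bar{T}}(\P_{\bar{\I}}(G)W^T)\|_F\le\sqrt{\tilde{\psi}}\,\|\P_{\bar{\I}}(G)W^T\|_F$ with $\tilde{\psi}<1$, is false as stated: if $G=\P_{\bar{\I}}\P_{\bar{U}}(G)\ne 0$ then $GW^T\in\P_{\bar{U}}\subseteq\P_{\bar{T}}$ and the left side equals $\|\P_{\bar{\I}}(G)W^T\|_F$ with no loss factor. The separation that actually holds (Lemma \ref{lem:psi}, reused in the noisy proof) is between the \emph{row} space $\bar{V}$ and the column space of $W_{\bar{\I}}$, namely $\|\P_{\bar{V}}\P_{W_{\bar{\I}}}\P_{\bar{V}}\|\le\tilde{\psi}$, and it only becomes usable after the $\P_{\bar{U}}$-component of $\P_{\bar{\I}}(N_C)$ has been cancelled --- which is exactly what replacing $N_C$ by $N_C^+=C^*-\tilde{C}$ accomplishes in the paper's identity chain (\ref{eqmain}). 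So the fix is not cosmetic: you must define $(\tilde{L},\tilde{C})$ as above, check $\P_{\bar{U}}(\tilde{L})=\tilde{L}$ and $\P_{\bar{\I}}(\tilde{C})=\tilde{C}$, and run your third step with $N_C^+$ in place of $G$; the remaining pieces of your outline (residual bound, out-of-model bounds, conversion via $\sigma_k\sqrt{1+(k-1)\mu}$, assembly) then go through essentially as you describe.
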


The construction of $Q$ is the same as that in Section \ref{sec:method} (equations (\ref{eqn:hath}) to (\ref{eqn:Q})). We show that $Q$ is the desire dual certificate if $\lambda$ belongs to $[\lambda'_{\min}, \lambda'_{\max}]$ in Lemma \ref{lem:certificate1}.

\begin{lemma}\label{lem:certificate1}
If the column support size of $\bar{C}$ is at most $\tilde{k}$, then for any $\lambda \in [\lambda'_{\min}, \lambda'_{\max}]$, $Q$ defined in (\ref{eqn:Q}) satisfies (\ref{eqn:cond1}).
\end{lemma}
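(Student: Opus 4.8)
The plan is to retrace the proof of Lemma~\ref{lem:certificate} essentially verbatim, now reading the construction (\ref{eqn:hath})--(\ref{eqn:Q}) with the true quantities $(\bar U,\bar V,\bar C,\bar\I,\bar T)$ in the roles that the Oracle quantities $(U',\hat V,C',\I',T')$ played in the noiseless case (so $\hat V$ is identified with $\bar V$). The conditions (\ref{eqn:cond1}) differ from (\ref{eqn:cond}) only in that the tolerances in parts (b) and (d) are halved, from $1$ and $\lambda$ to $1/2$ and $\lambda/2$, and in that $\lambda$ now ranges over $[\lambda'_{\min,\tilde k},\lambda'_{\max,\tilde k}]$. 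Throughout I work under the standing hypotheses of Theorem~\ref{thm:recover1}, namely $\tilde k\mu\le c$ and $\lambda'_{\min,\tilde k}\le\lambda'_{\max,\tilde k}$; by Lemma~\ref{lem:sigmak} these keep $\sigma_{\tilde k}=\Theta(1)$ and keep the Neumann series in (\ref{eqn:delta2}) convergent, so that $Q$ in (\ref{eqn:Q}) is well defined exactly as before.

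I would first dispose of the two equality-type requirements, which carry no tolerance and hence transfer unchanged. For (a), the term $\bar U\bar V^\dagger$ already lies in $\bar T$, while the construction (\ref{eqn:phi})--(\ref{eqn:delta2}) is engineered so that $\Phi-\Delta_1-\Delta_2\in\bar T^\perp$; thus $\P_{\bar T}(Q)=\bar U\bar V^\dagger$. For (c), restricting $QW^{\ddagger}$ to the columns in $\bar\I$ and using (\ref{eqn:hath}) gives $\Phi W^{\ddagger}_{\bar\I}=\lambda\hat H$ by the projector identity in (\ref{eqn:PWI}), while the contribution of $\bar U\bar V^\dagger$ cancels against that of $\Delta_1=\P_{\bar U}(\Phi)$ (both equal $\lambda\bar U\bar U^\dagger\hat H$) and $\Delta_2 W^{\ddagger}_{\bar\I}=0$ because the outermost factor $(I-\P_{W_{\bar\I}})$ in (\ref{eqn:delta2}) annihilates $W^{\ddagger}_{\bar\I}$ on the right. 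Hence $(QW^{\ddagger})_{\bar\I}=\lambda\hat H$, so $(QW^{\ddagger})_{\bar\I}/\lambda=\hat H\in\G(\bar C)$, and (c) holds with equality.

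The substance is to resharpen the two inequalities. Since $\bar U\bar V^\dagger$ and $\Delta_1$ lie in $\bar T$, we have $\P_{\bar T^\perp}(Q)=\P_{\bar T^\perp}(\Phi-\Delta_2)$, and the noiseless proof bounds its spectral norm by a quantity that is homogeneous of degree one in $\lambda$ (the $\lambda$-free factors being controlled by $\|(W_{\bar\I}^{T}W^{\ddagger}_{\bar\I})^{-1}\|\le\sigma_{\tilde k}$, $\|\hat H\|\le\sqrt{\tilde k}$, and the geometric bound on the series in $\Delta_2$), a quantity that equals at most $1$ at $\lambda=\lambda_{\max,\tilde k}=\sqrt{\tilde\psi/(\tilde k\sigma_{\tilde k})}$. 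Because $\lambda'_{\max,\tilde k}=\tfrac12\lambda_{\max,\tilde k}$ and the bound is proportional to $\lambda$, every $\lambda\le\lambda'_{\max,\tilde k}$ yields $\|\P_{\bar T^\perp}(Q)\|\le 1/2$, which is (b). For (d) the noiseless proof gives an affine estimate of the form $\|(QW^{\ddagger})_{\bar\I^c}\|_{\infty,2}\le(1+(2-\tilde\psi)^{-1})\epsilon+(1+(2-\tilde\psi)^{-1})\tilde k\sigma_{\tilde k}\mu\,\lambda$, the first term arising from $\|\bar V^\dagger W^{\ddagger}\|_{\infty,2}=\epsilon$ and the second from the off-support coherence $\mu=\max_{i\ne j}\|W_i^{\dagger}W_j\|$. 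Imposing that this be at most $\lambda/2$ rearranges to $(1+(2-\tilde\psi)^{-1})\epsilon\le\lambda\big(\tfrac12-(1+(2-\tilde\psi)^{-1})\tilde k\sigma_{\tilde k}\mu\big)$, i.e.\ exactly $\lambda\ge\lambda'_{\min,\tilde k}$; this is the source of the $1/2$ in the denominator of $\lambda'_{\min,\tilde k}$. Thus every $\lambda\in[\lambda'_{\min,\tilde k},\lambda'_{\max,\tilde k}]$ secures both (b) and (d), completing the certificate.

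I expect the main obstacle to be bookkeeping rather than a new idea: one must verify that every constant inherited from the noiseless estimates enters the (b)-bound homogeneously in $\lambda$ and the (d)-bound affinely in $\lambda$, so that halving the two thresholds corresponds cleanly to the substitutions $\lambda_{\max,\tilde k}\mapsto\tfrac12\lambda_{\max,\tilde k}$ and $1\mapsto\tfrac12$ in the $\lambda_{\min}$ denominator. The one delicate point worth re-checking is the norm and convergence of the infinite series defining $\Delta_2$: I would confirm that the contraction factor $\|\P_{\hat V}\P_{W_{\bar\I}}\P_{\hat V}\|$ stays strictly below $1$ under $\tilde k\mu\le c<1$, so that the geometric bound controlling $\Delta_2$ is identical to the noiseless one; only then is the window $[\lambda'_{\min,\tilde k},\lambda'_{\max,\tilde k}]$ nonempty and the halved bounds attainable.
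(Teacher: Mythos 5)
Your proposal is correct and follows essentially the same route as the paper's own proof: conditions (a) and (c) are equalities inherited verbatim from Steps 1 and 3 of the noiseless certificate verification, while (b) and (d) follow by observing that the Step-2 bound on $\|\P_{\bar{T}^{\perp}}(Q)\|$ is homogeneous of degree one in $\lambda$ (so halving $\lambda_{\max,\tilde{k}}$ halves the bound to $1/2$) and that the Step-4 bound is affine in $\lambda$ (so requiring it to be at most $\lambda/2$ yields exactly the $1/2$ in the denominator of $\lambda'_{\min,\tilde{k}}$). Your remarks on reinterpreting the construction with the true quantities $(\bar{U},\bar{V},\bar{C},\bar{\I})$ and on rechecking the contraction $\|\P_{\hat{V}}\P_{W_{\bar{\I}}}\P_{\hat{V}}\|\leq\tilde{\psi}<1$ under the smaller $\lambda'_{\max,\tilde{k}}$ are sound and consistent with what the paper implicitly relies on.
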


Theorem \ref{thm:recover1} follows when we combine Lemmas \ref{lem:cond1} and \ref{lem:certificate1}. Please refer to the Appendix for the proofs.
\section{Simulation}\label{sec:simu}

We explore the performance of data attack identification methods on both synthetic data and actual PMU data from the Central NY power system. We use CVX \cite{CVX} to solve (\ref{eqn:opt}). We identify a column of $C^*$ to be nonzero if its $\ell_2$ norm exceeds the predefined threshold $\epsilon_1$. Method 1 succeeds if $\|U^*U^{*\dagger}-\bar{U}\bar{U}^{\dagger}\|\leq \epsilon_2$ for some small positive $\epsilon_2$, and the column supports of $\bar{C}$ and $C^*$ are the same. 

\subsection{Performance on synthetic data}

Fix $t=p=50$. Given rank $r$, we generate matrices $A$ $\in$ $\R^{t \times r}$ and $B$ $\in$ $\R^{p \times r}$ with each entry  independently drawn from Gaussian $\mathcal{N}(0,1)$ and set $\bar{L} := A   B^{T}$. We generate matrix $W$ $\in$ $\R^{p \times n}$ with independent $\mathcal{N}(0,1)$ entries. To generate a column-sparse matrix $\bar{C} \in \R^{t \times n}$, we randomly select the column support and set the nonzero entries to be independent $\mathcal{N}(0,1)$.  We vary $r$ and the number of corrupted columns, and take 100 runs for each case. $\lambda$ is set to be 0.95.

\subsubsection{Noiseless formulation}
We simulate the observed measurement matrix $M$ according to (\ref{eqn:problem}) with $N=0$. We apply Method 1 to obtain the estimation $(L^*,C^*)$. We set $\epsilon_1$ and $\epsilon_2$ to be 0.002 and 0.01, respectively. Fig.~\ref{fig:gray} shows the transition property of Method 1 in gray scale. White stands for 100\% success while black denotes 100\% failure. When $n$ is 25, $W$ is a tall matrix ($p>n$). When $n$ is 100, $W$ is a flat matrix ($p<n$). For both simulations, the identification is successful even when rank $r$ is six, and $\bar{C}$ has two nonzero columns. 

\begin{figure}[h] 
\begin{center}
\includegraphics[height=0.43\linewidth]{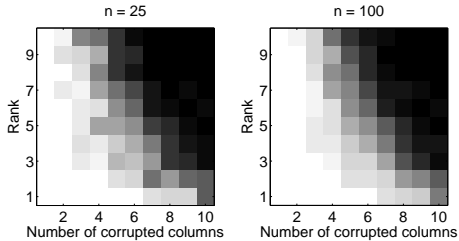}
\caption{Matrix decomposition performance for different $n$} \label{fig:gray} \vspace{-3 mm} 
\end{center}
\end{figure}

We further assume some of the observations are missing. We generate $M$ as before and then delete some randomly selected entries. Fig. \ref{fig:graynew} shows the decomposition performance of Method 1 for partial observation. We can see that the successful  decomposition rate is close to the complete observation case even only 80\% of the entries are observed. 

\begin{figure}[h] 
\begin{center}
\includegraphics[height=0.43\linewidth]{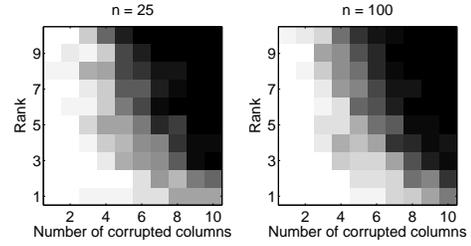}
\caption{Matrix decomposition performance for different $n$ with 80\% observed entries} \label{fig:graynew} \vspace{-7mm} 
\end{center}
\end{figure}  

\subsubsection{Noisy formulation}

We generate matrix $N \in \R^{t \times p}$ with independent Gaussian $\mathcal{N}(0,\sigma^2)$ entries. We fix the matrix rank $r$ to be 3 and the number of corrupted columns to be 3. We simulate the observed measurement matrix $M$ according to (\ref{eqn:problem}). We set $\eta$ to be $\|N\|_F$ and apply Method 1 to obtain the estimation $(L^*,C^*)$. $\epsilon_1$ is set to be 0.001.

\begin{figure}[h] 
\begin{center}
\psfrag{Noise level sigma}[][][0.65]{Noise level $\sigma$}
\psfrag{noise level sigma}[][][0.65]{Noise level $\sigma$}
\psfrag{Succeed rate}[][][0.65]{Succeed rate}
\psfrag{Difference of the column space}[][][0.48]{Difference of the column space}
\psfrag{(a) Column space}[][][0.65]{(a) Column space}
\psfrag{(b) Set of corrupted columns}[][][0.65]{(b) Set of corrupted columns}
\psfrag{difference}[][][0.6]{$\|U^*U^{*\dagger}-\bar{U}\bar{U}^{\dagger}\|$}
\includegraphics[height=0.3\linewidth]{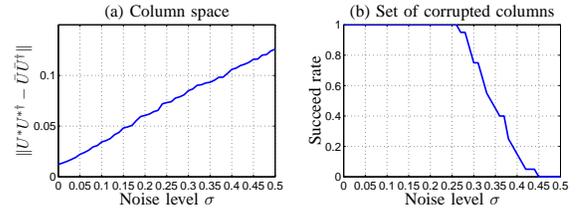}
\caption{Performance of Method 1 for different noise level $\sigma$} \label{fig:udiff} \vspace{-2mm}
\end{center}
\end{figure}

Fig. \ref{fig:udiff} shows the difference between the original and reconstructed column space ($\|U^*U^{*\dagger}-\bar{U}\bar{U}^{\dagger}\|$) and the succeed rate for determining the set of corrupted columns according to different noise level $\sigma$. We can see that Method 1 can successfully identify the corrupted columns when the noise level $\sigma$ is below 0.25. Method 1 can recover the column space with small errors when $\sigma$ is smaller than 0.1. 

\subsection{Comparison with other methods on synthetic data}\label{sec:compare}

\subsubsection{{\small $\bar{D}=\bar{C}W^T$} is column-sparse}

Refs. \cite{XCS12} \cite{LEDEH14} considered matrix decomposition problem when $\bar{D}$ is column-sparse and scattered-sparse, respectively. We compare our method with them in the special case that $\bar{D}=\bar{C}W^T$ is column-sparse. Fix $t=p=50$, $n=20$, and $r=2$. We generate $\bar{L}$ and $\bar{C}$ with the same rules as in Section V-A. We generate a binary matrix $W$ $\in$ $\R^{p \times n}$ with two `1's each row  and five `1's each column. Then the ratio of support sizes of $\bar{D}$ and $\bar{C}$ is about five. $\bar{D}$ is column-sparse when $\bar{C}$ is column-sparse. We simulate the measurement matrix $M$ according to (\ref{eqn:problem}) with $N=0$. 
$\lambda$ in our method is set to be 0.9. $\lambda$'s in methods of \cite{XCS12} and \cite{LEDEH14} are set to be 0.5 and 0.1, respectively.

\begin{figure}[h] 
\begin{center}
\psfrag{Number of outliers}[][][0.7]{Number of corrupted columns in $\bar{C}$}
\psfrag{success rate}[][][0.7]{Success rate}
\includegraphics[height=0.35\linewidth]{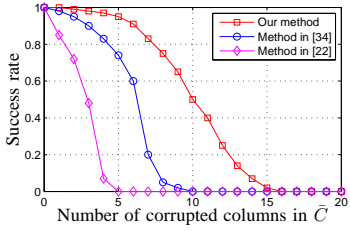}
\caption{Success rates when $\bar{D}=\bar{C}W^T$ is column-sparse.} \label{fig:compare} \vspace{-5mm}
\end{center}
\end{figure}

Fig.~\ref{fig:compare} shows the success rates of three methods  with different support sizes of $\bar{C}$. Our method  performs the best since we exploit the structure $\bar{D}=\bar{C}W^T$ besides sparsity. The false alarm rate of our method is zero.

\subsubsection{Combination of attack patterns.} We consider the general case that the attacks satisfy (\ref{eqn:Dg}). We use the generalized version   in (\ref{eqn:opt3})-(\ref{eqn:opt4}) to detect combined attacks. $\lambda_1$ and $\lambda_2$  in  (\ref{eqn:opt3}) are set to be 1 and 0.1, respectively. $\lambda$'s in methods of \cite{XCS12} and \cite{LEDEH14} are set to be 0.5 and 0.1, respectively. $\bar{L}$, $\bar{C}$, and $W$ are generated the same as above. $\bar{S}$ is a sparse matrix with nonzero entries independently drawn from $\mathcal{N}(0,1)$. We define the correct estimation of the column space of $\bar{L}$ as a successful recovery. Fig.~\ref{fig:compare2} compares the methods when $\bar{C}$ is a zero matrix. The attacks are scattered-sparse, and our method performs as well as that in \cite{LEDEH14}.  Fig.~\ref{fig:compare3} compares the methods when both column-sparse and scattered-sparse attacks exist. Besides  a sparse $\bar{S}$, we randomly select two columns in $\bar{C}$ and select their  entries independently from $\mathcal{N}(0,1)$. Only our method  succeeds when both attacks exist. 

\begin{figure}[h] 
\begin{center}
\psfrag{Average corruption rate}[][][0.7]{Percentage of corrupted entries in $\bar{S}$}
\psfrag{success rate}[][][0.7]{Success rate}
\includegraphics[height=0.35\linewidth]{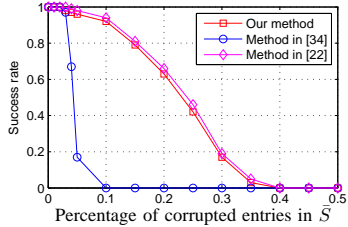}
\caption{Success rates when $\bar{D}=\bar{S}$ is scattered-sparse.} \label{fig:compare2}\vspace{-7mm}
\end{center}
\end{figure}

\begin{figure}[h] 
\begin{center}
\psfrag{Average corruption rate}[][][0.7]{Percentage of corrupted entries in $\bar{S}$ when $\bar{C}$ has two nonzero columns}
\psfrag{success rate}[][][0.7]{Success rate}
\includegraphics[height=0.35\linewidth]{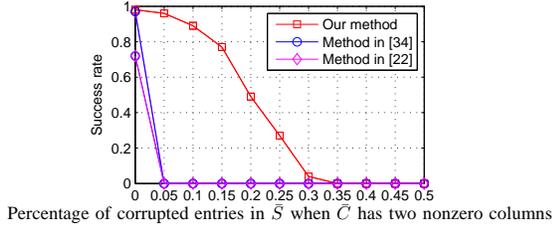}
\caption{Success rates when $\bar{D}=\bar{C}W^T+\bar{S}$.} \label{fig:compare3}\vspace{-7mm}
\end{center}
\end{figure}

\subsection{Performance comparison on actual PMU data}\label{sec:PMU}

We consider the PMU data shown in Section \ref{sec:pmu}. Two two-second PMU datasets are tested. One contains  ambient data, and the other contains an abnormal event ($t = 17-19s$ and $t = 2-4s$ in Fig. \ref{fig:current}, respectively). We first inject data attacks  as an intruder and then use Method 1 to detect the attacks.

We  consider the scenario that an intruder alters the PMU channels that measure $I^{12}$,$I^{52}$,$I^{13}$ and $I^{43}$ in order to corrupt voltage estimations of Buses 2 and 3. Fig.~\ref{fig:attack} visualizes the actual PMU data and the data after the injection of attacks for two 2-second datasets. $\eta$ and $\lambda$ are set to be 5 and 1 respectively in Method 1. Fig.~\ref{fig:PMUfig2} shows the $\ell_2$ norm of each column of the resulting $\bar{D}$ matrix. The columns with significant $\ell_2$ norm correspond to channels that measure $I^{12}$,$I^{52}$,$I^{13}$ and $I^{43}$. Therefore, our method successfully identifies the four PMU channels under attack.  We repeat the same experiment when an intruder alters the channels that measure  $V^{5}$, $I^{52}$, $I^{54}$, $I^{59}$, and  $I^{45}$ to corrupt voltage estimation of Buses 5. Fig.~\ref{fig:PMUfig3} shows the $\ell_2$ norm of each column of the resulting $\bar{C}$ matrix in this case. The column with significant $\ell_2$ norm corresponds Bus 5. Thus the recovery is also successful.
  
\begin{figure}[h] 
\begin{center}
\includegraphics[height=0.33\linewidth]{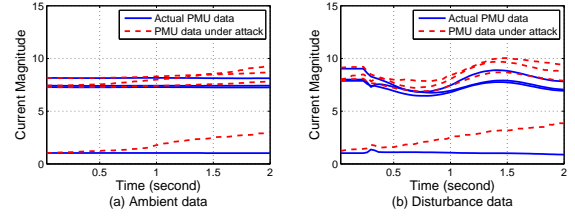}
\caption{The actual PMU data and PMU data under attack} \label{fig:attack} \vspace{-5mm}
\end{center}
\end{figure}

\begin{figure}[h] 
\begin{center}
\begin{psfrags}
\psfrag{l2 norm of the column}[][][0.6]{$\ell_2$ norm of the column}
\includegraphics[height=0.33\linewidth]{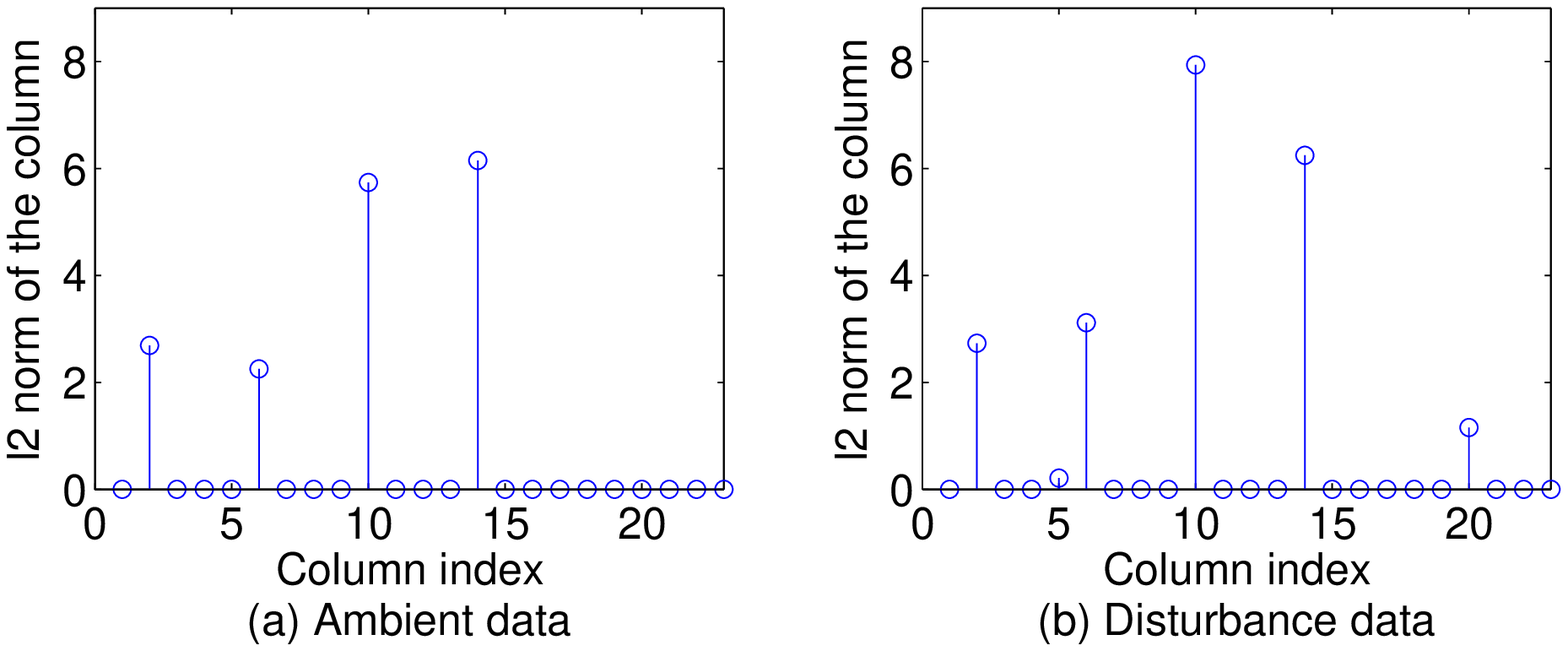}
\caption{$\ell_2$ norm of each column of $\bar{D}$} \label{fig:PMUfig2} \vspace{-5 mm} 
\end{psfrags}
\end{center}
\end{figure}

\begin{figure}[h] 
\begin{center}
\psfrag{l2 norm of the column}[][][0.6]{$\ell_2$ norm of the column}
\includegraphics[height=0.33\linewidth]{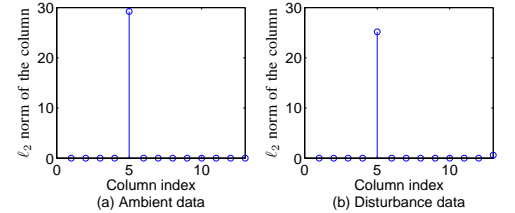}
\caption{$\ell_2$ norm of each column of $\bar{C}$} \label{fig:PMUfig3} \vspace{-5 mm} 
\end{center}
\end{figure}

{Fig. \ref{fig:case2} compares our method and that in  \cite{XCS12} on the ambient PMU data. Given support size of $\bar{C}$, the result is averaged over all possible attack locations. Our method outperforms \cite{XCS12} because we exploit (\ref{eqn:state}) to reduce the degree of freedom in $\bar{D}$. For example, 7 out of 23 channels needs to be attacked to change the state of Bus 1. That means 30\% of the columns of $\bar{D}$ are nonzero. This high percentage of corruption in $\bar{D}$ cannot be handedly by \cite{XCS12}.

\begin{figure}[h] 
\begin{center}
\begin{psfrags}
\psfrag{Number of buses under attack}[][][0.7]{Number of buses under attack}
\psfrag{Success rate}[][][0.7]{Success rate}
\includegraphics[height=0.33\linewidth]{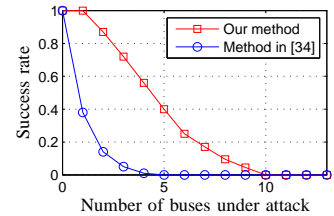}
\caption{Success rates with varying support size of $\bar{C}$, or equivalently, the number of affected system states. } \label{fig:case2}
\end{psfrags}
\end{center}
\end{figure}

\section{Conclusion}\label{sec:con}

We address the problem of detecting successive unobservable cyber data attacks to PMU measurements. We formulate the identification problem as a matrix decomposition problem of a low-rank matrix and a transformed column-sparse matrix. We propose a convex-optimization-based method and provide its theoretical guarantee. Although motivated by power system monitoring, our results on matrix decomposition can be applied to other scenarios. One future direction is the analysis of the detection performance when some of the measurements are lost during the communication to the central operator.

\section*{Acknowledgement}
We thank New York Power Authority for providing PMU data for the Central NY Power System. This research is supported in part by the ERC Program of NSF and DoE under the supplement to NSF Award   EEC-1041877 and the CURENT Industry Partnership Program, and in part by NSF Grant 1508875, NYSERDA Grants  \#36653 and \#28815. 

\section*{Appendix}

\subsection{Proof of Lemma \ref{lem:sigmak}}
\begin{IEEEproof}
We first state the following result that will be used in the proof.
\begin{lemma}[Ger\v{s}hgorin circle theorem \cite{HJ12}]
Let A be a complex $n \times n$ matrix, with entries $a_{ij}$. Then, every eigenvalue of A lies within at least one of the Ger\v{s}hgorin discs  $D_i(A) (i=1,...,n)$, where
$D_i(A):=\{ z \in \C: |z-a_{ii}| \leq  \sum_{j\neq{i}} \left|a_{ij}\right|\}$.
\end{lemma}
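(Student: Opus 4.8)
The plan is to prove the theorem by exhibiting, for each eigenvalue $\lambda$, a single index $i$ whose disc $D_i(A)$ contains $\lambda$. The whole argument rests on examining the eigenvalue equation along the coordinate at which the eigenvector is largest in magnitude, so the construction is entirely elementary and requires nothing beyond the triangle inequality.

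First I would fix an eigenvalue $\lambda$ of $A$ together with an associated eigenvector $x \in \C^n$, $x \neq 0$, so that $Ax = \lambda x$. Because $x$ is nonzero, not all of its entries vanish, so I would choose an index $i$ with $|x_i| = \max_{1 \le j \le n} |x_j| > 0$. Selecting this dominant component is the one substantive idea in the proof; every remaining line is routine manipulation.

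Next I would read off the $i$th scalar equation of $Ax = \lambda x$, namely $\sum_{j=1}^n a_{ij} x_j = \lambda x_i$, and move the diagonal term to the other side to obtain $\sum_{j \neq i} a_{ij} x_j = (\lambda - a_{ii}) x_i$. Taking absolute values of both sides, applying the triangle inequality, and then bounding each $|x_j| \le |x_i|$ using the maximality of $|x_i|$, I would arrive at $|\lambda - a_{ii}|\,|x_i| \le \sum_{j \neq i} |a_{ij}|\,|x_i|$. Since $|x_i| > 0$, dividing through by $|x_i|$ gives $|\lambda - a_{ii}| \le \sum_{j \neq i} |a_{ij}|$, which is precisely the assertion $\lambda \in D_i(A)$.

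As $\lambda$ was an arbitrary eigenvalue, every eigenvalue of $A$ lies in at least one Ger\v{s}hgorin disc, which is the claim. There is no genuine obstacle here; the only subtlety worth flagging is that the dominant index $i$ depends on the chosen eigenvector, so different eigenvalues may be certified by different discs. This is exactly why the conclusion is the existential statement ``at least one disc'' rather than a uniform containment, matching the theorem as worded.
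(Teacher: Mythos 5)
Your proof is correct and complete: the dominant-component argument (pick $i$ maximizing $|x_i|$, isolate the diagonal term in the $i$th equation of $Ax=\lambda x$, apply the triangle inequality, and divide by $|x_i|>0$) is exactly the standard proof of the Ger\v{s}hgorin circle theorem. The paper itself gives no proof---it cites the result from Horn and Johnson \cite{HJ12}---and your argument is precisely the canonical one found in that reference, so there is nothing to reconcile.
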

For any given $\I$ with $|\I| \leq k$, since $W$ has unit-norm columns, and $|W^\dag_iW_j|\leq \mu$ for all $i \neq j$, from Ger\v{s}hgorin circle theorem, we have $\|I- W^\dag_{\I}W_{\I}\|\leq (k-1)\mu <1$, where the last inequality follows from $k\mu<1$. Then,
\begin{align*}
\|(W^\dag_{\I}W_{\I})^{-1}\| = &\|\sum_{i=0}^\infty (I- W^\dag_{\I}W_{\I})^i\| \leq \sum_{i=0}^\infty \|(I- W^\dag_{\I}W_{\I})^i\|\nonumber  \\
\leq & 1/(1-(k-1)\mu)\nonumber.  
\end{align*}
The lemma follows from the definition of $\sigma_k$.
\end{IEEEproof}

\subsection{Proof of Lemma \ref{lem:cond}}
\begin{IEEEproof} 
For any $\Delta \in \C^{t\times n}$, $\langle L'+\Delta W^T, C'-\Delta \rangle$ is feasible to (\ref{eqn:opt}). Let $G$ be such that $\|G\|=1$, $\left\langle G, \P_{T'^{\perp}}(\Delta W^T) \right\rangle = \|\P_{T'^{\perp}}(\Delta W^T)\|_{*}$ and $\P_{T'}(G)=0$. Then $\P_{T'}(Q)+G$ is a subgradient of $\|L'\|_{*}$. Let $F$ be such that $F_i=-\Delta_i/\|\Delta_i\|_2$ if $i \in \bar{\I}^c$ and $\Delta_i \neq \bm0$, and $F_i=\bm0$ otherwise. Then $\P_{\bar{\I}} (QW^{\ddagger})/\lambda +F$ is a subgradient of $\|C'\|_{1,2}$. Then
\begin{equation}\label{eqn:worse}
\begin{aligned}
&\|L'+ \Delta W^T\|_{*}+\lambda \|C'-\Delta\|_{1,2}-\|L'\|_{*}-\lambda\|C'\|_{1,2} \\
\geq & \langle \P_{T'}(Q)+G, \Delta W^T \rangle -\lambda \langle \P_{\bar{\I}} (QW^{\ddagger})/\lambda +F, \Delta\rangle\\
=& \|\P_{T'^{\perp}}(\Delta W^T)\|_{*} + \lambda \|\P_{\bar{\I}^c}(\Delta)\|_{1,2} + \langle Q-\P_{T'^\perp}(Q), \Delta W^T \rangle \\ 
&-\langle QW^{\ddagger}-\P_{\bar{\I}^c} (QW^{\ddagger}), 
\Delta\rangle \\
\geq & (1-\|\P_{T'^\perp}(Q)\|)\|\P_{T'^{\perp}}(\Delta W^T)\|_{*} \\ &+(\lambda-\|(QW^{\ddagger})_{\bar{\I}^c}\|_{\infty,2} )\|\P_{\bar{\I}^c}(\Delta)\|_{1,2}\\
\geq & 0
\end{aligned}
\end{equation}
From (\ref{eqn:worse}), $\langle L', C'\rangle$ is an optimal solution to (\ref{eqn:opt}). If (\ref{eqn:cond}) holds with strict inequality, the last inequality of (\ref{eqn:worse}) is strict unless
\begin{equation}\label{eqn:temp1}
\|\P_{T'^{\perp}}(\Delta W^T)\|_{*}= \|\P_{\bar{\I}^c}(\Delta)\|_{1,2}=0.
\end{equation}
(\ref{eqn:temp1}) implies that $ \Delta W^T \in \P_{T'}$ and $\Delta \in \P_{\bar{\I}}$. Note that $\Delta \in \P_{\bar{\I}}$ implies that $\Delta W^T \in \P_{\bar{\J}}$.  Then
\begin{align}\label{eqn:med}
\P_{\bar{\J}}(\Delta W^T)&=\Delta W^T =\P_{T'}(\Delta W^T) \nonumber\\
&=\P_{U'}(\Delta W^T)+\P_{V'}\P_{U'^\perp}(\Delta W^T)\nonumber\\  
&= \P_{\bar{\J}}\P_{U'}(\Delta W^T)+\P_{V'}\P_{U'^\perp}(\Delta W^T),
\end{align}
where the last equality holds since $\P_{\bar{\J}}(\Delta W^T)=\Delta W^T$.
Thus, from (\ref{eqn:med}) we have $\P_{\bar{\J}}\P_{U'^{\perp}}(\Delta W^T)=\P_{V'}\P_{U'^\perp}(\Delta W^T)$,
which means $\P_{U'^\perp}(\Delta W^T) \in \P_{\bar{\J}} \cap \P_{V'}$. Then $\P_{U'^\perp}(\Delta W^T)$ is $0$ from the assumption. Then, $\P_{\bar{U}}(\Delta W^T)=\P_{U'}(\Delta W^T) =\Delta W^T$, where the first equality holds from (\ref{eqn:UU'}). Therefore,  for any optimal solution $\langle L'+\Delta W^T, C'-\Delta \rangle$ for some $\Delta \neq 0$ to (\ref{eqn:opt}),  $\Delta W^T \in \P_{\bar{U}}$, and $\Delta \in \P_{\bar{\I}}$. The claim follows.
\end{IEEEproof}

\subsection{Construction of $Q$}
Here we demonstrate that $Q$ in (\ref{eqn:Q}) is well defined. The key is to show (a) there exists $\hat{H} \in \G(C')$ such that (\ref{eqn:hath}) holds, and (b) the  infinite sum in (\ref{eqn:delta2}) converges. We prove these two properties through the following lemmas. 
\begin{lemma}\label{lem:convex}
There exists $\hat{H} \in \G(C')$ such that (\ref{eqn:hath}) holds.
\end{lemma}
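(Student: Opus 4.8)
The plan is to read off $\hat{H}$ directly from the first-order optimality (KKT) conditions of the Oracle problem (\ref{eqn:oracle}), rather than trying to build it column by column; a naive columnwise construction would need $\|\hat{V}^\dag W^{\ddagger}_i\|\le\lambda$ on the off-support columns $i\in\bar{\I}\setminus\I'$, which is not evidently available. Since (\ref{eqn:oracle}) is convex, its only coupling constraint $M=L+CW^T$ is affine, and its remaining constraints $\P_{\bar{U}}(L)=L$, $\P_{\bar{\I}}(C)=C$ confine the variables to linear subspaces, strong duality and the KKT conditions hold. Thus there is a multiplier $Q\in\C^{t\times p}$ for the equality constraint such that $(L',C')$ minimizes the Lagrangian $\|L\|_{*}+\lambda\|C\|_{1,2}-\langle Q,\,L+CW^T-M\rangle$ over $L\in\P_{\bar{U}}$ and $C\in\P_{\bar{\I}}$. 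I will use the real inner product $\langle A,B\rangle=\mathrm{Re}\,\mathrm{tr}(A^\dag B)$, under which the adjoint of the map $C\mapsto CW^T$ is $Q\mapsto QW^{\ddagger}$, so that $\langle Q,CW^T\rangle=\langle QW^{\ddagger},C\rangle$.

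Next I would extract the two block-stationarity conditions. For the $C$-block, minimality over the support subspace $\P_{\bar{\I}}$ gives $\P_{\bar{\I}}(QW^{\ddagger})\in\lambda\,\P_{\bar{\I}}(\partial\|C'\|_{1,2})$; since the subdifferential of the columnwise norm at $C'$ consists of matrices $H$ with $H_i=C'_i/\|C'_i\|$ on $\I'$ and $\|H_i\|_2\le1$ off $\I'$, restricting to $\bar{\I}$ produces exactly an $\hat{H}\in\G(C')$ with $(QW^{\ddagger})_{\bar{\I}}=\lambda\hat{H}$ (this also recovers condition (c) of (\ref{eqn:cond})). For the $L$-block, minimality over $\P_{\bar{U}}$ gives $\P_{\bar{U}}(Q)\in\P_{\bar{U}}(\partial\|L'\|_{*})$, and writing $\partial\|L'\|_{*}=\{U'V'^\dag+Z:\P_{T'}(Z)=0,\ \|Z\|\le1\}$ I would kill the free term with the operator identity $\P_{\bar{U}}\P_{T'}=\P_{\bar{U}}$ (immediate from $\P_{T'}=\P_{\bar{U}}+\P_{\hat{V}}-\P_{\bar{U}}\P_{\hat{V}}$ and $\P_{\bar{U}}^2=\P_{\bar{U}}$), which forces $\P_{\bar{U}}(Z)=\P_{\bar{U}}\P_{T'}(Z)=0$. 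Hence $\P_{\bar{U}}(Q)=U'V'^\dag=\bar{U}\hat{V}^\dag$, and left-multiplying by $\bar{U}^\dag$ yields $\bar{U}^\dag Q=\hat{V}^\dag$.

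Finally I would compose the two identities: restricting columns to $\bar{\I}$ and using $(QW^{\ddagger})_{\bar{\I}}=QW^{\ddagger}_{\bar{\I}}$,
\[
\hat{V}^\dag W^{\ddagger}_{\bar{\I}}=(\bar{U}^\dag Q)W^{\ddagger}_{\bar{\I}}=\bar{U}^\dag\big(QW^{\ddagger}\big)_{\bar{\I}}=\lambda\,\bar{U}^\dag\hat{H},
\]
which is exactly (\ref{eqn:hath}) with $\hat{H}\in\G(C')$. The delicate points I expect to argue carefully are the handling of the two subspace constraints through their correct normal cones (so the subgradient inclusions come out projected onto $\P_{\bar{U}}$ and $\P_{\bar{\I}}$), and the conjugation bookkeeping that makes $W^{\ddagger}$—rather than $W^T$—the correct adjoint factor. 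The single algebraic fact that makes everything collapse is the projection identity $\P_{\bar{U}}\P_{T'}=\P_{\bar{U}}$, which reduces the nuclear-norm subgradient to $\bar{U}\hat{V}^\dag$.
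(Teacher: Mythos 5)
Your proof is correct and follows essentially the same route as the paper's: both extract the identity from the first-order optimality (KKT) conditions of the Oracle problem, using the multiplier for the coupling constraint $M=L+CW^T$ together with the adjoint relation $\langle Q, CW^T\rangle=\langle QW^{\ddagger},C\rangle$, and then annihilate the nuisance subgradient and normal-cone terms by projecting onto $\P_{\bar{U}}$ and restricting columns to $\bar{\I}$. The only difference is organizational — you chain two separate block-stationarity conditions while the paper writes a single combined equation and applies $\P_{\bar{U}}\P_{\bar{\I}}$ to both sides — and the content is the same.
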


\begin{proof}
Since $\langle L', C' \rangle$ is an optimal solution to the Oracle problem (\ref{eqn:oracle}), there exists $G',A' \in \C ^{t \times p}$, $B',Z \in \C^{t \times n}$, and some $\hat{H} \in  \G (C)$ such that
\begin{equation}\label{eqn:seq1}
(\bar{U}\hat{V}^\dag + G'+ \P_{\bar{U}^\perp}  (A'))W^{\ddagger}= \lambda (\hat{H}+Z)+\P_{\I^c}(B'),
\end{equation} 
where $\P_{T'^\perp}(G')=0$ and  $\P_{\I}(Z)=0$. Then 
\begin{equation}\label{eqn:seq2}
\P_{\bar{U}}\P_{\bar{\I}}(((\bar{U}\hat{V}^\dag + G'+ \P_{\bar{U}^\perp} (A'))W^{\ddagger}) 
=\bar{U}\hat{V}^\dag W^{\ddagger}_{\bar{\I}},
\end{equation} 
\begin{equation}\label{eqn:seq3}
\P_{\bar{U}}\P_{\bar{\I}}( \lambda (\hat{H}+Z)+\P_{\I^c}(B'))= \lambda \P_{\bar{U}}\P_{\bar{\I}}(\hat{H}) =\lambda\bar{U}\bar{U}^\dag \hat{H} \end{equation}
Combining (\ref{eqn:seq1})-(\ref{eqn:seq3}), we have
\begin{equation}\label{eqn:multiu}
\bar{U}\hat{V}^\dag W^{\ddagger}_{\bar{\I}} = \lambda\bar{U}\bar{U}^\dag \hat{H}.
\end{equation}
By multiplying $\bar{U}^\dag$ to both sides of (\ref{eqn:multiu}), we obtain Lemma \ref{lem:convex}.
\end{proof}

\begin{lemma}\label{lem:psi}
\begin{equation} \nonumber
\psi:=\|\P_{\hat{V}}\P_{W_{\bar{\I}}}\P_{\hat{V}}\| \leq \tilde{\psi}<1
\end{equation}
\end{lemma}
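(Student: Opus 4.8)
The plan is to rewrite all three factors as ordinary right-multiplications by matrices, reduce the operator norm to the spectral norm of a small $r \times r$ matrix, and then control that matrix using the dual-certificate identity (\ref{eqn:hath}) together with the definition of $\sigma_{\tilde{k}}$. First I would introduce the two $p\times p$ matrices $P:=\hat{V}\hat{V}^{\dagger}$ and $R:=W^{\ddagger}_{\bar{\I}}(W_{\bar{\I}}^{T}W^{\ddagger}_{\bar{\I}})^{-1}W_{\bar{\I}}^{T}$, so that by the notation table and (\ref{eqn:PWI}) one has $\P_{\hat{V}}(A)=AP$ and $\P_{W_{\bar{\I}}}(X)=XR$. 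Hence $\P_{\hat{V}}\P_{W_{\bar{\I}}}\P_{\hat{V}}(A)=A\,PRP$, and the operator norm of this map is the spectral norm $\|PRP\|$, i.e. $\psi=\|PRP\|$. Because $\hat{V}$ has orthonormal columns ($\hat{V}^{\dagger}\hat{V}=I$), conjugation by the isometry $\hat{V}$ preserves spectral norm, so writing $PRP=\hat{V}(\hat{V}^{\dagger}R\hat{V})\hat{V}^{\dagger}$ gives $\psi=\|\hat{V}^{\dagger}R\hat{V}\|$, reducing the problem to an $r\times r$ matrix.

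Next I would set $K:=\hat{V}^{\dagger}W^{\ddagger}_{\bar{\I}}$. Since $(W^{\ddagger}_{\bar{\I}})^{\dagger}=W_{\bar{\I}}^{T}$, we have $W_{\bar{\I}}^{T}\hat{V}=K^{\dagger}$, so $\hat{V}^{\dagger}R\hat{V}=K(W_{\bar{\I}}^{T}W^{\ddagger}_{\bar{\I}})^{-1}K^{\dagger}$, and by submultiplicativity $\psi\le\|K\|^{2}\,\|(W_{\bar{\I}}^{T}W^{\ddagger}_{\bar{\I}})^{-1}\|$. To bound $\|K\|$ I would invoke the defining property (\ref{eqn:hath}), namely $K=\hat{V}^{\dagger}W^{\ddagger}_{\bar{\I}}=\lambda\bar{U}^{\dagger}\hat{H}$; since $\|\bar{U}^{\dagger}\|=1$ and the matrix $\hat{H}$ appearing in (\ref{eqn:hath}) has at most $\tilde{k}$ columns each of $\ell_2$-norm at most one (by the definition of $\G(C')$ and $|\bar{\I}|\le\tilde{k}$), I get $\|K\|\le\lambda\|\hat{H}\|_{F}\le\lambda\sqrt{\tilde{k}}$. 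For the inverse I would use the conjugate identity $W_{\bar{\I}}^{T}W^{\ddagger}_{\bar{\I}}=\overline{W_{\bar{\I}}^{\dagger}W_{\bar{\I}}}$, whose inverse has the same spectral norm as $(W_{\bar{\I}}^{\dagger}W_{\bar{\I}})^{-1}$; since $|\bar{\I}|\le\tilde{k}$, definition (\ref{eqn:sigmak}) gives $\|(W_{\bar{\I}}^{T}W^{\ddagger}_{\bar{\I}})^{-1}\|\le\sigma_{\tilde{k}}$, with invertibility secured by $\tilde{k}\mu\le c<1$ via Lemma \ref{lem:sigmak}.

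Combining these bounds yields $\psi\le\lambda^{2}\tilde{k}\sigma_{\tilde{k}}$, and since $\lambda\le\lambda_{\max,\tilde{k}}=\sqrt{\tilde{\psi}/(\tilde{k}\sigma_{\tilde{k}})}$ by (\ref{eqn:lambdamax}), I obtain $\psi\le\tilde{\psi}<1$, which is the claim. The step I expect to be the main obstacle is the bookkeeping in the first reduction: verifying that $\P_{W_{\bar{\I}}}$ is genuinely a clean right-multiplication and keeping the distinction between $W^{\ddagger}$ (entrywise conjugate) and $W^{\dagger}$ (conjugate transpose) straight, so that the quantity $W_{\bar{\I}}^{T}W^{\ddagger}_{\bar{\I}}$ is legitimately controlled by $\sigma_{\tilde{k}}$, which is defined through $W_{\bar{\I}}^{\dagger}W_{\bar{\I}}$. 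Conceptually, the crux is that the identity (\ref{eqn:hath}) lets me bound $\|K\|$ by $\lambda\sqrt{\tilde{k}}$ \emph{without} assuming any incoherence of the unknown row space $\hat{V}$, which is exactly what makes the argument go through for an arbitrary transform $W$.
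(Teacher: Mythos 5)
Your proposal is correct and follows essentially the same route as the paper's proof: both rewrite the composition as right-multiplication by $\hat{V}\hat{V}^{\dagger}W^{\ddagger}_{\bar{\I}}(W_{\bar{\I}}^{T}W^{\ddagger}_{\bar{\I}})^{-1}W_{\bar{\I}}^{T}\hat{V}\hat{V}^{\dagger}$, substitute $\hat{V}^{\dagger}W^{\ddagger}_{\bar{\I}}=\lambda\bar{U}^{\dagger}\hat{H}$ from (\ref{eqn:hath}), and bound the result by $\lambda^{2}\tilde{k}\sigma_{\tilde{k}}\le\tilde{\psi}$ using $\|\hat{H}\|\le\sqrt{\tilde{k}}$ and $\lambda\le\lambda_{\max,\tilde{k}}$. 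Your explicit handling of the conjugation issue (that $\|(W_{\bar{\I}}^{T}W^{\ddagger}_{\bar{\I}})^{-1}\|=\|(W_{\bar{\I}}^{\dagger}W_{\bar{\I}})^{-1}\|\le\sigma_{\tilde{k}}$) is a point the paper glosses over, but it is the same argument.
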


\begin{proof}
\begin{align}\nonumber
&\|\P_{\hat{V}}\P_{W_{\bar{\I}}}\P_{\hat{V}}(X)\| \\\nonumber
=&\|X\hat{V}\hat{V}^\dag W^{\ddagger}_{\bar{\I}}(W_{\bar{\I}}^T W^{\ddagger}_{\bar{\I}})^{-1}W_{\bar{\I}}^T \hat{V}\hat{V}^\dag\| \\\nonumber
\stackrel{(\rm{a})}{=}&\|X\hat{V}(\lambda\bar{U}^\dag \hat{H})(W_{\bar{\I}}^T W^{\ddagger}_{\bar{\I}})^{-1}(\lambda\bar{U}^\dag \hat{H})^\dag\hat{V}^\dag\| \\\nonumber
 \leq & \|X\|\|\hat{V}\bar{U}^\dag\| \|\lambda  \hat{H}\|\|(W_{\bar{\I}}^\dag W_{\bar{\I}})^{-1}\|\|\lambda\hat{H}^\dag \|\|\bar{U}\hat{V}^\dag\|\\\nonumber
 \stackrel{(\rm{b})}{\leq} & \|X\| \cdot 1 \cdot\lambda \sqrt{k} \cdot \sigma_k \cdot \lambda \sqrt{k}\cdot 1\\\nonumber
\stackrel{(\rm{c})}{\leq} & \|X\| \lambda_{\max} \tilde{k} \sigma_{\tilde{k}} \stackrel{(\rm{d})}{=} \|X\| \tilde{\psi},
\end{align}
where (a) follows from Lemma \ref{lem:convex}, (b) follows from the fact that $\hat{H}$ has at most $k$ nonzero columns with unit-norm, (c) follows from the property that $\lambda \leq \lambda_{\max}$, $k \leq \tilde{k}$ and $\sigma_{k} \leq \sigma_{\tilde{k}}$, and (d) follows from the definition of $\tilde{\psi}$. Then Lemma \ref{lem:psi} follows. 
\end{proof}

\begin{lemma}\label{lem:inverse}
$\P_{\hat{V}} (I-\P_{W_{\bar{\I}}})\P_{\hat{V}}$ is an injection from $\P_{\hat{V}}$ to  $\P_{\hat{V}}$,  and its inverse operation is $(I+\sum_{i=1}^\infty (\P_{\hat{V}}\P_{W_{\bar{\I}}}\P_{\hat{V}})^i)$.
\end{lemma}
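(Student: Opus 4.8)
The plan is to restrict attention to the subspace $\mathcal{H} := \{X \in \C^{t\times p} : \P_{\hat{V}}(X) = X\}$, which is exactly the range of the projection $\P_{\hat{V}}$, and then to recognize the operator in question as a small perturbation of the identity on $\mathcal{H}$. First I would observe that for any $X \in \mathcal{H}$ we have $\P_{\hat{V}}(X) = X$, so that
\begin{equation}\nonumber
\P_{\hat{V}}(I - \P_{W_{\bar{\I}}})\P_{\hat{V}}(X) = \P_{\hat{V}}(X) - \P_{\hat{V}}\P_{W_{\bar{\I}}}\P_{\hat{V}}(X) = X - S(X),
\end{equation}
where I set $S := \P_{\hat{V}}\P_{W_{\bar{\I}}}\P_{\hat{V}}$. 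Thus on $\mathcal{H}$ the operator coincides with $I - S$. Moreover $S$ maps $\mathcal{H}$ into $\mathcal{H}$ (its output lies in the range of the leading factor $\P_{\hat{V}}$), so $I-S$ is a genuine endomorphism of $\mathcal{H}$, and the range of the whole composite is contained in $\P_{\hat{V}}$ by construction.

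Next I would invoke Lemma \ref{lem:psi}, which gives $\|S\| = \psi \leq \tilde{\psi} < 1$; since the operator norm of the restriction of $S$ to $\mathcal{H}$ is bounded by the norm of $S$ on the full space, it is likewise strictly below one. A standard Neumann-series argument then applies: because $\|S\| < 1$, the series $\sum_{i=0}^\infty S^i$ converges in operator norm and furnishes a bounded inverse of $I - S$, namely
\begin{equation}\nonumber
(I - S)^{-1} = \sum_{i=0}^\infty S^i = I + \sum_{i=1}^\infty (\P_{\hat{V}}\P_{W_{\bar{\I}}}\P_{\hat{V}})^i.
\end{equation}
Since each power $S^i$ again maps $\mathcal{H}$ into $\mathcal{H}$, this inverse is well-defined on $\mathcal{H}$ and matches the claimed expression. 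Invertibility of $I-S$ on $\mathcal{H}$ immediately yields injectivity of $\P_{\hat{V}}(I - \P_{W_{\bar{\I}}})\P_{\hat{V}}$ from $\P_{\hat{V}}$ to $\P_{\hat{V}}$, completing the statement.

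The main (if modest) obstacle is the reduction step: one must use carefully that $\P_{\hat{V}}$ acts as the identity on its own range so that the three-fold composite collapses cleanly to $I - S$, and then confirm that both $S$ and its Neumann inverse preserve $\mathcal{H}$, so that the geometric series genuinely converges \emph{within} the correct subspace rather than on the ambient space. Once that bookkeeping is in place, everything else is the textbook contraction estimate made available by the bound $\psi \le \tilde\psi < 1$ of Lemma \ref{lem:psi}.
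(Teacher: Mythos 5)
Your proposal is correct and follows essentially the same route as the paper: both reduce the three-fold composite to $I-S$ with $S=\P_{\hat{V}}\P_{W_{\bar{\I}}}\P_{\hat{V}}$ acting on the range of $\P_{\hat{V}}$, invoke Lemma~\ref{lem:psi} to get $\|S\|\leq\tilde{\psi}<1$, and conclude via the Neumann series $\sum_{i=0}^\infty S^i$. The paper simply writes out the telescoping verification $(I-S)\bigl(I+\sum_{i=1}^\infty S^i\bigr)(X)=X$ explicitly rather than citing the standard contraction argument.
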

\begin{proof}
Since  $\|\P_{\hat{V}}\P_{W_{\bar{\I}}}\P_{\hat{V}}\| <1$ from Lemma \ref{lem:psi}, then  $(I+\sum_{i=1}^\infty (\P_{\hat{V}}\P_{W_{\bar{\I}}}\P_{\hat{V}})^i)$ is well defined. For any $X \in \P_{\hat{V}}$, we have 
\begin{align}
&\P_{\hat{V}}(I-\P_{W_{\bar{\I}}})\P_{\hat{V}}(I+\sum_{i=1}^\infty (\P_{\hat{V}}\P_{W_{\bar{\I}}}\P_{\hat{V}})^i)(X)\nonumber\\=&\P_{\hat{V}}(I-\P_{\hat{V}}\P_{W_{\bar{\I}}}\P_{\hat{V}})(I+\sum_{i=1}^\infty (\P_{\hat{V}}\P_{W_{\bar{\I}}}\P_{\hat{V}})^i)(X)\nonumber\\
=& \P_{\hat{V}}(X)=X.
\end{align}
Then the lemma follows.
\end{proof}

\subsection{Proof of Lemma \ref{lem:certificate}}
\begin{proof}
We need to show that $Q$ defined in (\ref{eqn:Q}) satisfies all the conditions in (\ref{eqn:cond}).  We first summarize some properties that will be used in the proof. Since $W$ has unit-norm columns, $|W^\dag _i W_j| \leq \mu$ for all $i \neq j$, and $|\bar{\I}|\leq k$,  we have
\begin{equation}\label{eqn:WI}
\|W_{\bar{\I}}\|=\sqrt{\lambda_{\max}(W_{\bar{\I}}^\dag W_{\bar{\I}})} \leq \sqrt{1+(k-1)\mu},
\end{equation}
where the inequality follows from the Ger\v{s}hgorin circle theorem. 
From $|\bar{\I}|\leq k$ and $|W^\dag _i W_j| \leq \mu$ for all $i \neq j$, we have $\|(W_{\bar{\I}}^\dag W_{\bar{\I}^c})\|_{\infty,2} \leq \sqrt{k} \mu$. Since $\hat{H}$ has at most $k$ unit-norm columns while other columns are zero, we have 
\begin{equation}\label{eqn:lambdaH}
\|\lambda \hat{H} \| \leq \lambda \sqrt{k}.
\end{equation}

Step 1: verification of (a) of (\ref{eqn:cond}).
\begin{equation}\label{eqn:U}
\P_{U'}(Q)\stackrel{(\rm{a})}{=}\P_{\bar{U}}(Q)= \bar{U}\hat{V}^\dagger+\P_{\bar{U}}(\Phi)-\P_{\bar{U}}(\Phi)-0= \bar{U}\hat{V}^\dagger,
\end{equation}
where (a) follows from  (\ref{eqn:space}). From (\ref{eqn:UU'}), we have 
\begin{equation} \nonumber
\hat{V}\hat{V}^\dag =V' U'^\dag \bar{U} \bar{U}^\dag U' V'^\dagger\stackrel{(\rm{b})}{=}V' U'^\dag U'U'^\dag U' V'^\dagger=V'V'^\dagger,
\end{equation}
where (b) follows from (\ref{eqn:UU'}). Thus, $\P_{V'}(\cdot)=\P_{\hat{V}}(\cdot)$. Then
\begin{align}\label{eqn:V}
\P_{V'}(Q)&=\P_{\hat{V}}(Q)\stackrel{(\rm{c})}{=} \bar{U}\hat{V}^\dagger+\P_{\hat{V}}(\Phi)-\P_{\hat{V}}\P_{\bar{U}}(\Phi) \nonumber\\&-  \P_{\hat{V}} (I-\P_{W_{\bar{\I}}})\P_{\hat{V}}(I+\sum_{i=1}^\infty (\P_{\hat{V}}\P_{W_{\bar{\I}}}\nonumber\\
& \P_{\hat{V}})^i)\P_{\hat{V}} \P_{\bar{U}^\perp}(\Phi)\nonumber\\
&\stackrel{(\rm{d})}{=} \bar{U}\hat{V}^\dagger+\P_{\hat{V}}(\Phi)-\P_{\hat{V}}\P_{\bar{U}}(\Phi)-\P_{\hat{V}} \P_{\bar{U}^\perp}(\Phi) \nonumber\\
&=\bar{U}\hat{V}^\dagger. 
\end{align}
(c) follows since $\P_{W_{\bar{\I}}}$, $\P_{\hat{V}}$, and $\P_{\hat{V}}\P_{W_{\bar{\I}}}\P_{\hat{V}}$ are all given by right matrix multiplication, while $\P_{\bar{U}^\perp}$
is given by left matrix
multiplication. (d) follows from Lemma \ref{lem:inverse}. Combining (\ref{eqn:U}) and (\ref{eqn:V}), we obtain that (a) of (\ref{eqn:cond}) holds.

Step 2:  verification of (b) of (\ref{eqn:cond}).
\begin{align}\label{eqn:stp2}
&\|\P_{T'^{\perp}}(Q)\|=\|\P_{\hat{V}^{\perp}}\P_{\bar{U}^{\perp}}(\Phi)-\nonumber \\
& \quad \P_{\bar{U}^{\perp}}\P_{\hat{V}^\perp}(I-\P_{W_{\bar{\I}}})\P_{\hat{V}}(I+\sum_{i=1}^\infty (\P_{\hat{V}}\P_{W_{\bar{\I}}}\P_{\hat{V}})^i)\P_{\hat{V}}(\Phi)\|\nonumber \\
\leq  & \|\Phi\|+  (1+\sum_{i=1}^{\infty}\psi^i)\|\Phi\| = \frac{2-\psi}{1-\psi}  \|\Phi\|\nonumber\\
\stackrel{(\rm{e})}{\leq} & \frac{2-\psi}{1-\psi} \|\lambda\hat{H} \|\|(W_{\bar{\I}}^\dag W_{\bar{\I}})^{-1}\| \|W_{\bar{\I}}^T\| \nonumber\\
\stackrel{(\rm{f})}{\leq} &  \frac{2-\psi}{1-\psi}\lambda \sqrt{k} \sigma_k \sqrt{1+(k-1)\mu}\\
\stackrel{(\rm{g})}{\leq} &  \frac{2-\tilde{\psi}}{1-\tilde{\psi}} \sqrt{\frac{\tilde{\psi}}{\tilde{k}\sigma_{\tilde{k}}}} \sqrt{\tilde{k}}  \sigma_{\tilde{k}} \sqrt{1+(\tilde{k}-1)\mu}\\
\stackrel{(\rm{h})}{\leq}& \frac{2-\tilde{\psi}}{1-\tilde{\psi}} \sqrt{\tilde{\psi}}  \sqrt{\frac{1+(\tilde{k}-1)\mu}{1-(\tilde{k}-1)\mu}}\\
\stackrel{(\rm{i})}{\leq} &  \frac{2-\tilde{\psi}}{1-\tilde{\psi}} \sqrt{\tilde{\psi}} \sqrt{\frac{1+c}{1-c}} \stackrel{(\rm{j})}{\leq} 1 \nonumber.  
\end{align}
where (e) follows from the definition of $\Phi$, and (f) follows from (\ref{eqn:WI}) and (\ref{eqn:lambdaH}). (g) follows from the property that $\psi \leq \tilde{\psi}$, $1 \leq k \leq \tilde{k}$, $\lambda \leq \lambda_{\max,\tilde{k}}$, and $\sigma_k \leq \sigma_{\tilde{k}}$. (h) follows from Lemma \ref{lem:sigmak}. (i) follow from  $\tilde{k}\mu \leq c$, and (j) follows from (\ref{eqn:psistar}). Then (b) of (\ref{eqn:cond}) holds.

Step 3: verification of (c) of (\ref{eqn:cond}). First consider
\begin{align}
&(\Delta_2W^{\ddagger})_{\bar{\I}}\nonumber\\
=& (\P_{\bar{U}^\perp}(I-\P_{W_{\bar{\I}}})\P_{\hat{V}}(I+\sum_{i=1}^\infty (\P_{\hat{V}}\P_{W_{\bar{\I}}}\P_{\hat{V}})^i)\P_{\hat{V}}(\Phi) W^{\ddagger})_{\bar{\I}}\nonumber\\
\stackrel{(\rm{k})}{=}& (\P_{\bar{U}^\perp}\P_{\hat{V}}(I+\sum_{i=1}^\infty (\P_{\hat{V}}\P_{W_{\bar{\I}}}\P_{\hat{V}})^i)\P_{\hat{V}}(\Phi))(I-\nonumber\\ & W^{\ddagger}_{\bar{\I}}(W_{\bar{\I}}^T W^{\ddagger}_{\bar{\I}})^{-1}W_{\bar{\I}}^T)W^{\ddagger}_{\bar{\I}} =0 \nonumber
\end{align}
where (k) holds since  $\P_{W_{\bar{\I}}}$, $\P_{\hat{V}}$, and $\P_{\hat{V}}\P_{W_{\bar{\I}}}\P_{\hat{V}}$ are all given by right matrix multiplication, while $\P_{\bar{U}^\perp}$ is given by left matrix multiplication. Then
\begin{align}
(QW^{\ddagger})_{\bar{\I}}&=(\bar{U}\hat{V}W^{\ddagger}+\Phi W^{\ddagger} -\P_{\bar{U}}(\Phi)W^{\ddagger})_{\bar{\I}}-(\Delta_2W^{\ddagger})_{\bar{\I}}\nonumber \\
&= \bar{U}\hat{V} W^{\ddagger}_{\bar{\I}}+\Phi W^{\ddagger}_{\bar{\I}}-\P_{\bar{U}}(\Phi)W^{\ddagger}_{\bar{\I}}-0\nonumber\\
&\stackrel{(\rm{l})}{=} \lambda \bar{U}\bar{U}^\dag \hat{H}+\lambda \hat{H}- \lambda \bar{U}\bar{U}^\dag \hat{H}\nonumber\\
&=\lambda \hat{H} \in \lambda \G(C'),
\end{align}
where (l) follows from Lemma \ref{lem:convex} and the definition of $\Phi$ in (\ref{eqn:phi}). Then (c) of (\ref{eqn:cond}) holds.

Step 4: verification of (d) of (\ref{eqn:cond}). First consider
\begin{align}
&\|(\Delta_2W^{\ddagger})_{\bar{\I}^c}\|_{\infty,2}\nonumber\\
=& \|\P_{\bar{U}^\perp}\P_{\hat{V}}(I+\sum_{i=1}^\infty (\P_{\hat{V}}\P_{W_{\bar{\I}}}\P_{\hat{V}})^i)\nonumber \\
& \cdot \Phi\hat{V}\hat{V}^\dag(I- W^{\ddagger}_{\bar{\I}}(W_{\bar{\I}}^T W^{\ddagger}_{\bar{\I}})^{-1}W_{\bar{\I}}^T  )W^{\ddagger}_{\bar{\I}^c}\|_{\infty,2}\nonumber\\
=& \|\P_{\bar{U}^\perp}\P_{\hat{V}}(I+\sum_{i=1}^\infty (\P_{\hat{V}}\P_{W_{\bar{\I}}}\P_{\hat{V}})^i)\Phi(\hat{V}\hat{V}^\dag W^{\ddagger}_{\bar{\I}^c}- \nonumber\\& \hat{V}\hat{V}^\dag  W^{\ddagger}_{\bar{\I}}(W_{\bar{\I}}^T W^{\ddagger}_{\bar{\I}})^{-1}W_{\bar{\I}}^T  )W^{\ddagger}_{\bar{\I}^c}\|_{\infty,2}\nonumber\\
\leq & \|I+\sum_{i=1}^\infty (\P_{\hat{V}}\P_{W_{\bar{\I}}}\P_{\hat{V}})^i\|\|\Phi\|\Big(\|\hat{V}\|  \|\hat{V} W^{\ddagger}_{\bar{\I}^c}\|_{\infty,2}\nonumber\\
&+ \| \hat{V}\|\hat{V}^\dag W^{\ddagger}_{\bar{\I}}\|\|(W_{\bar{\I}}^\dag W_{\bar{\I}})^{-1}\|\|W_{\bar{\I}}^T  W^{\ddagger}_{\bar{\I}^c}\|_{\infty,2}\Big)\nonumber\\
\leq & \frac{\|\Phi\|(\epsilon + \lambda \sqrt{k}\sigma_{k} \sqrt{k}\mu)}{1-\psi}\leq \frac{\epsilon + \lambda k \sigma_{k} \mu}{2-\psi} \leq \frac{\epsilon + \lambda k\sigma_{k}\mu}{2-\tilde{\psi}}, \nonumber
\end{align}
where the second to last inequality follows from (e) to (j) in step 2.
\begin{align}
&\|(QW^{\ddagger})_{\bar{\I}^c}\|_{\infty,2}\nonumber\\
=&\|(\bar{U}\hat{V}W^{\ddagger}+\Phi W^{\ddagger} -\P_{\bar{U}}(\Phi)W^{\ddagger}-\Delta_2W^{\ddagger})_{\bar{\I}^c}\|_{\infty,2} \nonumber\\
=& \|\bar{U}\hat{V} W^{\ddagger}_{\bar{\I}^c}+\P_{\bar{U}^\perp}(\Phi)W^{\ddagger}_{\bar{\I}^c}-(\Delta_2W^{\ddagger})_{\bar{\I}^c}\|_{\infty,2}\nonumber\\
\leq &\|\bar{U}\hat{V} W^{\ddagger}_{\bar{\I}^c}\|_{\infty,2}+ \| (I-\bar{U}\bar{U})^\dag \lambda \hat{H} (W_{\bar{\I}}^T W^{\ddagger}_{\bar{\I}})^{-1}W_{\bar{\I}}^T W^{\ddagger}_{\bar{\I}^c} \|_{\infty,2}\nonumber\\
& +\|(\Delta_2W^{\ddagger})_{\bar{\I}^c}\|_{\infty,2}\nonumber\\
\leq &\|\bar{U}\| \|\hat{V} W^{\ddagger}_{\bar{\I}^c}\|_{\infty,2} + \nonumber\\
& \|(I- \bar{U}\bar{U})^\dag \|\|\lambda \hat{H}\|\|(W_{\bar{\I}}^\dag W_{\bar{\I}})^{-1}\|\| W_{\bar{\I}}^T W^{\ddagger}_{\bar{\I}^c} \|_{\infty,2} +\nonumber\\
& \|(\Delta_2W^{\ddagger})_{\bar{\I}^c}\|_{\infty,2}\nonumber\\
\leq & \epsilon+ \lambda \sqrt{k} \sigma_k \sqrt{k}\mu +\nonumber    \\
& \frac{\lambda \sigma_{k} \sqrt{k+(k^2-k)\mu} (\epsilon + \sigma_{k} \mu\sqrt{k+(k^2-k)\mu})}{1-\psi}\nonumber\\
\leq & (1+\frac{1}{2-\tilde{\psi}})(\epsilon+ \lambda k \sigma_k \mu)  , \nonumber\\
\leq & (1+\frac{1}{2-\tilde{\psi}})(\epsilon+ \lambda \tilde{k} \sigma_{\tilde{k}} \mu), \label{eqn:stp4}\\
\leq & \lambda\nonumber,
\end{align}
where the last inequality follows from $\lambda \geq \lambda_{\min,\tilde{k}}$. Then  (d) of (\ref{eqn:cond}) holds.
\end{proof}

\subsection{Proof of Lemma \ref{lem:cond1}}
\begin{proof}
We define 
\begin{equation*}
\tilde{C} = \bar{C}+\P_{\bar{\I}}\P_{\bar{U}}(C^*-\bar{C}) \text{ and } \tilde{L} = \bar{L}-\P_{\bar{\I}}\P_{\bar{U}}(C^*-\bar{C})W^T.
\end{equation*}
Note that $\P_{\bar{U}}(\tilde{L})=\tilde{L}$, $\P_{\bar{\I}}(\tilde{C})=\tilde{C}$ and $\bar{L}+\bar{C}W^T=\tilde{L}+\tilde{C}W^T$. We further define $N_L=L^*-\bar{L}$, $N_C=C^*-\bar{C}$, and $N^+_C = C^*-\tilde{C}$. Note that $\P_{\bar{\I}^c}(N_C^+)=\P_{\bar{\I}^c}(N_C)$ from the definition of $N^+_C$.  Let $E=N_L+N_CW^T$. We have
\begin{align}\label{eqn:E}
&\|E\|_F = \|L^*+C^*W^T-(\bar{L}+\bar{C}W^T)\|_F \nonumber\\
\le & \|L^*+C^*W^T-M\|_F+\|N\|_F \le 2\eta,
\end{align}
where the last inequality holds since ($L^*$, $C^*$) is the solution to (\ref{eqn:opt}) and $\|N\|_F \le \eta$. Let $G$ be such that $\|G\|=1$, $\left\langle G, \P_{T^{*{\perp}}}(\Delta W^T) \right\rangle = \|\P_{T^{*{\perp}}}(\Delta W^T)\|_{*}$ and $\P_{T^*}(G)=0$. Let $F$ be such that $F_i=\Delta_i/\|\Delta_i\|_2$ if $i \in \bar{\I}$ and $\Delta_i \neq 0$, and $F_i=0$ otherwise. Then
\begin{equation}\label{eqn:worse1}
\begin{aligned}
&\|\bar{L}\|_*+\lambda\|\bar{C}\|_{1,2} \stackrel{(\rm{m})}{\ge} \|L^*\|_*+\lambda\|C^*\|_{1,2} \\
\stackrel{(\rm{n})}{\ge} & \|\bar{L}\|_*+\lambda\|\bar{C}\|_{1,2}+\langle \P_{\bar{T}}(Q)+G,N_L\rangle+\lambda\langle \P_{\bar{\I}}(QW^{\ddagger})/\lambda \\
&+F,N_C\rangle \\
= & \|\bar{L}\|_*+\lambda\|\bar{C}\|_{1,2}+\|\P_{\bar{T}^{\perp}}(N_L)\|_*+\langle \P_{\bar{T}}(Q),N_L \rangle\\   
&+\lambda\|\P_{\bar{\I}^c}(N_C)\|_{1,2}+\langle \P_{\bar{\I}}(QW^{\ddagger}),N_C \rangle \\
= & \|\bar{L}\|_*+\lambda\|\bar{C}\|_{1,2}+\|\P_{\bar{T}^{\perp}}(N_L)\|_*+\lambda\|\P_{\bar{\I}^c}(N_C)\|_{1,2} \\
&-\langle \P_{\bar{T}^{\perp}}(Q),N_L \rangle-\langle \P_{\bar{\I}^c}(QW^{\ddagger}),N_C \rangle+ \langle Q,N_L+N_CW^T \rangle \\
\ge &  \|\bar{L}\|_*+\lambda\|\bar{C}\|_{1,2}+(1-\|\P_{\bar{T}^{\perp}}(Q)\|)\|\P_{\bar{T}^{\perp}}(N_L)\|_* \\
&+(\lambda-\|\P_{\bar{\I}^c}(QW^{\ddagger})\|_{\infty,2})\|\P_{\bar{\I}^c}(N_C)\|_{1,2}+\langle Q,E \rangle \\
\ge &  \|\bar{L}\|_*+\lambda\|\bar{C}\|_{1,2}+\frac{1}{2}\|\P_{\bar{T}^{\perp}}(N_L)\|_*+\frac{\lambda}{2}\|\P_{\bar{\I}^c}(N_C)\|_{1,2}\\
&-2\eta\|Q\|_F,
\end{aligned}
\end{equation}
where (m) holds because of the optimality of ($L^*$, $C^*$) and (n) holds because of the convexity of the objective function of (\ref{eqn:opt}). We can see that the last inequality of (\ref{eqn:worse1}) follows from (b) and (d) of (\ref{eqn:cond1}). Then we have
\begin{equation}\label{eqn:inequa1}
\frac{1}{2}\|\P_{\bar{T}^{\perp}}(N_L)\|_*+\frac{\lambda}{2}\|\P_{\bar{\I}^c}(N_C)\|_{1,2}-2\eta\|Q\|_F \le 0.
\end{equation}
Note that 
\begin{align}
& \|Q\|_F = \|\P_{\bar{T}}(Q)+\P_{\bar{T}^{\perp}}(Q)\|_F \nonumber \\ 
= & \sqrt{\|\P_{\bar{T}}(Q)\|^2_F+\|\P_{\bar{T}^{\perp}}(Q)\|^2_F} \nonumber\\
= & \sqrt{\|\bar{U} \bar{V}^{\dagger}\|^2_F+\|\P_{\bar{T}^{\perp}}(Q)\|^2_F} \stackrel{(\rm{o})}{\leq} \frac{1}{2}\sqrt{\min(t,p)+3r}, \label{eqn:QF}
\end{align}
where the last equality follows from (a) of (\ref{eqn:cond1}). 
The inequality (o) holds from $\|\bar{U} \bar{V}^{\dagger}\|_F = \sqrt{\rm{trace}(\bar{V}\bar{U}^{\dagger}\bar{U}\bar{V}^{\dagger})}=\sqrt{r}$,
and
\begin{equation}\nonumber
\|\P_{\bar{T}^{\perp}}(Q)\|_F \le \textrm{rank}(\P_{\bar{T}^{\perp}}(Q))\cdot\|\P_{\bar{T}^{\perp}}(Q)\|\leq \frac{\sqrt{\min(t,p)-r}}{2}.
\end{equation}
Since $\theta  = \min(t,p)$, combining (\ref{eqn:inequa1}) and (\ref{eqn:QF}), we have 
\begin{equation}\label{eqn:PNL}
\|\P_{\bar{T}^{\perp}}(N_L)\|_F \le \|\P_{\bar{T}^{\perp}}(N_L)\|_* \le 2\eta\sqrt{\theta+3r},
\end{equation}
\begin{equation}\label{eqn:PNC}
\|\P_{\bar{\I}^c}(N_C)\|_F \le \|\P_{\bar{\I}^c}(N_C)\|_{1,2} \le \frac{2}{\lambda}\eta\sqrt{\theta+3r}.
\end{equation}
	
From the definition of $\P_{W_{\bar{\I}}}$ in (\ref{eqn:PWI}), one can check that
\begin{equation}\label{eqn:PWIWI}
 \P_{W_{\bar{\I}}}(\P_{\bar{\I}}(W)^T)=  \P_{\bar{\I}}(W)^T.
\end{equation}
Then we have
\begin{equation}\label{eqmain}
\begin{aligned}
& \P_{\bar{\I}}(N_C^+)W^T =  \P_{\bar{\I}}(N_C^+)\P_{\bar{\I}}(W)^T\\
= & \P_{\bar{\I}}(N_C^+)\P_{W_{\bar{\I}}}(\P_{\bar{\I}}(W)^T)=\P_{\bar{\I}}(N_C^+)\P_{W_{\bar{\I}}}(W^T)\\
= & \P_{W_{\bar{\I}}}(N_C^+W^T-\P_{\bar{\I}^c}(N_C^+)W^T)\\ 
\stackrel{(\rm{p})}{=} & \P_{W_{\bar{\I}}}(E-\P_{\bar{T}^{\perp}}(N_L)-\P_{\bar{T}}(N_L)-\P_{\bar{\I}}\P_{\bar{U}}(N_C)W^T \\
& -\P_{\bar{\I}^c}(N_C^+)W^T)\\ 
\stackrel{(\rm{q})}{=} & \P_{W_{\bar{\I}}}(E-\P_{\bar{T}^{\perp}}(N_L)-\P_{\bar{T}}(E)+\P_{\bar{T}}(N_CW^T)\\
& -\P_{\bar{\I}}\P_{\bar{U}}(N_C)W^T-\P_{\bar{\I}^c}(N_C^+)W^T) \\
= & \P_{W_{\bar{\I}}}(\P_{\bar{T}^{\perp}}(E)-\P_{\bar{T}^{\perp}}(N_L)-\P_{\bar{\I}^c}(N_C)W^T\\
& +\P_{\bar{T}}(\P_{\bar{\I}}(N_C)W^T)+\P_{\bar{T}}(\P_{\bar{\I}^c}(N_C)W^T)\\ &-\P_{\bar{\I}}\P_{\bar{U}}(N_C)W^T)\\
\stackrel{(\rm{r})}{=} &
\P_{W_{\bar{\I}}}(\P_{\bar{T}^{\perp}}(E)-\P_{\bar{T}^{\perp}}(N_L)-\P_{\bar{\I}^c}(N_C)W^T+\\
& \P_{\bar{T}}(\P_{\bar{\I}^c}(N_C)W^T)+\P_{\bar{U}}(\P_{\bar{\I}}(N_C)W^T)+\\
&\P_{\bar{V}}(\P_{\bar{\I}}(N_C)\P_{\bar{\I}}(W)^T)-\P_{\bar{U}}\P_{\bar{V}}(\P_{\bar{\I}}(N_C)W^T)\\
&-\P_{\bar{\I}}\P_{\bar{U}}(N_C)W^T)\\ 
\stackrel{(\rm{s})}{=} &
\P_{W_{\bar{\I}}}(\P_{\bar{T}^{\perp}}(E)-\P_{\bar{T}^{\perp}}(N_L)-\P_{\bar{\I}^c}(N_C)W^T+\\
& \P_{\bar{T}}(\P_{\bar{\I}^c}(N_C)W^T)+\P_{\bar{V}}(N_C\P_{\bar{\I}}(W)^T)-\P_{\bar{V}}(\P_{\bar{\I}^c}(N_C)\\
& \P_{\bar{\I}}(W)^T) - \P_{\bar{U}}\P_{\bar{V}}(\P_{\bar{\I}}(N_C)W^T))\\ 
\stackrel{(\rm{t})}{=} & \P_{W_{\bar{\I}}}(\P_{\bar{T}^{\perp}}(E)-\P_{\bar{T}^{\perp}}(N_L)-\P_{\bar{\I}^c}(N_C)W^T+\\
& \P_{\bar{T}}(\P_{\bar{\I}^c}(N_C)W^T)+\P_{\bar{V}}(N_C^+\P_{\bar{\I}}(W)^T)). 
\end{aligned}
\end{equation}
where (p) and (q) follow from the definition $E=N_L+N_CW^T$ and $N^+_C = N_C - \P_{\bar{\I}}\P_{\bar{U}}(N_C)$. (r) follows the definition of $\P_{\bar{T}}$. (s) holds because $\P_{\bar{U}}(\P_{\bar{\I}}(N_C)W^T)=\P_{\bar{\I}}\P_{\bar{U}}(N_C)W^T$. (t) holds because of the equality (\ref{eqn:sec2}) shown as follows:
\begin{equation}\label{eqn:sec2}
\begin{aligned}
& \P_{\bar{V}}(N_C\P_{\bar{\I}}(W)^T)-\P_{\bar{U}}\P_{\bar{V}}(\P_{\bar{\I}}(N_C)\P_{\bar{\I}}(W)^T) \\
= & \P_{\bar{V}}(N_C\P_{\bar{\I}}(W)^T-\P_{\bar{\I}}\P_{\bar{U}}(N_C)\P_{\bar{\I}}(W)^T) \\
= & \P_{\bar{V}}(N_C^+\P_{\bar{\I}}(W)^T)
\end{aligned}
\end{equation}
Note that
\begin{align}
& \|\P_{W_{\bar{\I}}}\P_{\bar{V}}(N_C^+\P_{\bar{\I}}(W)^T)\|_F\nonumber\\
= & \|\P_{W_{\bar{\I}}}\P_{\bar{V}}(N_C^+\P_{W_{\bar{\I}}}\P_{\bar{\I}}(W)^T)\|_F\nonumber\\
= & \|N_C^+\P_{\bar{\I}}(W)^TW^{\ddagger}_{\bar{\I}}(W_{\bar{\I}}^{T}W^{\ddagger}_{\bar{\I}})^{-1}W_{\bar{\I}}^{T}\bar{V}\bar{V}^{\dagger}W^{\ddagger}_{\bar{\I}}(W_{\bar{\I}}^{T}W^{\ddagger}_{\bar{\I}})^{-1}W_{\bar{\I}}^{T}\|_F\nonumber\\
\stackrel{(\rm{u})}{\leq} & \|N_C^+\P_{\bar{\I}}(W)^T\|_F\| \bar{V}^{\dagger}W^{\ddagger}_{\bar{\I}}(W_{\bar{\I}}^{T}W^{\ddagger}_{\bar{\I}})^{-1}W_{\bar{\I}}^{T}\bar{V}\|\nonumber\\
=   & \|N_C^+\P_{\bar{\I}}(W)^T\|_F\|\bar{V}\bar{V}^{\dagger}W^{\ddagger}_{\bar{\I}}(W_{\bar{\I}}^{T}W^{\ddagger}_{\bar{\I}})^{-1}W_{\bar{\I}}^{T}\bar{V}\bar{V}^{\dagger}\|\nonumber\\
= & \psi\|\P_{\bar{\I}}(N_C^+)W^T\|_F \leq \tilde{\psi}\|\P_{\bar{\I}}(N_C^+)W^T\|_F, \nonumber
\end{align}
where the first equality holds from (\ref{eqn:PWIWI}), and (u) holds because $\|AB\|_F \leq \|A\|_F\|B\|$ and $\|A^\dag A\|=\|AA^\dag\|$ for matrices $A$ and $B$. From (\ref{eqmain}), we have
\begin{align}
& \|\P_{\bar{\I}}(N_C^+)W^T\|_F \nonumber \\
\le & (\|\P_{\bar{T}^{\perp}}(E)\|_F+\|\P_{\bar{T}^{\perp}}(N_L)\|_F+\|\P_{\bar{T}^{\perp}}(\P_{\bar{\I}^c}(N_C)W^T)\|_F) \nonumber\\
& \|W^{\ddagger}_{\bar{\I}}(W_{\bar{\I}}^{T}W^{\ddagger}_{\bar{\I}})^{-1}W_{\bar{\I}}^{T}\|+\tilde{\psi}\|\P_{\bar{\I}}(N_C^+)W^T\|_F \nonumber\\
\le & \|E\|_F+\|\P_{\bar{T}^{\perp}}(N_L)\|_F+\|\P_{\bar{\I}^c}(N_C)\|_F\|W\| +\nonumber\\
& \tilde{\psi}\|\P_{\bar{\I}}(N_C^+)W^T\|_F, \label{eqn:PINCW}
\end{align} 
where the last inequality uses the property that 
$\|W^{\ddagger}_{\bar{\I}}(W_{\bar{\I}}^{T}W^{\ddagger}_{\bar{\I}})^{-1}W_{\bar{\I}}^{T}\|=1$. From similar arguments as in (\ref{eqn:WI}),  we have $\|W\| \le \sqrt{1+(n-1)\mu}$. Then combining (\ref{eqn:E}), (\ref{eqn:PNL}), (\ref{eqn:PNC}), and (\ref{eqn:PINCW}), we obtain
\begin{align}\label{eqn:PNCW} 
& \|\P_{\bar{\I}}(N_C^+)W^T\|_F \le (1+\frac{\lambda+\sqrt{1+(n-1)\mu}}{\lambda}\sqrt{\theta+3r})\frac{2\eta}{1-\tilde{\psi}}.
\end{align}
	
Furthermore,	
\begin{equation}\label{eqn:ineq1}\nonumber
\begin{aligned}
& \|\P_{\bar{\I}}(N^+_C)\|_F = \|\P_{\bar{\I}}(N^+_C)W^TW^{\ddagger}_{\bar{\I}}(W^T_{\bar{\I}}W^{\ddagger}_{\bar{\I}})^{-1}\|_F\\ 
\le & \|\P_{\bar{\I}}(N^+_C)W^T\|_F\|W^{\ddagger}_{\bar{\I}}\|\|(W^T_{\bar{\I}}W^{\ddagger}_{\bar{\I}})^{-1}\|\\
\le & (1+\frac{\lambda+\sqrt{1+(n-1)\mu}}{\lambda}\sqrt{\theta+3r})\frac{2\eta\sigma_k\sqrt{1+(k-1)\mu}}{1-\tilde{\psi}},
\end{aligned}
\end{equation}
where the last inequality follows from 	(\ref{eqn:PNCW}), (\ref{eqn:WI}), and (\ref{eqn:sigmak}). 
We also have
\begin{equation}\label{eqn:ineq2}\nonumber
\begin{aligned}
& \|N^+_CW^T\|_F = \|\P_{\bar{\I}^c}(N_C)W^T+\P_{\bar{\I}}(N^+_C)W^T\|_F\\
\le & \|\P_{\bar{\I}^c}(N_C)W^T\|_F+\|\P_{\bar{\I}}(N^+_C)W^T\|_F\\
\le & \|\P_{\bar{\I}^c}(N_C)\|_F\|W\|+\|\P_{\bar{\I}}(N^+_C)W^T\|_F\\
\le & (1+\frac{\lambda+(2-\tilde{\psi})\sqrt{1+(n-1)\mu}}{\lambda}\sqrt{\theta+3r})\frac{2\eta}{1-\tilde{\psi}}.
\end{aligned}
\end{equation}
	
Finally, we have 	
\begin{align}
& \|C^*-\tilde{C}\|_F = \|\P_{\bar{\I}^c}(N_C)+\P_{\bar{\I}}(N^+_C)\|_F\nonumber\\
\le & \|\P_{\bar{\I}^c}(N_C)\|_F+\|\P_{\bar{\I}}(N^+_C)\|_F\nonumber\\
\le & (1+(\frac{\lambda+\sqrt{1+(n-1)\mu}}{\lambda}+\frac{1-\tilde{\psi}}{\lambda\sigma_k\sqrt{1+(k-1)\mu}})\sqrt{\theta+3r})\nonumber\\
& \frac{2\eta\sigma_k\sqrt{1+(k-1)\mu}}{1-\tilde{\psi}},\nonumber
\end{align}
and
\begin{align}
& \|L^*-\tilde{L}\|_F
=  \|L^*-\bar{L}+\tilde{C}W^T-\bar{C}W^T\|_F \nonumber\\
= & \|L^*-\bar{L}+C^*W^T-\bar{C}W^T+\tilde{C}W^T-C^*W^T\|_F\nonumber\\
= & \|E-N^+_CW^T\|_F\nonumber 
\le   \|E\|_F+\|N^+_CW^T\|_F\nonumber\\
\le & (2-\tilde{\psi}+\frac{\lambda+(2-\tilde{\psi})\sqrt{1+(n-1)\mu}}{\lambda}\sqrt{\theta+3r})\frac{2\eta}{1-\tilde{\psi}}.\nonumber
\end{align}
\end{proof}

\subsection{Proof of Lemma \ref{lem:certificate1}}

\begin{proof}
Since equalities (a) and (c) of (\ref{eqn:cond1}) are the same as those in (\ref{eqn:cond}) and the construction of $Q$ remains the same, then (a) and (c) of (\ref{eqn:cond1}) have been proved in step 1 and 3 of the proof of Lemma \ref{lem:certificate}. We only need to show that (b) and (d) hold when $\lambda$ belongs to [$\lambda'_{\text{min},\tilde{k}}$, $\lambda'_{\text{max},\tilde{k}}$]. From (\ref{eqn:stp2}), that is proved in the proof of Lemma \ref{lem:certificate}, and $\lambda \leq \lambda'_{\text{max},\tilde{k}}$, we have
\begin{equation}\nonumber
\|\P_{T'^{\perp}}(Q)\| \leq \frac{2-\tilde{\psi}}{1-\tilde{\psi}}\lambda  \sigma_{\tilde{k}} \sqrt{\tilde{k}+(\tilde{k}^2-\tilde{k})\mu} \leq \frac{1}{2}.
\end{equation}
From (\ref{eqn:stp4}) and $\lambda \geq \lambda'_{\text{min},\tilde{k}}$, we have
\begin{equation}\nonumber
\|(QW^{\ddagger})_{\bar{\I}^c}\|_{\infty,2} \leq 
(1+\frac{1}{2-\tilde{\psi}})(\epsilon+ \lambda \tilde{k} \sigma_{\tilde{k}} \mu) \leq \frac{\lambda}{2}.
\end{equation}	
\end{proof}

\begin{IEEEbiographynophoto}
{Pengzhi Gao} (S'14) received the B.E. degree   from Xidian University, Xian, China, in 2011 and the M.S. degree in electrical engineering from University of Pennsylvania, Philadelphia, PA, in 2013. 

He is  pursuing the Ph.D. degree in electrical engineering at Rensselaer Polytechnic Institute, Troy, NY.
His research interests include signal processing, compressive sensing, low-rank matrix recovery, and power networks.
\end{IEEEbiographynophoto}

\begin{IEEEbiographynophoto}
{Meng Wang} (M'12) received   the Ph.D. degree from Cornell University, Ithaca, NY, USA, in  2012. 

She is    an Assistant Professor in   the department of  Electrical, Computer, and Systems Engineering at Rensselaer Polytechnic Institute. Her research interests include high dimensional data analysis and their applications in power systems monitoring and  network inference. 
\end{IEEEbiographynophoto}

\begin{IEEEbiographynophoto}
{Joe H. Chow} (F'92) received the M.S. and Ph.D. degrees from the University of Illinois, Urbana-Champaign, Urbana, IL, USA.

After working in the General Electric power system business in Schenectady, NY, USA, he joined Rensselaer Polytechnic Institute, Troy, NY, USA, in 1987, where he is a Professor of Electrical, Computer, and Systems Engineering. His research interests include multivariable control, power system dynamics and
control, FACTS controllers, and synchronized phasor data.
\end{IEEEbiographynophoto}

 \begin{IEEEbiographynophoto}
 {Scott G. Ghiocel} (S'08) received the Ph.D. degree in electrical engineering from Rensselaer Polytechnic Institute, Troy, NY, USA, in 2013.
  
 He is a technical consultant at Exponent. His research interests include synchrophasor measurements, voltage stability, and power system dynamics.
 \end{IEEEbiographynophoto}
 
\begin{IEEEbiographynophoto}
{Bruce Fardanesh}(F'13) received his  Doctor of Engineering degree in Electrical Engineering from Cleveland State University in 1985.  
  
He joined New York Power Authority in 1991, where he is the Chief Electrical Engineer. His research areas of interest are power system analysis, modeling, dynamics, operation, and control. 
\end{IEEEbiographynophoto}
 
\begin{IEEEbiographynophoto}
{George Stefopoulos} (M'08) received his Ph.D. degree in Electrical Engineering from the Georgia Institute of Technology in   2009. 
 
He is a Research and Technology Development Engineer with the New York Power Authority. His research interests include  power system state estimation, synchrophasor technology applications, and modeling and simulation of power systems. 
\end{IEEEbiographynophoto}  
  
\begin{IEEEbiographynophoto}
{Michael Razanousky} received the B.S. degree in electric power engineering from Rensselaer Polytechnic Institute, Troy, NY, USA, in 1989 and the M.S. degree from the University of Albany, Albany, NY, USA, in 2005.
   
He is presently a project manager at the New York State Energy and Research Development Authority (NYSERDA).
\end{IEEEbiographynophoto}

\end{document}